\PassOptionsToPackage{unicode}{hyperref}
\PassOptionsToPackage{hyphens}{url}
\PassOptionsToPackage{dvipsnames,svgnames,x11names}{xcolor}
\documentclass[12pt]{article}

\usepackage{amsmath,amsfonts,bm}

\def\1{\bm{1}}

\DeclareMathAlphabet{\mathsfit}{\encodingdefault}{\sfdefault}{m}{sl}
\SetMathAlphabet{\mathsfit}{bold}{\encodingdefault}{\sfdefault}{bx}{n}

\newcommand{\E}{\mathbb{E}}

\usepackage{algorithm,algorithmicx,algpseudocode}
\usepackage{bbm}
\usepackage{amsmath,amssymb}
\usepackage{makecell}
\usepackage{bibunits}
\defaultbibliographystyle{asa}
\defaultbibliography{references}
\makeatletter
\newcommand{\setbibunitprefix}[1]{%
  \def\@extra@binfo{#1}%
  \def\@extra@b@citeb{#1}}
\makeatother

\usepackage{diagbox}
\usepackage{caption}
\usepackage{amsthm}
\newtheorem{theorem}{Theorem}[section]
\newtheorem{lemma}{Lemma}[section]

\newtheorem{assumption}{Assumption}
\newtheorem{proposition}{Proposition}[section]
\newtheorem{remark}{Remark}[section]
\newtheorem{corollary}{Corollary}[section]

\theoremstyle{definition}
\newtheorem{example}{Example}
\theoremstyle{plain}

\usepackage{soul}

\usepackage{iftex}
\ifPDFTeX
  \usepackage[T1]{fontenc}
  \usepackage[utf8]{inputenc}
  \usepackage{textcomp} 
\else 
  \usepackage{unicode-math}
  \defaultfontfeatures{Scale=MatchLowercase}
  \defaultfontfeatures[\rmfamily]{Ligatures=TeX,Scale=1}
\fi
\usepackage{lmodern}
\ifPDFTeX\else  
\fi
\IfFileExists{upquote.sty}{\usepackage{upquote}}{}
\IfFileExists{microtype.sty}{
  \usepackage[]{microtype}
  \UseMicrotypeSet[protrusion]{basicmath} 
}{}
\makeatletter
\@ifundefined{KOMAClassName}{
  \IfFileExists{parskip.sty}{%
    \usepackage{parskip}
  }{
    \setlength{\parindent}{0pt}
    \setlength{\parskip}{6pt plus 2pt minus 1pt}}
}{
  \KOMAoptions{parskip=half}}
\makeatother
\usepackage{xcolor}
\makeatletter
\ifx\textbf\undefined\else
  \let\oldparagraph\textbf
  \renewcommand{\textbf}{
    \@ifstar
      \xxxParagraphStar
      \xxxParagraphNoStar
  }
  \newcommand{\xxxParagraphStar}[1]{\oldparagraph*{#1}\mbox{}}
  \newcommand{\xxxParagraphNoStar}[1]{\oldparagraph{#1}\mbox{}}
\fi
\ifx\subparagraph\undefined\else
  \let\oldsubparagraph\subparagraph
  \renewcommand{\subparagraph}{
    \@ifstar
      \xxxSubParagraphStar
      \xxxSubParagraphNoStar
  }
  \newcommand{\xxxSubParagraphStar}[1]{\oldsubparagraph*{#1}\mbox{}}
  \newcommand{\xxxSubParagraphNoStar}[1]{\oldsubparagraph{#1}\mbox{}}
\fi
\makeatother

\usepackage{longtable,booktabs,array}
\usepackage{calc} 
\usepackage{etoolbox}
\makeatletter
\patchcmd\longtable{\par}{\if@noskipsec\mbox{}\fi\par}{}{}
\makeatother
\IfFileExists{footnotehyper.sty}{\usepackage{footnotehyper}}{\usepackage{footnote}}
\makesavenoteenv{longtable}
\usepackage{graphicx}
\usepackage{pdflscape}
\makeatletter
\def\maxwidth{\ifdim\Gin@nat@width>\linewidth\linewidth\else\Gin@nat@width\fi}
\def\maxheight{\ifdim\Gin@nat@height>\textheight\textheight\else\Gin@nat@height\fi}
\makeatother
\setkeys{Gin}{width=\maxwidth,height=\maxheight,keepaspectratio}
\makeatletter
\def\fps@figure{htbp}
\makeatother

\makeatletter
\@ifpackageloaded{caption}{}{\usepackage{caption}}
\AtBeginDocument{%
\ifdefined\contentsname
  \renewcommand*\contentsname{Table of contents}
\else
  \newcommand\contentsname{Table of contents}
\fi
\ifdefined\listfigurename
  \renewcommand*\listfigurename{List of Figures}
\else
  \newcommand\listfigurename{List of Figures}
\fi
\ifdefined\listtablename
  \renewcommand*\listtablename{List of Tables}
\else
  \newcommand\listtablename{List of Tables}
\fi
\ifdefined\figurename
  \renewcommand*\figurename{Figure}
\else
  \newcommand\figurename{Figure}
\fi
\ifdefined\tablename
  \renewcommand*\tablename{Table}
\else
  \newcommand\tablename{Table}
\fi
}
\@ifpackageloaded{float}{}{\usepackage{float}}
\floatstyle{ruled}
\@ifundefined{c@chapter}{\newfloat{codelisting}{h}{lop}}{\newfloat{codelisting}{h}{lop}[chapter]}
\floatname{codelisting}{Listing}

\makeatother
\makeatletter
\@ifpackageloaded{caption}{}{\usepackage{caption}}
\@ifpackageloaded{subcaption}{}{\usepackage{subcaption}}
\makeatother

\ifLuaTeX
  \usepackage{selnolig}  
\fi
\usepackage[]{natbib}
\bibpunct{(}{)}{;}{a}{}{,}
\usepackage{bookmark}
\usepackage{setspace}
\usepackage{booktabs}
\usepackage{multirow}
\usepackage[normalem]{ulem}
\usepackage{stmaryrd}

\IfFileExists{xurl.sty}{\usepackage{xurl}}{} 
\hypersetup{
  pdftitle={},
  pdfauthor={},
  pdfkeywords={},
  colorlinks=true,
  linkcolor={blue},
  filecolor={Maroon},
  citecolor={Blue},
  urlcolor={Blue},
  pdfcreator={LaTeX via pandoc}}

\newcommand{\anon}{1}

\definecolor{HJblue}{RGB}{0,85,170}

\begin{document}

\def\spacingset#1{\renewcommand{\baselinestretch}%
{#1}\small\normalsize} \spacingset{1}

\makeatletter
\newcommand{\printfnsymbol}[1]{%
  \textsuperscript{\@fnsymbol{#1}}%
}
\makeatother

\if1\anon
{
  \title{\LARGE\bf Interval Localization for Multiple Change-points with Error Rate Control}
    \author{Zijian Wei$^1$\thanks{All authors are listed in order of seniority.}, Yajie Bao$^2$, Haojie Ren$^1$, and Nan Chen$^3$\\
$^1$ {\normalsize School of Mathematical Sciences, Shanghai Jiao Tong University}\\
$^2$ {\normalsize School of Management, Fudan University}\\
$^3$ {\normalsize Department of Industrial Systems Engineering and Management,}\\
{\normalsize  National University of Singapore}
}
  \maketitle
} \fi

\if0\anon
{
  \bigskip
  \bigskip
  \bigskip
  \begin{center}
    {\LARGE\bf Interval Localization for Multiple Change-points with Error Rate Control}
\end{center}
  \medskip
} \fi

\bigskip

\begin{abstract}
    Interval localization in multiple change-point detection aims to identify intervals containing true change-points while controlling false discoveries. We formulate this task as a multiple testing problem over detector-reported intervals and develop an algorithm-agnostic and distribution-free framework that controls the interval-level FDR, the expected proportion of retained intervals containing no true change-points. The proposed permutation-based order-preserving interval selection (POIS) uses order-preserving sample splitting to separate detection from permutation testing. To improve power, POIS+ reuses the detection-side evidence and introduces detection-conditional $e$-values as unnormalized weights for permutation $p$-values. Both procedures achieve finite-sample FDR control. Under the stated conditions, POIS attains asymptotically full power while POIS+ has asymptotically no lower power than POIS. Simulation results show that POIS maintains high power and POIS+ further improves power with empirical FDR below the nominal level for both procedures. Two real data applications demonstrate the practical effectiveness of the proposed methods.
\end{abstract}

\noindent%
{\it Keywords:} Change-point detection; E-values; False discovery rate; Interval-level inference; Post-detection inference; Sample splitting.
\vfill

\newpage
\spacingset{1.7} 


\section{Introduction}

Multiple change-point analysis has attracted sustained attention in statistics and related fields \citep{truong2020selective}. Given an ordered data sequence, the classical objective is to estimate the number and locations of the underlying distributional changes. A broad range of methods to detect and localize the unknown change-points has been developed \citep{fryzlewicz2014wild,baranowski2019narrowest,kovacs2023seeded}.

Beyond consistent estimation, recent work has increasingly studied statistical inference problems after change-point detection. Most of this literature aims to test whether estimated {\it locations} are true change-points. For example, \citet{hyun2021post} and \citet{jewell2022testing} test whether the estimated change-points are true under the Gaussian mean-change models. Recently, \citet{jia2024tune} develop an algorithm-agnostic framework with asymptotic error control for detected change-points under specific distributional assumptions, while \citet{cui2026art} provide distribution-free and finite-sample error control for reported intervals.

However, in many applications, one often deploys change-point methods to support {\it regional} scientific claims. For example, in array-based copy-number analysis \citep{olshen2004circular}, the data are observed along ordered genomic loci, and the practically meaningful statement is often that a local genomic region contains a copy-number boundary, rather than that one particular locus is exactly the change-point. Local-window procedures \citep{niu2012screening,eichinger2018mosum} assess a candidate location by comparing observations in windows to its left and right. The union of these windows defines a candidate interval to be tested for a change. Multiscale and recursive procedures, such as wild binary segmentation \citep{fryzlewicz2014wild}, narrowest-over-threshold \citep{baranowski2019narrowest}, and seeded binary segmentation \citep{kovacs2023seeded}, search over prespecified collections of candidate intervals. Each detected location can then be accompanied by the interval on which the evidence for that candidate was assessed. However, a sensitive search may also report candidate intervals arising from random fluctuations. These developments raise a natural question: how to quantify the uncertainty of these candidate intervals and select those that contain at least one true change-point.



Let \((X_1,\ldots,X_T)\) be an independent data sequence with an unknown number $K^*$ of change points at locations $0<\tau_1^*<\cdots<\tau_{K^*}^*<T$. The observations are independent and identically distributed (i.i.d.) within each segment, with different distributions across adjacent segments.
Let $\mathcal T^*=\{\tau_1^*,\ldots,\tau_{K^*}^*\}$ denote the true change-point set. A user-chosen detector reports a data-dependent family of candidate intervals \(\{I_k:k\in\widehat{\mathcal S}\}\), where \(\widehat{\mathcal S}\) is the index set of the reported intervals. Each interval $I_k$ represents the claim that at least one change occurs inside it. We associate these claims with the following multiple testing problem:
\begin{equation}
    H_{0k}:
    \mathcal T^*\cap I_k=\varnothing
    \quad\text{vs.}\quad
    H_{1k}:
    \mathcal T^*\cap I_k\neq\varnothing,\quad k\in \widehat{\mathcal{S}}.
    \label{eq:intro_interval_null}
\end{equation}

The aim is to obtain intervals containing true changes while limiting those false intervals. Instead of controlling the probability of any false report {in existing work \citep{jia2024tune,cui2026art}}, we aim to select a subset $\widehat{\mathcal R}\subseteq\widehat{\mathcal S}$ by measuring the expected proportion of false discoveries among the selected intervals.  
We define the interval-level false discovery rate (FDR) by
\begin{equation*}
    \mathrm{FDR}
    =
    \mathbb E
    \left[
    \frac{
    \sum_{k\in\widehat{\mathcal R}}
    \mathbbm 1\{\mathcal T^*\cap I_k=\varnothing\}
    }{
    |\widehat{\mathcal R}|\vee1
    }
    \right].
\end{equation*}
To accommodate different detectors and reported intervals, we aim to develop an algorithm-agnostic post-detection framework while controlling the interval-level FDR at a target level $\alpha \in (0,1)$ in finite samples.

\subsection{Our framework and contributions}
\label{subsec:intro_contributions}


For each interval, evidence for a change can be measured by a suitable score, but its null distribution generally depends on the unknown within-segment distribution. For a fixed interval, permutation tests provide a common calibration by exploiting exchangeability under the interval null
\citep{lehmann2005testing}. Comparing the observed score with scores computed after permuting the observations yields a finite-sample valid $p$-value for a fixed interval. However, this validity does not hold after the interval is reported by the detector using the same data. 
To address this detection bias, 
we propose a new framework Permutation-based Order-preserving Interval Selection, abbreviated as \texttt{POIS}, which is motivated by the order-preserving sample splitting strategy of \citet{zou2020consistent}. 
The detector reports intervals \(\{I_k:k\in\widehat{\mathcal S}\}\) based on the odd-indexed observations, while the even-indexed observations are reserved for permutation testing. 
Conditional on the odd-side data, the reported intervals are fixed, and exchangeability within each null interval ensures the validity of the even-side permutation $p$-values. Applying the Benjamini--Hochberg (BH) procedure \citep{benjamini1995controlling} to these $p$-values yields a selected subset 
$\widehat{\mathcal{R}}$ with finite-sample interval-level FDR control when the intervals in $\widehat{\mathcal{S}}$ are pairwise disjoint.

Sample splitting leaves the odd-side evidence unused in the final testing step, which may reduce the power of interval selection. We further develop \texttt{POIS+} to reuse this information by introducing \emph{detection-conditional} $e$-values as weights. \citet{ignatiadis2024values} establish the use of $e$-values as
unnormalized weights for a fixed family of hypotheses. In our setting,
the same data determine both the tested intervals and their weights,
so marginal $e$-value validity alone is insufficient for FDR control.
The central challenge is to retain useful odd-side evidence
while accounting for its role in determining the tested family.
We address this challenge by constructing detection-conditional
$e$-values through detection-preserving local permutations. Their
conditional validity allows weighted BH
\citep{genovese2006false} to reuse odd-side
information while retaining the finite-sample FDR guarantee.
Figure~\ref{fig:pois_plus_flowchart} summarizes the proposed framework's workflow.

\begin{figure}[htbp]
    \centering
    
    \includegraphics[width=\textwidth]{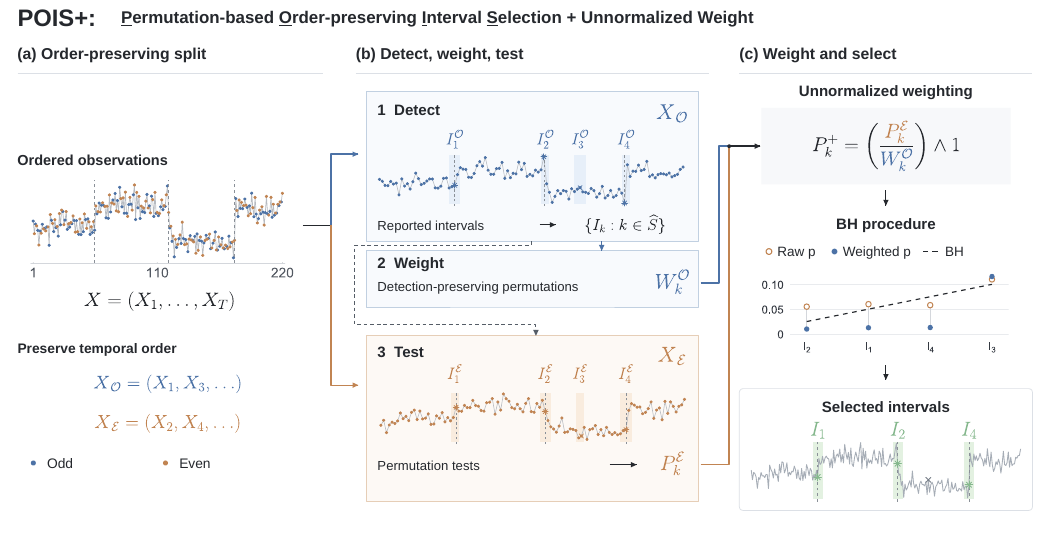}
    \caption{Illustration of \texttt{POIS+} for interval-level localization of multiple change-points. The odd subsequence $X_{\mathcal O}$ is used to run the detector, report intervals in $\widehat{\mathcal{S}}$ and contribute detection-conditional $e$-values as weights. The even subsequence $X_{\mathcal E}$ yields a permutation $p$-value within each reported interval for testing. The weighted $p$-values are then passed to BH, yielding the selected subset $\widehat{\mathcal{R}}$.}
    \label{fig:pois_plus_flowchart}
\end{figure}

Our contributions are listed as follows.
\begin{itemize}
    \item We formulate post-detection interval localization as a multiple testing problem and develop a general  \texttt{POIS} for interval-level FDR control. The proposed \texttt{POIS} procedure accommodates different detectors and interval scores with finite-sample guarantees. 

    \item We construct detection-conditional $e$-values for a data-dependent tested family. And we use these $e$-values to develop \texttt{POIS+}, which allows the same data to run the detector and provide unnormalized weights without compromising FDR control.

    \item We provide tractable implementations of \texttt{POIS+} for two practical classes of change-point detectors, avoiding full detector reruns under local permutations. 
    We also prove that the expected proportion of true change-points covered by
    \texttt{POIS} tends to one, and \texttt{POIS+} is asymptotically no worse than \texttt{POIS} under regular conditions.
\end{itemize}

\subsection{Related work}
\label{subsec:intro_related_work}

Existing work on change-point inference has primarily focused on assessing detected changes through individual tests and then constructing a selective $p$-value to control the type-I error. Under Gaussian data models, \citet{hyun2018exact} study change-points detected as jumps in the piecewise constant mean fitted by the fused lasso, and construct exact tests for mean contrasts across these detected change-points by conditioning on polyhedral selection events. Further, \citet{jewell2022testing} and \citet{carrington2025improving} develop tests for mean changes that condition on less information to improve power. Related methods address binary and optimal segmentation \citep{hyun2021post,duy2020computing}, changes detected by recurrent neural networks \citep{shiraishi2024selective}, piecewise polynomial signals \citep{mehrizi2021valid}, and variance changes \citep{carrington2026variance}. These approaches are typically tailored to particular detection algorithms. In particular, \citet{jia2024tune} develop an algorithm-agnostic framework for testing multiple hypotheses associated with detected change-points, providing asymptotic family-wise error rate (FWER) control under specific distributional assumptions.

Another line of change-point inference work directly constructs intervals with a simultaneous guarantee that each contains a change. Narrowest Significance Pursuit and its robust version provide such guarantees in linear-model and median-change settings
\citep{fryzlewicz2024narrowest,fryzlewicz2024robust}. 
\citet{jang2024fast} combine local parametric or rank tests over a sparse multiscale collection of triplets of time points using a weighted Bonferroni correction. 
\citet{cui2026art} construct rank statistics over a data-independent collection of candidate intervals and use the permutation distribution of their maximum to determine a common threshold, yielding distribution-free, finite-sample FWER control for reported intervals. In contrast, we propose permutation tests to achieve interval-level FDR control.



FDR control has been widely studied in multiple change-point detection problems. Existing procedures address mean changes \citep{hao2013multiple,li2016fdrseg,cheng2020multiple}, quantile changes \citep{liu2024multiscale}, changes in predictor--response relationships \citep{wu2025multiple}, and coefficient contrasts \citep{liu2024generalized}. These procedures differ in their definitions of discoveries and data distributions.
Closely related to our work, \citet{chen2023data} use order-preserving sample splitting, combining statistics from the two splits and exploiting null sign symmetry to obtain asymptotic FDR control for detected change-points. We use permutation tests to obtain finite-sample FDR control for detector-reported intervals, where we require only the exchangeability of data within each null interval. 

\subsection{Organization and notations}

The rest of the paper is organized as follows. Section~2 formulates interval-level post-detection selection and introduces \texttt{POIS} together with its finite-sample FDR guarantee. Section~3 develops \texttt{POIS+} and gives computationally tractable constructions for local-window and narrowest-first detectors. Section~4 studies the asymptotic power of the two procedures. Section~5 presents simulation studies and Section~6 gives two real-data applications.

For a positive integer \(m\), write \([m]=\{1,\ldots,m\}\). For two integers \(a\le b\), write \([a,b]=\{a,a+1,\ldots,b\}\). For a deterministic index subset \(I\subseteq[T]\), let \(X_I=(X_t)_{t\in I}\) be the subsequence, ordered by the time index. We also denote \(\Pi_I\) as the set of all permutations of \(I\). For a permutation \(\pi \in \Pi_I\), write the permuted data subsequence as \(X_{\pi(I)}\).

\section{Permutation-based Interval Selection after Change-Point Detection}
\label{sec:method}

Let \(\mathcal I=\{I_k\subset[1,T]:k\in[K]\}\) be a pool of possible candidate intervals, where each \(I_k\) is an interval that the detector may report. Throughout, this pool is assumed to be deterministic or fixed conditional on auxiliary randomization independent of \(X_{[T]}\). A detector \(\mathbb A\) examines all data subsequences \(\{X_{I_k}:k\in[K]\}\), produces a data-dependent index set \(\widehat{\mathcal{S}}\subseteq[K]\), and reports the corresponding intervals \(\{I_k:k\in \widehat{\mathcal{S}}\}\).

\subsection{Examples of detector outputs as interval pools}
\label{sec:examples}

Reported intervals may be produced directly by a detector or obtained from estimated locations through a prespecified rule. We present three popular examples in the following.


\begin{example}[Prespecified interval pools]
\label{exam:native_pool}
Consider a pool of candidate intervals
\(\mathcal I=\{I_k=[\ell_k,r_k]:k\in[K]\}\),
with \(1\le \ell_k<r_k\le T\).
The endpoints are fixed in advance or generated independently of the data.
Seeded binary segmentation uses a deterministic interval pool \citep{kovacs2023seeded}, while wild binary segmentation (WBS) \citep{fryzlewicz2014wild} and narrowest-over-threshold (NOT) \citep{baranowski2019narrowest} use randomly drawn intervals. The search interval associated with each detected change-point can serve as its reported interval.
Conditional on the data-independent randomization, the pool is fixed. Narrowest significance pursuit (NSP) and its robust version \citep{fryzlewicz2024narrowest,fryzlewicz2024robust}
directly report intervals of significance,
which can also be represented using a fixed pool containing all intervals.
\end{example}

\begin{example}[Local-window outputs]
\label{exam:window_pool}
Local-window methods associate each candidate point with a neighborhood. 
Thus they can be represented by attaching a fixed window to each possible center. For a bandwidth \(h\), take \(K=T\) and set \(I_k=[(k-h+1)\vee 1,(k+h)\wedge T]\) for \(k\in[T]\). Procedures based on moving sum (MOSUM) statistics or the screening and ranking algorithm (SaRa) \citep{eichinger2018mosum,niu2012screening} compute a local diagnostic or scan statistic around the center \(k\), and then report centers that pass a threshold and a local-maximum criterion. In the interval-pool representation, reporting the center \(k\) means reporting the interval \(I_k\), the window on which its evidence is computed. 
\end{example}

\begin{example}[Midpoint intervals from point estimators]
\label{exam:midpoint_pool}
For a detector returning only change-point estimates \(0=\widehat\tau_0<\widehat\tau_1<\cdots<\widehat\tau_{\hat k}<\widehat\tau_{\hat k+1}=T\), a reporting rule is needed to turn them into intervals. 
One convenient choice is the midpoint cell rule \citep{li2016fdrseg,chen2023data}. Define \(I_j=\left[\lfloor(\widehat\tau_{j-1}+\widehat\tau_{j})/2\rfloor+1,\lfloor(\widehat\tau_j+\widehat\tau_{j+1})/2\rfloor\right]\) for \(j=1,\ldots,\hat k\). Although their endpoints depend on the detected locations, one may take the conceptual all-interval pool \(\mathcal I=\{[\ell,r]:1\le \ell\le r\le T\}\), with the detector reporting the elements corresponding to those $I_j$'s.
\end{example}

These examples show that the outputs of a broad class of change-point detectors can be represented as a family of reported intervals $\{I_k:k\in\widehat{\mathcal S}\}$, with each interval associated with the hypothesis in \eqref{eq:intro_interval_null}. 
We next introduce our method \texttt{POIS}, starting with permutation testing for a fixed interval before addressing data-dependent detection. 

\subsection{\texttt{POIS}: finite-sample FDR control via order-preserving splitting}
\label{sec:POIS-interval}

Let \(v\) be a scoring function that maps an ordered subsequence to a nonnegative real number, with larger values indicating stronger evidence against the null. For example, \(v\) may be an energy-distance or maximum mean discrepancy (MMD) statistic comparing observations before and after a prespecified split point in \(I\) \citep{matteson2014nonparametric,gretton2012kernel}. The score may also maximize such a statistic over candidate split points within $I$ \citep{matteson2014nonparametric,song2024practical}. Denote $\Pi_I$ as the set of all permutations of $I$, and write the permuted data subsequence as \(X_{\pi(I)}=(X_{\pi(t)})_{t\in I}\) for \(\pi\in\Pi_I\). 

\paragraph*{Permutation test in a fixed interval.}
For a fixed interval \(I_k\), let \(\pi_{1},\ldots,\pi_{B} \overset{\mathrm{i.i.d.}}{\sim}\operatorname{Unif}(\Pi_{I_k})\) be sampled permutations, independently of the data. Given the scoring function \(v_k\), define
the observed and permuted scores by $V_k:=v_k(X_{I_k})$ and $V_{k,b}:=v_k(X_{\pi_{b}(I_k)})$ for $b\in[B]$, respectively.
The Monte Carlo permutation \(p\)-value \citep{lehmann2005testing} is defined as
\begin{equation*}
    P_k
    :=
    \frac{1+\sum_{b=1}^B
    \mathbbm 1\{V_k\le V_{k,b}\}}{B+1}.
\end{equation*}
Under the null \(H_{0k}\) in \eqref{eq:intro_interval_null}, the observations in \(I_k\) are i.i.d., and hence exchangeable. Applying the same scoring function to the observed and permuted orderings gives that \(P_k\) is superuniform, namely \(\mathbb P_{H_{0k}}(P_k\le u)\le u\) for every \(u\in[0,1]\). Thus, the permutation $p$-value provides a finite-sample and distribution-free guarantee on testing whether the interval contains a change-point.


This fixed-interval validity does not automatically extend to an interval detected by applying \(\mathbb A\) to the same data. For a detector-reported interval \(I_k\), the detection event generally depends on the temporal ordering of \(X_{I_k}\), so exchangeability need not be preserved conditional on detection. Consequently, although \(P_k\) is marginally superuniform for each fixed \(k\), applying these \(p\)-values only to detector-reported intervals does not generally yield valid post-detection inference. To separate detection from inference while preserving temporal order, \texttt{POIS} uses the odd--even splitting strategy adopted in previous work on change-point inference  \citep{zou2020consistent,chen2023data}.

Define \(\mathcal O=\{t\in[T]:t\text{ is odd}\}\) and \(\mathcal E=\{t\in[T]:t\text{ is even}\}\), and let \(I_k^{\mathcal O}=I_k\cap\mathcal O\) and \(I_k^{\mathcal E}=I_k\cap\mathcal E\). The detector is applied only to the odd-side subsequences \(\{X_{I_k^{\mathcal O}}:k\in[K]\}\), yielding the reported index set \(\widehat{\mathcal S}\subseteq[K]\). Conditional on $X_{\mathcal O}$, the reported intervals are fixed for even-side testing. The scoring functions can also be chosen using $X_{\mathcal O}$, and are held fixed when evaluating the even-side observations and their permutations.

For each \(k\in\widehat{\mathcal S}\), we draw permutations \(\pi_{k,1},\ldots,\pi_{k,B}
\overset{\mathrm{i.i.d.}}{\sim} \operatorname{Unif}(\Pi_{I_k^{\mathcal E}})\) independently of the data and across intervals. Define the observed score $V_k^{\mathcal E}:=v_k(X_{I_k^{\mathcal E}})$ and the permuted scores $V_{k,b}^{\mathcal E} :=v_k(X_{\pi_{k,b}(I_k^{\mathcal E})})$. The even-side permutation \(p\)-value is then
\begin{equation}\label{eq:perm-p-split}
    P_k^{\mathcal E}
    :=
     \frac{1+\sum_{b=1}^B
    \mathbbm 1\{V_k^{\mathcal E}\le V_{k,b}^{\mathcal E}\}}{B+1},
    \qquad
    k\in\widehat{\mathcal S}.
\end{equation}
Under the null \(H_{0k}\), the observations in \(I_k^{\mathcal E}\) remain exchangeable conditional on \(X_{\mathcal O}\). Thus, the permutation $p$-values in \eqref{eq:perm-p-split} are conditionally superuniform.
To apply the BH procedure to these $p$-values, we impose the following condition on the output of the detector \(\mathbb A\).

\begin{assumption}
\label{assum:disjoint_intervals}
The reported intervals \(\{I_k:k\in\widehat{\mathcal S}\}\) are pairwise disjoint almost surely.
\end{assumption}

Under the standing temporal-independence model, Assumption~\ref{assum:disjoint_intervals} ensures that the even-side samples associated with the reported intervals are mutually independent conditional on \(X_{\mathcal O}\). Together with independent permutation draws, this yields conditional independence of \(\{P_k^{\mathcal E}:k\in\widehat{\mathcal S}\}\). This condition is mild and is satisfied by most change detection procedures. For native interval outputs in Example~\ref{exam:native_pool}, disjoint reporting is available through variants of NSP and robust NSP. For local-window outputs in Example~\ref{exam:window_pool}, the local-extremum rule in the SaRa FDR formulation of \citet[Section~2.3]{hao2013multiple} yields disjoint reporting windows. The midpoint cells in Example~\ref{exam:midpoint_pool} are disjoint by construction. For overlapping outputs, Appendix~\ref{app:overlap_graph} adapts the Independent Set BH procedure of \citet{nguyen2025graphical} to account for dependence through the even-side overlap graph, while retaining the original reported intervals.

Finally, apply the BH procedure to those \(p\)-values \(\{P_k^{\mathcal E}:k\in\widehat{\mathcal S}\}\). Let \(P_{(1)}^{\mathcal E}\le\cdots\le P_{(|\widehat{\mathcal S}|)}^{\mathcal E}\) be the ordered permutation \(p\)-values and let \(\hat{r}= \max\{1\le i\le |\widehat{\mathcal S}|: P_{(i)}^{\mathcal E}\le i\alpha/|\widehat{\mathcal S}|\}\), with the convention \(\hat{r}=0\) if the set is empty. The rejection set is given by
\begin{equation}
\label{eq:BH_rejected_set}
    \widehat{\mathcal R}
    =
    \left\{
    k\in\widehat{\mathcal S}:
    P_k^{\mathcal E}\le P_{(\hat{r})}^{\mathcal E}
    \right\}.
\end{equation}
The resulting selected intervals are \(\{I_k:k\in\widehat{\mathcal R}\}\). We name this procedure Permutation-based Order-preserving Interval Selection, abbreviated as \texttt{POIS}. The following theorem establishes its finite-sample FDR control.

\begin{theorem}
\label{thm:FDR-POIS}
Let \(\mathcal H_0=\{k\in[K]:I_k\cap\mathcal T^*=\varnothing\}\) denote the set of null interval indices. Suppose Assumption~\ref{assum:disjoint_intervals} holds. Then the rejection set \(\widehat{\mathcal R}\) produced by \texttt{POIS} satisfies
\[
    \mathrm{FDR}(\widehat{\mathcal R})
    \le
    \alpha\cdot
    \mathbb E\left[
        \frac{|\widehat{\mathcal S}\cap\mathcal H_0|}
        {|\widehat{\mathcal S}|\vee1}
    \right]
    \le
    \alpha.
\]
\end{theorem}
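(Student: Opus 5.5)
The plan is to condition on the odd-side data $X_{\mathcal O}$ and reduce everything to the classical finite-sample guarantee of BH under independence, applied to a fixed family of hypotheses. Throughout, $X_{\mathcal O}$ is held together with any auxiliary randomization used to build the pool $\mathcal I$ and to choose the scoring functions $v_k$. Given $X_{\mathcal O}$, the reported set $\widehat{\mathcal S}$ and the scores $v_k$ are deterministic, so $m:=|\widehat{\mathcal S}|$ and $m_0:=|\widehat{\mathcal S}\cap\mathcal H_0|$ are fixed. It therefore suffices to prove the conditional bound
\[
\mathbb E\left[\frac{|\widehat{\mathcal R}\cap\mathcal H_0|}{|\widehat{\mathcal R}|\vee 1}\,\middle|\,X_{\mathcal O}\right]\le \alpha\frac{m_0}{m\vee1},
\]
and then take expectations. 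The final inequality $\le\alpha$ is trivial, since $m_0\le m$.

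\textbf{Step 1: conditional superuniformity of the null $p$-values.} Fix $k\in\widehat{\mathcal S}\cap\mathcal H_0$. Since $I_k$ contains no change-point, the observations $X_{I_k}$ are i.i.d. The data sequence is independent across time and $\mathcal E\cap\mathcal O=\varnothing$, so $X_{I_k^{\mathcal E}}$ is independent of $X_{\mathcal O}$. Hence, conditionally on $X_{\mathcal O}$, the vector $X_{I_k^{\mathcal E}}$ is still i.i.d. and in particular exchangeable. The permutations $\pi_{k,b}$ are uniform on $\Pi_{I_k^{\mathcal E}}$ and drawn independently of the data. The standard Monte Carlo permutation argument then gives $\mathbb P(P_k^{\mathcal E}\le u\mid X_{\mathcal O})\le u$ for all $u\in[0,1]$: under exchangeability, $V_k^{\mathcal E},V^{\mathcal E}_{k,1},\dots,V^{\mathcal E}_{k,B}$ are exchangeable, and the rank of the observed score with ties counted conservatively gives superuniformity. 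The scoring function $v_k$ may depend on $X_{\mathcal O}$, but it is fixed under the conditioning, so it causes no difficulty.

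\textbf{Step 2: conditional independence.} By Assumption~\ref{assum:disjoint_intervals}, the sets $I_k^{\mathcal E}$, $k\in\widehat{\mathcal S}$, are pairwise disjoint. Temporal independence then makes the blocks $X_{I_k^{\mathcal E}}$ mutually independent given $X_{\mathcal O}$. The permutation draws are independent across $k$, so each $P_k^{\mathcal E}$ is a function of its own block and its own draws. Consequently $\{P_k^{\mathcal E}:k\in\widehat{\mathcal S}\}$ are conditionally independent given $X_{\mathcal O}$.

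\textbf{Step 3: BH bound.} I would invoke the Benjamini--Hochberg theorem in its sharp form, $\mathrm{FDR}\le \alpha m_0/m$, which holds for independent $p$-values with superuniform nulls. This only needs the nulls to be superuniform and independent of each other and of the non-nulls; the non-null $p$-values may have arbitrary distribution. For a self-contained argument, I would write
\[
\frac{|\widehat{\mathcal R}\cap\mathcal H_0|}{|\widehat{\mathcal R}|\vee1}=\sum_{k\in\widehat{\mathcal S}\cap\mathcal H_0}\frac{\mathbbm 1\{P_k^{\mathcal E}\le \alpha\hat r/m\}}{\hat r\vee 1}.
\]
I would then replace $\hat r$ by the leave-one-out quantity $\hat r_k$, computed with $P_k^{\mathcal E}$ set to $0$, which agrees with $\hat r$ on the rejection event. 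Next I would condition on $\{P_j^{\mathcal E}\}_{j\ne k}$ and use superuniformity to bound each term by $\alpha/m$. Summing over $k$ gives $\alpha m_0/m$ conditionally, and on the event $m=0$ the FDP is zero.

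The main obstacle is Step 1, specifically justifying rigorously that conditioning on $X_{\mathcal O}$ does not break exchangeability within $I_k^{\mathcal E}$, even though $k$ was selected using $X_{\mathcal O}$. The key observation is that the selection event is $X_{\mathcal O}$-measurable and independent of the even-side block. Handling ties in the permutation $p$-value also needs care; the $\le$ in its definition makes it conservative.
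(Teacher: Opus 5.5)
Your proposal is correct and follows essentially the same route as the paper. You condition on $X_{\mathcal O}$ (and any data-independent randomization) so that $\widehat{\mathcal S}$ is fixed, obtain conditionally superuniform Monte Carlo permutation $p$-values from within-interval exchangeability, get conditional independence from Assumption~\ref{assum:disjoint_intervals}, and apply the sharp BH bound $\alpha m_0/m$ before averaging. The only difference is that the paper simply cites BH for that last step, whereas you sketch the leave-one-out argument; the paper itself uses that argument later, in the proof of Theorem~\ref{thm:FDR_POIS+}.
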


Theorem \ref{thm:FDR-POIS} shows the proposed \texttt{POIS} provides
algorithm-agnostic, distribution-free FDR control. The guarantee follows from the conditional validity of permutation $p$-values and requires only within-interval exchangeability. 

\section{Enhancing power with post-detection weighting}
\label{sec:POIS+}

The \texttt{POIS} procedure applies BH to the even-side permutation $p$-values.
The odd-side data $X_{\mathcal{O}}$ is only used to obtain the reported intervals \(\{I_k: k\in \widehat{\mathcal{S}}\}\), and may provide additional evidence about which reported intervals contain true changes.
To integrate this information, we develop a refined procedure, \texttt{POIS+}, which uses the odd-side data to weight the even-side $p$-values and then applies the weighted BH procedure \citep{genovese2006false}.
The key requirement is that the weights account for the detection of the intervals they are intended to prioritize.

\subsection{\texttt{POIS+}: selection via weighted BH}

We first describe the general steps of \texttt{POIS+}.
For each reported interval \(I_k\) with \(k\in \widehat{\mathcal S}\), let \(W_k^{\mathcal O}\ge0\) be a weight computed from the odd-side subsequence \(X_{I_k^{\mathcal O}}\). The weight is intended to measure evidence against the null interval without a change-point, with larger values indicating stronger evidence in the final selection. Given the even-side permutation \(p\)-value \(P_k^{\mathcal E}\) in \eqref{eq:perm-p-split}, we define the weighted \(p\)-values
\[
    P_k^+
    :=
    (P_k^{\mathcal E}/W_k^{\mathcal O})\wedge 1,
    \qquad k\in\widehat{\mathcal S}.
\]
Let \(P_{(1)}^+ \leq \cdots \leq P_{(|\widehat{\mathcal S}|)}^+\) be the ordered weighted \(p\)-values, and define \(\hat{r}^+ = \max\{1\leq i \leq |\widehat{\mathcal S}|: P_{(i)}^+ \leq i\alpha/|\widehat{\mathcal S}|\}\). Then \texttt{POIS+} returns the weighted BH rejection set
\begin{equation}
\label{eq:weighted_BH_rejected_set}
    \widehat{\mathcal R}^+
    :=
    \left\{
    k\in\widehat{\mathcal S}:
    P_k^+\leq P_{(\hat{r}^+)}^+
    \right\}.
\end{equation}

Classical weighted BH requires the weights to be independent of the null \(p\)-values and normalized so that their average over the tested family is at most one, and FDR can be successfully controlled at the target level \citep{genovese2006false}. For a fixed family of hypotheses, \citet{ignatiadis2024values} show that this normalization is not necessary when the weights are valid \(e\)-values, i.e., \(\E_{H_{0k}}[W_k^{\mathcal{O}}] \leq 1\), and are independent of the \(p\)-values. 
However, the odd-side data determine both the tested family $\{H_{0k}:k\in\widehat{\mathcal S}\}$ and its weights. Selection may favor intervals with larger weights, and so the validity of the weights must account for the detection events.


\subsubsection{A sufficient condition for FDR control of \texttt{POIS+}}

To obtain requirements for the odd-side weights, we proceed with establishing an upper bound for the FDR of the \texttt{POIS+} rejection set in \eqref{eq:weighted_BH_rejected_set}. Conditional on \(X_{\mathcal O}\), \(\widehat{\mathcal S}\) and the weights \(\{W_k^{\mathcal O}:k\in\widehat{\mathcal S}\}\) are fixed. The even-side permutation \(p\)-values retain their conditional null validity and their conditional independence under Assumption~\ref{assum:disjoint_intervals}. 
Therefore, the standard weighted BH bound gives the following result.

\begin{theorem}
\label{thm:FDR_POIS+}
Suppose Assumption~\ref{assum:disjoint_intervals} holds. Then the \texttt{POIS+} rejection set \(\widehat{\mathcal R}^+\) defined in \eqref{eq:weighted_BH_rejected_set} satisfies \(\mathrm{FDR}(\widehat{\mathcal R}^+) \leq \alpha\cdot \mathbb E\left[ (|\widehat{\mathcal S}|\vee 1)^{-1} \sum_{k\in\widehat{\mathcal S}\cap\mathcal H_0} W_k^{\mathcal O} \right].\)
\end{theorem}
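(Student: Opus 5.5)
We condition on the odd-side data \(X_{\mathcal O}\) throughout and then take the outer expectation. Given \(X_{\mathcal O}\), the reported family \(\widehat{\mathcal S}\), its size \(m:=|\widehat{\mathcal S}|\), the scoring functions \(v_k\), and the weights \(\{W_k^{\mathcal O}:k\in\widehat{\mathcal S}\}\) are all deterministic. As noted before Assumption~\ref{assum:disjoint_intervals}, temporal independence and the pairwise disjointness of the reported intervals imply that the even-side blocks \(\{X_{I_k^{\mathcal E}}:k\in\widehat{\mathcal S}\}\) are mutually independent and independent of \(X_{\mathcal O}\). Since the permutation draws are independent across intervals, the \(p\)-values \(\{P_k^{\mathcal E}:k\in\widehat{\mathcal S}\}\) are conditionally independent. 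For \(k\in\widehat{\mathcal S}\cap\mathcal H_0\), the block \(X_{I_k^{\mathcal E}}\) is i.i.d., so \(P_k^{\mathcal E}\) is conditionally superuniform, exactly as in the validity argument for \eqref{eq:perm-p-split}.

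Next we write the FDR of weighted BH via the standard leave-one-out decomposition. Call \(k\in\widehat{\mathcal R}^+\) if and only if \(P_k^{\mathcal E}\le W_k^{\mathcal O}\hat r^+\alpha/m\). Then
\[
\frac{|\widehat{\mathcal R}^+\cap\mathcal H_0|}{|\widehat{\mathcal R}^+|\vee1}
=\sum_{k\in\widehat{\mathcal S}\cap\mathcal H_0}\sum_{r=1}^{m}\frac{1}{r}\mathbbm 1\{\hat r^+=r\}\,\mathbbm 1\{P_k^{\mathcal E}\le W_k^{\mathcal O} r\alpha/m\}.
\]
The \(\wedge1\) truncation is harmless: \(P_k^+\le r\alpha/m\le 1\) is equivalent to \(P_k^{\mathcal E}\le W_k^{\mathcal O}r\alpha/m\), and the case \(W_k^{\mathcal O}=0\) gives no rejection unless \(P_k^{\mathcal E}=0\), which has probability zero since \(P_k^{\mathcal E}\ge 1/(B+1)\). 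For each null \(k\), let \(\hat r_k^{+}\) denote the number of rejections of weighted BH run with \(P_k^{\mathcal E}\) replaced by \(0\). The usual self-consistency argument shows that on the event \(\{P_k^{\mathcal E}\le W_k^{\mathcal O}r\alpha/m\}\), \(\hat r^+=r\) if and only if \(\hat r_k^+=r\). Moreover, \(\hat r_k^+\) is a function of \(\{P_j^{\mathcal E}\}_{j\ne k}\) and \(X_{\mathcal O}\) only, and hence is conditionally independent of \(P_k^{\mathcal E}\).

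Using this independence and superuniformity,
\[
\mathbb E\bigl[\cdot\mid X_{\mathcal O}\bigr]\le\sum_{k\in\widehat{\mathcal S}\cap\mathcal H_0}\sum_{r}\frac1r\,\mathbb P(\hat r_k^+=r\mid X_{\mathcal O})\,\frac{W_k^{\mathcal O}r\alpha}{m}
=\frac{\alpha}{m}\sum_{k\in\widehat{\mathcal S}\cap\mathcal H_0}W_k^{\mathcal O}.
\]
When \(m=0\) there are no rejections, which matches the \((m\vee1)^{-1}\) convention. Taking the expectation over \(X_{\mathcal O}\) gives the stated bound. In particular, with \(W_k^{\mathcal O}\equiv1\) this recovers Theorem~\ref{thm:FDR-POIS}.

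The only delicate step is the conditional superuniformity and independence of the null \(p\)-values given \(X_{\mathcal O}\). This holds because the scores are fixed given \(X_{\mathcal O}\), the null even-side block is i.i.d. and independent of the odd side, and disjointness prevents any two reported intervals from sharing even-side data. Once this is established, the remainder is the standard weighted BH computation with a fixed family and fixed weights.
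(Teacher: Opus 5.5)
Your proof is correct and follows the same route as the paper: condition on the odd side, use the leave-one-out count $\hat r_k^+$ (conditionally independent of $P_k^{\mathcal E}$ by disjointness), and bound each null's contribution by $\alpha W_k^{\mathcal O}/m$ via superuniformity. One small fix is needed: the weights in \eqref{eq:post_selection_Evalue} also depend on the odd-side proposal draws $\pi_{k,b}$ (and on any detector randomization), so they are not determined by $X_{\mathcal O}$ alone; condition instead on the $\sigma$-field generated by $X_{\mathcal O}$ together with this randomization, as the paper does with $\mathcal F_+^{\mathcal O}$, and the rest of your argument goes through unchanged.
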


Thus, \texttt{POIS+} controls the FDR at level \(\alpha\) whenever the following condition for the average over the reported index set holds,
\begin{equation} \label{eq:post_detection_Evalue_error} 
\mathbb E\left[ \left(|\widehat{\mathcal S}|\vee 1\right)^{-1} \sum_{k\in\widehat{\mathcal S}\cap\mathcal H_0} W_k^{\mathcal O} \right] \leq 1. 
\end{equation}
Classical weight normalization would immediately imply this condition. For unnormalized weights, we instead require validity conditional on detection. 
For each null index \(k\in\mathcal H_0\), let \(\Psi_k\) be a conditioning statistic chosen so that \(|\widehat{\mathcal S}|\) is measurable on the event \(\{k\in\widehat{\mathcal S}\}\). Formally, we call \(W_k^{\mathcal O}\) a \textit{detection-conditional} \(e\)-value if
\begin{equation}
\label{eq:cond_Evalue}
    \mathbb E_{H_{0k}}
    \left[
    W_k^{\mathcal O}
    \mid
    k\in\widehat{\mathcal S},\Psi_k
    \right]
    \leq 1.
\end{equation}
Here \(\Psi_k\) is chosen so that $|\widehat{\mathcal S}|$ is measurable on the event $\{k\in\widehat{\mathcal S}\}$.
This decouples the dependence between the numerator and denominator of \eqref{eq:post_detection_Evalue_error}. The next theorem shows that the conditional property \eqref{eq:cond_Evalue} can imply \eqref{eq:post_detection_Evalue_error} and therefore yield finite-sample FDR control for \texttt{POIS+}.

\begin{theorem}
\label{thm:sufficient_cond_Evalue}
Suppose that, for every \(k\in\mathcal H_0\), the following two conditions hold: (i) \(W_k^{\mathcal O}\) is a detection-conditional \(e\)-value satisfying \eqref{eq:cond_Evalue}; (ii) \(|\widehat{\mathcal S}|\) is measurable with respect to \(\Psi_k\) conditional on the event \(\{k\in\widehat{\mathcal S}\}\). Then the relation \eqref{eq:post_detection_Evalue_error} holds. Further, under the conditions of Theorem~\ref{thm:FDR_POIS+}, the rejection set of \texttt{POIS+} in \eqref{eq:weighted_BH_rejected_set} satisfies \(\mathrm{FDR}(\widehat{\mathcal R}^+)\leq\alpha\).
\end{theorem}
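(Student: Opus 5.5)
We will prove the bound \eqref{eq:post_detection_Evalue_error} by linearity of expectation over the fixed index set $\mathcal H_0\subseteq[K]$, then conclude with Theorem~\ref{thm:FDR_POIS+}. Since $\widehat{\mathcal S}$ is random but $\mathcal H_0$ is deterministic, we first rewrite the left-hand side of \eqref{eq:post_detection_Evalue_error} as
\[
\sum_{k\in\mathcal H_0}\mathbb E\left[\frac{W_k^{\mathcal O}\,\mathbbm 1\{k\in\widehat{\mathcal S}\}}{|\widehat{\mathcal S}|\vee 1}\right].
\]
On the event $\{k\in\widehat{\mathcal S}\}$ we have $|\widehat{\mathcal S}|\ge1$, so the $\vee 1$ can be dropped inside each summand.

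For each fixed $k\in\mathcal H_0$ with $\mathbb P(k\in\widehat{\mathcal S})>0$, we condition on the event $\{k\in\widehat{\mathcal S}\}$ and on $\Psi_k$. By condition (ii), $|\widehat{\mathcal S}|$ equals a measurable function $g_k(\Psi_k)$ on this event, so $1/|\widehat{\mathcal S}|$ can be pulled out of the conditional expectation. The tower property then gives
\[
\mathbb E\left[\frac{W_k^{\mathcal O}\mathbbm 1\{k\in\widehat{\mathcal S}\}}{|\widehat{\mathcal S}|}\right]
=\mathbb P(k\in\widehat{\mathcal S})\,\mathbb E\left[\frac{1}{g_k(\Psi_k)}\,\mathbb E\left[W_k^{\mathcal O}\mid k\in\widehat{\mathcal S},\Psi_k\right]\,\Big|\,k\in\widehat{\mathcal S}\right].
\]
Condition (i), the detection-conditional $e$-value property \eqref{eq:cond_Evalue}, bounds the inner conditional expectation by $1$. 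Hence each summand is at most $\mathbb E\left[\mathbbm 1\{k\in\widehat{\mathcal S}\}/(|\widehat{\mathcal S}|\vee1)\right]$. Indices with $\mathbb P(k\in\widehat{\mathcal S})=0$ contribute zero.

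Summing over $k\in\mathcal H_0$ yields
\[
\mathbb E\left[\frac{|\widehat{\mathcal S}\cap\mathcal H_0|}{|\widehat{\mathcal S}|\vee 1}\right]\le 1,
\]
which establishes \eqref{eq:post_detection_Evalue_error}. Plugging this into the bound of Theorem~\ref{thm:FDR_POIS+} gives $\mathrm{FDR}(\widehat{\mathcal R}^+)\le\alpha$.

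The main subtlety is the rigorous meaning of "measurable with respect to $\Psi_k$ conditional on the event $\{k\in\widehat{\mathcal S}\}$." We will interpret it as the existence of a measurable $g_k$ with $|\widehat{\mathcal S}|\mathbbm 1\{k\in\widehat{\mathcal S}\}=g_k(\Psi_k)\mathbbm 1\{k\in\widehat{\mathcal S}\}$. We will then work with the $\sigma$-field generated by $\Psi_k$ restricted to the trace on $\{k\in\widehat{\mathcal S}\}$, so that the conditional expectation in \eqref{eq:cond_Evalue} is well defined. Nonnegativity of $W_k^{\mathcal O}$ justifies all interchanges, since every quantity involved is nonnegative.
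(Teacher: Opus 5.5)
Your proposal is correct and follows the paper's proof essentially step for step. Like the paper, you:
\begin{itemize}
\item expand the normalized null-weight sum over the fixed set $\mathcal H_0$;
\item use the $\Psi_k$-measurability of $|\widehat{\mathcal S}|$ on $\{k\in\widehat{\mathcal S}\}$ to pull $\mathbbm 1\{k\in\widehat{\mathcal S}\}/(|\widehat{\mathcal S}|\vee1)$ out of the conditional expectation;
\item bound the inner term by \eqref{eq:cond_Evalue}, sum to obtain $\mathbb E[|\widehat{\mathcal S}\cap\mathcal H_0|/(|\widehat{\mathcal S}|\vee1)]\le1$, and conclude with Theorem~\ref{thm:FDR_POIS+}.
\end{itemize}
Your two additions make explicit what the paper leaves implicit: reading the measurability condition as a measurable function of $\Psi_k$ on the trace $\sigma$-field, and dismissing indices with $\mathbb P(k\in\widehat{\mathcal S})=0$.
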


Theorem~\ref{thm:sufficient_cond_Evalue} shows that detection-conditional validity \eqref{eq:cond_Evalue} suffices for the average-weight bound in \eqref{eq:post_detection_Evalue_error}, provided the conditioning fixes the reported size. The weights therefore need not be normalized across the reported family. We next construct such weights by locally permuting the odd-side observations and retaining only those permutations that preserve the selection of the focal interval \(I_k\) and the reported size \(|\widehat{\mathcal{S}}|\).

\subsection{Construction of detection-conditional \(e\)-values}

For notational convenience, we express the reported index set in the function form $\widehat{\mathcal S} = \mathbb{A}\left(\{X_{I_k}^{\mathcal{O}}\}: k\in [K]\right)$.
For any \(k\in\widehat{\mathcal S}\), let \(\lbag X_t\rbag_{t\in I_k^{\mathcal O}}\) denote the unordered set of observations inside the interval \(I_k^{\mathcal O}\), and denote \(\widehat{\mathcal S}_{-k}:=\widehat{\mathcal S}\setminus\{k\}\). We define the conditioning statistic as
\begin{equation}
\label{eq:cond_statistic}
    \Psi_k
    =
    \left(
    \lbag X_t\rbag_{t\in I_k^{\mathcal O}},
    \ (X_t)_{t\in\mathcal O\setminus I_k^{\mathcal O}},
    \ |\widehat{\mathcal S}_{-k}|
    \right).
\end{equation}
The first two components of \(\Psi_k\) fix the local observations without their order and the ordered observations outside the interval.
Conditional on this information, the only remaining randomness is the ordering of the observations inside \(I_k^{\mathcal O}\), which is exchangeable under the null \(H_{0k}\). The last component fixes the number of other reported intervals, so that
$|\widehat{\mathcal S}|=1+|\widehat{\mathcal S}_{-k}|$ is measurable with respect to $\Psi_k$ on $\{k\in\widehat{\mathcal S}\}$.

Let \(\Pi_k\) be the collection of all permutations in \(I_k^{\mathcal O}\). For \(\pi \in \Pi_k\), we use $\widehat{\mathcal S}^{\pi,k}:=\mathbb A\left((X_t)_{t\in\mathcal O\setminus I_k^{\mathcal O}},X_{\pi(I_k^{\mathcal O})}\right)$ to denote the detector output after permuting the observations inside $I_k^{\mathcal O}$, with all outside observations kept in their original positions. Any independent randomization used by the detector is held fixed across these permutations.

To construct the odd-side weights satisfying the detection-conditional property \eqref{eq:cond_Evalue}, we first define the accepted subset of permutations that reproduce the detection event:
\begin{equation}
\label{eq:accepted_index}
   \widetilde{\Pi}_k
    =
    \left\{
    \pi \in \Pi_k: k\in\widehat{\mathcal S}^{\pi,k},
    |\widehat{\mathcal S}^{\pi,k}|
    =
    |\widehat{\mathcal S}|
    \right\}.
\end{equation}
Draw the permutations \(\pi_{k,1},\ldots,\pi_{k,B'}\) independently from \(\operatorname{Unif}(\Pi_k)\). We compute the permuted scores by \(V_{k,b}^{\mathcal O}:=v_k\left(X_{\pi_{k,b}(I_k^{\mathcal O})}\right)\) for \(b=1,\ldots,B'\). Conditional on \(\Psi_k\) and the event \(\{k\in\widehat{\mathcal S}\}\), the observed score \(V_k^{\mathcal O} := v_k(X_{I_k^{\mathcal{O}}})\) and the accepted permuted scores \(\{V_{k,b}^{\mathcal O}:\pi_{k,b}\in\widetilde{\Pi}_k\}\) are exchangeable under the null \(H_{0k}\).
Let \(g:\mathbb R\to[0,\infty)\) be a nonnegative transformation. We introduce the detection-conditional weights as
\begin{equation}
\label{eq:post_selection_Evalue}
    W_k^{\mathcal O}
    =
    \frac{
    \left(1+\sum_{b=1}^{B'} \mathbbm{1}\{\pi_{k,b}\in \widetilde{\Pi}_k\}\right)\cdot g(V_k^{\mathcal O})
    }{
    g(V_k^{\mathcal O})
    +
    \sum_{b=1}^{B'} \mathbbm{1}\{\pi_{k,b}\in \widetilde{\Pi}_k\}
    g(V_{k,b}^{\mathcal O})
    },\quad k\in \widehat{\mathcal{S}}.
\end{equation}
The weight in \eqref{eq:post_selection_Evalue} is the observed transformed score divided by the average over itself and the accepted scores. It can be viewed as a detection-conditional version of the soft-rank \(e\)-value \citep[Section~1.7]{ramdas2025hypothesis}, where the marginal version uses all permuted scores in the denominator.

\begin{remark}
The construction of detection-conditional $e$-values is closely related to the selective inference problems studied in conformal inference \citep{bao2024selective,wang2024conformalized,jin2025confidence}. In the conformal setting, the inferential target is often a selected test point with an unobserved label, while the calibration scores are exchangeable with the test score before selection. For example, \citet{bao2024selective} screen calibration scores to maintain the exchangeability between the calibration score and the test score given the selection event. Here, the accepted permutation set \eqref{eq:accepted_index} is designed to keep the exchangeability between the observed score and the permuted scores given the realized detection event.
\end{remark}

The following theorem shows the conditional validity of the proposed weight.

\begin{theorem}
\label{thm:perm-e-validity}
For any detector \(\mathbb A\), the weights defined in \eqref{eq:post_selection_Evalue} are detection-conditional \(e\)-values with respect to \(\Psi_k\) in \eqref{eq:cond_statistic}, that is, for every null interval \(k\in\mathcal H_0\), \(\mathbb E_{H_{0k}} \left[ W_k^{\mathcal O} \mid \ k\in\widehat{\mathcal S},\Psi_k \right] \leq 1.\)
Consequently, under Assumption~\ref{assum:disjoint_intervals}, \texttt{POIS+} enjoys finite-sample FDR control, i.e., \(\mathrm{FDR}(\widehat{\mathcal R}^+)\leq\alpha\).
\end{theorem}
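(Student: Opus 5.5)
Fix a null index \(k\in\mathcal H_0\) and condition on \(\Psi_k\) in \eqref{eq:cond_statistic} together with the event \(\{k\in\widehat{\mathcal S}\}\). Under \(H_{0k}\) the odd-side observations in \(I_k^{\mathcal O}\) are i.i.d., and they are independent of \(X_{\mathcal O\setminus I_k^{\mathcal O}}\). Hence, given \(\lbag X_t\rbag_{t\in I_k^{\mathcal O}}\) and \((X_t)_{t\in\mathcal O\setminus I_k^{\mathcal O}}\), the observed ordering is distributed as \(X_{\sigma(I_k^{\mathcal O})}\) with \(\sigma\sim\operatorname{Unif}(\Pi_k)\). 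The plan is to show that, further conditioning on \(\{k\in\widehat{\mathcal S}\}\) and on \(|\widehat{\mathcal S}_{-k}|\), this ordering is uniform over the set of orderings that reproduce the detection event. The weight is then a ratio of one exchangeable element to the average of itself and its exchangeable companions, and its conditional mean is exactly one (or at most one).

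\emph{Step 1: the accepted set is invariant.} Write \(Y=X_{I_k^{\mathcal O}}\) for the observed local ordering. Define the event \(D(y)=\{k\in\mathbb A(\cdot,y),\ |\mathbb A(\cdot,y)|=m\}\), with the outside data fixed and \(m=1+|\widehat{\mathcal S}_{-k}|\). Conditional on \(\Psi_k\) and \(k\in\widehat{\mathcal S}\), \(Y\) is uniform over the orderings \(y\) of the fixed multiset with \(D(y)\) true. If ties occur, work with permutations of positions; this gives uniformity over \(\sigma\) with \(D(X_{\sigma})\).

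A permuted ordering \(\pi_{k,b}(Y)\) is accepted iff \(D(\pi_{k,b}(Y))\) holds. Since \(\pi_{k,b}\) is uniform on \(\Pi_k\) and independent, \(\pi_{k,b}(Y)\) is uniform over all orderings regardless of \(Y\). Conditioning on acceptance, it is uniform over the accepted orderings, i.e. the same set on which \(Y\) is uniform.

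To keep the argument clean, I would treat \(W_k^{\mathcal O}\) as a function of \(Z_0=Y\) and \(Z_b=\pi_{k,b}(Y)\) with indicators \(A_b=\mathbbm 1\{D(Z_b)\}\). Since \(D(Z_0)\) holds on the conditioning event, we may set \(A_0=1\). The key point is that the indices with \(A=1\), i.e. \(Z_0\) together with the accepted \(Z_b\), form an exchangeable collection given their number \(N=1+\sum_b A_b\).

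\emph{Step 2: the soft-rank bound.} Given \(N\) and the multiset of accepted transformed scores \(g(V)\), the observed score is equally likely to be each of the \(N\) entries. Write \(S=\sum_{j\text{ accepted}} g(V_j)\). Then
\[
  \E[W_k^{\mathcal O}\mid \cdot]=\frac1N\sum_{j}\frac{N g(V_j)}{S}=1 .
\]
When \(S=0\), use the convention \(0/0=0\), which gives a value \(\le1\). Taking expectations over \(N\) and the multiset then yields \eqref{eq:cond_Evalue}. Theorem~\ref{thm:sufficient_cond_Evalue} applies because \(|\widehat{\mathcal S}|=1+|\widehat{\mathcal S}_{-k}|\) is \(\Psi_k\)-measurable on \(\{k\in\widehat{\mathcal S}\}\). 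Combining this with Theorem~\ref{thm:FDR_POIS+} under Assumption~\ref{assum:disjoint_intervals} gives \(\mathrm{FDR}(\widehat{\mathcal R}^+)\le\alpha\).

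\emph{Main obstacle.} The delicate step is the exchangeability in Step 1. The permutations are drawn relative to the observed ordering, and acceptance is not a fixed set of \(\pi\)'s but depends on \(\pi\circ Y\). I would resolve this with the group structure of \(\Pi_k\), which has two consequences.

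First, the pair (uniform \(\sigma\), i.i.d. uniform \(\pi_b\)) has the same law as (uniform \(\sigma\), \(\rho_b=\pi_b\circ\sigma\)), with the \(\rho_b\) i.i.d. uniform and independent of \(\sigma\). Hence \((\sigma,\rho_1,\ldots,\rho_{B'})\) are i.i.d. uniform. Conditioning on the acceptance events \(D\) then makes the accepted subcollection, including \(\sigma\) itself, i.i.d. uniform on the acceptance set given its size.

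Second, conditioning on \(k\in\widehat{\mathcal S}\) is the same as conditioning on \(D(\sigma)\) once the outside data and \(|\widehat{\mathcal S}_{-k}|\) are fixed. This requires checking that \(|\widehat{\mathcal S}_{-k}|\) is a function of the ordering only through the event \(D\). That check needs care, and I would handle it by conditioning on the value \(m\) and intersecting the events.

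Any detector randomization is held fixed, so it is absorbed into the conditioning.
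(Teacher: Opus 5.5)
Your proposal is correct and follows essentially the paper's argument: you fix the realized value $m$ of $|\widehat{\mathcal S}_{-k}|$, use that under $H_{0k}$ the observed and proposed arrangements are i.i.d.\ uniform on $\Pi_k$ given the unordered inside values and the ordered outside values, and average the soft-rank ratio over the exchangeable accepted collection before invoking Theorems~\ref{thm:sufficient_cond_Evalue} and~\ref{thm:FDR_POIS+}. The only difference is presentational. The paper symmetrizes $\mathbbm 1\{k\in\widehat{\mathcal S},|\widehat{\mathcal S}_{-k}|=m\}W_k^{\mathcal O}$ over all $B'+1$ arrangements, bounds it by the detection probability, and then divides, whereas you first condition on the detection event and the acceptance pattern (the same device the paper uses in its narrowest-first and full-sample arguments).
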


We conclude the section with a concise implementation of \texttt{POIS+} in Algorithm \ref{alg:POIS_plus}. The algorithm obtains the reported intervals and even-side permutation \(p\)-values, computes the odd-side detection-conditional \(e\)-values, and applies BH to the resulting weighted \(p\)-values.

\begin{algorithm}[htbp]
\caption{\texttt{POIS+}: Boosting permutation-based POIS with odd-side \(e\)-values}
\label{alg:POIS_plus}
\begin{algorithmic}[1]
\Require Data sequence \(X_{[T]}\); detector \(\mathbb A\); scoring functions \(\{v_k\}_{k=1}^K\); permutation numbers \(B\) and \(B'\); target level \(\alpha\); nonnegative transformation \(g\).
    \Procedure{\texttt{POIS+}}{$X_{[T]},\mathbb A,v_k,B,B',\alpha,g$}
    \State Apply \(\mathbb A\) to \(X_{\mathcal O}\) to obtain
    \(\{I_k:k\in\widehat{\mathcal S}\}\).
    \State Compute \(\{P_k^{\mathcal E}:k\in\widehat{\mathcal S}\}\)
    using \eqref{eq:perm-p-split}.
    \For{\(k \in \widehat{\mathcal S}\)}
        \State Draw
        \(\pi_{k,1},\ldots,\pi_{k,B'}\)
        independently from \(\operatorname{Unif}(\Pi_k)\).
        \State Compute the accepted permutation set \(\widetilde{\Pi}_k\)
        in \eqref{eq:accepted_index}.
        \State Compute odd-side scores
        \(V_k^{\mathcal O}
        =v_k(X_{I_k^{\mathcal O}})\) and \(V_{k,b}^{\mathcal O}=v_k(X_{\pi_{k,b}(I_k^{\mathcal O})})\) for \(b \in [B']\).
        \State Compute the detection-conditional \(e\)-value $W_k^{\mathcal O}$ in \eqref{eq:post_selection_Evalue}.
        \State Compute the weighted \(p\)-value
        \(P_k^+=(P_k^{\mathcal E}/W_k^{\mathcal O})\wedge 1\).
    \EndFor
    \State Apply BH at level \(\alpha\) to
    \(\{P_k^+:k\in\widehat{\mathcal S}\}\), yielding the rejection set
    \(\widehat{\mathcal R}^+\).
    \State \Return \(\{I_k:k\in\widehat{\mathcal R}^+\}\).
\EndProcedure
\end{algorithmic}
\end{algorithm}

\begin{remark}
An alternative way is to construct the detection-conditional $e$-values in \eqref{eq:cond_Evalue} in the full data sequence and then apply the $e$-value version of the BH procedure ($e$-BH) \citep{wang2022false} to these conditional $e$-values. This can control FDR in finite samples without sample splitting or disjoint reported intervals; see Appendix~\ref{app:overlap}. However, few permutations may preserve the required detection event in \eqref{eq:accepted_index}, especially for short intervals. Hence, the resulting conditional $e$-values may be too small to produce any rejections under the $e$-BH procedure, as illustrated in Appendix~\ref{app:epois-signal-sensitivity}. We thus advocate using conditional $e$-values as weights for the even-side permutation $p$-values, as in \texttt{POIS+}.
\end{remark}


\subsection{Instantiations of \texttt{POIS+}}
\label{sec:instantiations}

A direct implementation of \texttt{POIS+} can be computationally demanding, as checking \eqref{eq:accepted_index} generally requires rerunning the detector after each local permutation.
For the local-window and narrowest-first detectors in Examples~\ref{exam:window_pool} and~\ref{exam:native_pool}, we give local acceptance rules that avoid these full reruns while retaining conditional validity. The rules strengthen \eqref{eq:accepted_index} by preserving the entire reported interval family. 
For the midpoint-cell conversion in Example~\ref{exam:midpoint_pool}, a local permutation may change both detected locations and the resulting cell boundaries, so checking \eqref{eq:accepted_index} generally requires rerunning the detector. Throughout this section, \(\widehat Q_k\) denotes a detection statistic computed from \(X_{I_k^{\mathcal{O}}}\), which is used only by the detector $\mathbb A$ to report $\widehat{\mathcal S}$.

\subsubsection{Local-window scans}

Consider a deterministic local-window detector in Example \ref{exam:window_pool}, representative of MOSUM and SaRa-type procedures \citep{hao2013multiple,eichinger2018mosum}.
For a bandwidth $h$, the candidate intervals are \(I_k = [(k-h+1)\vee1,(k+h)\wedge T]\) for $k\in [T]$.
Let \(\lambda\) denote the detection threshold, and let \(\rho_h\) be the local competition radius used in the final local maximum criterion. The detector first forms the set of over-threshold local maxima: \(\widehat{\mathcal C} = \left\{ k\in[T]: \widehat Q_k>\lambda \text{ and } \widehat Q_k> \max_{j\in[T]:\,0<|j-k|<h}\widehat Q_j \right\}.\)
It then applies a local-maximum criterion: \(D_k = \mathbbm 1 \left\{ k\in\widehat{\mathcal C} \text{ and } \widehat Q_k> \max_{j\in\widehat{\mathcal C}:\,0<|j-k|\le\rho_h} \widehat Q_j \right\}\). The reported index set is given by \(\widehat{\mathcal S}=\{k\in[T]:D_k=1\}\).

For \(k\in\widehat{\mathcal S}\), permuting the observations in \(I_k^{\mathcal O}\) can affect \(\widehat Q_j\) only if \(I_j^{\mathcal O}\cap I_k^{\mathcal O}\neq\varnothing\), which implies \(|j-k|\le 2h\). Since \(D_m\) depends only on scores indexed within \(h+\rho_h\) of \(m\), such a permutation can change \(D_m\) only if \(|m-k|\le 3h+\rho_h\). For each permutation \(\pi\in\Pi_k\), let \(D_m^\pi\) denote the recomputed detection indicator after replacing \(X_{I_k^{\mathcal O}}\) with \(X_{\pi(I_k^{\mathcal O})}\). We define the set of locally detection-preserving permutations as
\[
    \widetilde\Pi_k^{\rm win}
    =
    \left\{
        \pi\in\Pi_k:
        D_m^\pi=D_m
        \text{ for all }m\text{ satisfying }
        |m-k|\le 3h+\rho_h
    \right\}.
\]
This implies that only the potentially affected selection indicators need
to be checked. 
The next proposition gives the validity of \texttt{POIS+} weights by letting $\widetilde \Pi_k = \widetilde\Pi_k^{\rm win}$ in \eqref{eq:post_selection_Evalue}.

\begin{proposition}
\label{prop:window_cond_invariance}
Suppose that the detector is the deterministic local-window procedure above. Then, for every \(k\in\widehat{\mathcal S}\) and \(\pi\in\widetilde\Pi_k^{\rm win}\), \(\widehat{\mathcal S}^{\pi,k}=\widehat{\mathcal S}\). Moreover, for each null interval, the weight constructed from \(\widetilde\Pi_k^{\rm win}\) through \eqref{eq:post_selection_Evalue} satisfies \(\mathbb E_{H_{0k}} \left[ W_k^{\mathcal O} \mid k\in\widehat{\mathcal S},\Psi_k \right] \le 1.\)
\end{proposition}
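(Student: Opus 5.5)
The proof has two parts: a locality statement, that a permutation in $\widetilde\Pi_k^{\rm win}$ leaves the whole reported family unchanged, and an exchangeability argument giving the conditional $e$-value bound. We fix $k\in\widehat{\mathcal S}$ and $\pi\in\Pi_k$, and write $\widehat Q_j^\pi$ for the detection statistic recomputed after replacing $X_{I_k^{\mathcal O}}$ by $X_{\pi(I_k^{\mathcal O})}$.

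\textbf{Locality.} First we check that $\widehat Q_j^\pi=\widehat Q_j$ whenever $I_j^{\mathcal O}\cap I_k^{\mathcal O}=\varnothing$. This holds because $\widehat Q_j$ is a function of $X_{I_j^{\mathcal O}}$ alone. Since $I_j$ and $I_k$ are windows of half-width about $h$, this covers every $j$ with $|j-k|>2h$. Next we show that $D_m$ is a function of $\{\widehat Q_j:|j-m|\le h+\rho_h\}$. Membership of a point $j$ in $\widehat{\mathcal C}$ depends only on $\widehat Q_{j'}$ for $|j'-j|<h$. The indicator $D_m$ uses $m\in\widehat{\mathcal C}$ and $j\in\widehat{\mathcal C}$ for $|j-m|\le\rho_h$, hence only scores within $\rho_h+h$ of $m$. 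Combining the two facts, $D_m^\pi=D_m$ whenever $|m-k|>3h+\rho_h$, because then no score within $h+\rho_h$ of $m$ is affected. For $|m-k|\le 3h+\rho_h$, equality holds by the definition of $\widetilde\Pi_k^{\rm win}$. Hence $D^\pi=D$ everywhere, so $\widehat{\mathcal S}^{\pi,k}=\widehat{\mathcal S}$, and in particular $\widetilde\Pi_k^{\rm win}\subseteq\widetilde\Pi_k$.

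\textbf{Exchangeability.} For the $e$-value bound we reproduce the argument of Theorem~\ref{thm:perm-e-validity} with $\widetilde\Pi_k^{\rm win}$ in place of $\widetilde\Pi_k$. The key property is that $\widetilde\Pi_k^{\rm win}$ is defined through the detection outcome being invariant. Let $\sigma$ denote the true (unobserved) ordering of the bag $\lbag X_t\rbag_{t\in I_k^{\mathcal O}}$. Under $H_{0k}$, conditional on $\Psi_k$, $\sigma$ is uniform over the orderings. The set $G$ of orderings producing the observed indicator vector $(D_m)_{|m-k|\le 3h+\rho_h}$ is determined by $\Psi_k$ together with that vector, since outside coordinates are fixed by the locality step. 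We then condition further on this vector, which refines the event $\{k\in\widehat{\mathcal S}\}$ and $|\widehat{\mathcal S}_{-k}|$. Given it, the observed ordering is uniform on $G$. A permuted ordering $\pi_{k,b}$ applied to the observed data is accepted exactly when it lands in $G$. Since composition with a uniform permutation yields a uniform ordering independent of the start, the accepted permuted orderings are i.i.d.\ uniform on $G$. The observed ordering is also uniform on $G$ and independent of them. Hence, given the number $N$ of accepted draws, the observed score $V_k^{\mathcal O}$ and the $N$ accepted scores are exchangeable.

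\textbf{Soft-rank bound.} For exchangeable $Z_0,\dots,Z_N$, the quantity $(N+1)g(Z_0)/\sum_i g(Z_i)$ has expectation at most $1$ by symmetry, taking $0/0=0$. Averaging this bound over the refined conditioning and then over the coarser $\sigma$-field gives $\mathbb E[W_k^{\mathcal O}\mid k\in\widehat{\mathcal S},\Psi_k]\le 1$.

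The main obstacle is the exchangeability step, specifically verifying that $G$ is measurable from $\Psi_k$ plus the local indicator vector, and that this conditioning is finer than $(k\in\widehat{\mathcal S},\Psi_k)$. This is what licenses the tower-property averaging. Once $G$ is shown to be fixed under the refined conditioning, the remainder is the standard soft-rank computation.
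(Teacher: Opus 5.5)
Your proof is correct and follows essentially the same route as the paper. Both use the same locality argument: scores with $|t-k|>2h$ are untouched, and $D_m$ depends only on scores within $h+\rho_h$ of $m$. Both then condition on a refinement of $\Psi_k$ that fixes the detection outcome, and symmetrize as in Theorem~\ref{thm:perm-e-validity}.

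The only difference is cosmetic. The paper proves the converse inclusion and conditions on the full vector $(D_m)_{m\in[T]}$, which is equivalent to your local vector once the outside observations are fixed. One small imprecision: the ordering is uniform given only the first two components of $\Psi_k$, not given $\Psi_k$ itself, since $|\widehat{\mathcal S}_{-k}|$ depends on the ordering. Your refined conditioning makes this harmless.
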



Here, preserving the checked indicators is equivalent to preserving the complete selection vector. The conditional validity follows by conditioning on this vector and then averaging back to $\Psi_k$; see Appendix~\ref{app:proof_window_cond_invariance}.

\begin{proposition}
\label{prop:window_cost}
Let \(\mathfrak C_{\mathrm{win}}\) be the cost of one odd-side local-window scan. Assume that scores and local-maximum indicators on a block of \(m\) centers can be updated in \(O(m)\) time after a local permutation. Under Assumption~\ref{assum:disjoint_intervals}, the local-window implementation of \texttt{POIS+} has cost \(\mathfrak C_{\mathrm{POIS+}}^{\mathrm{win}}=\mathfrak C_{\mathrm{win}}+O\left\{(B+B')T+|\widehat{\mathcal S}|\log|\widehat{\mathcal S}|\right\}\).
\end{proposition}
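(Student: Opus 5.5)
The cost splits into four pieces, and I would bound each one separately and then add them up. The pieces are: (a) one odd-side detector run; (b) the even-side permutation $p$-values; (c) the odd-side detection-conditional $e$-values built from $\widetilde\Pi_k^{\rm win}$; (d) the final weighted BH step.

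Piece (a) costs $\mathfrak C_{\mathrm{win}}$ by definition.

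For piece (d), forming $P_k^+=(P_k^{\mathcal E}/W_k^{\mathcal O})\wedge 1$ takes $O(|\widehat{\mathcal S}|)$ time. Sorting these values and scanning for $\hat r^+$ takes $O(|\widehat{\mathcal S}|\log|\widehat{\mathcal S}|)$, which gives the logarithmic term.

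For piece (b), I assume, as is implicit in the statement, that evaluating the score $v_k$ on $m$ observations costs $O(m)$. Drawing a uniform permutation of $I_k^{\mathcal E}$ (Fisher--Yates) and evaluating $v_k$ then costs $O(|I_k|)$ per draw. Over $B$ draws and all $k\in\widehat{\mathcal S}$ this totals $O(B\sum_{k\in\widehat{\mathcal S}}|I_k|)$. Assumption~\ref{assum:disjoint_intervals} is what makes this sum linear in $T$: the reported intervals are pairwise disjoint subsets of $[T]$, so $\sum_{k\in\widehat{\mathcal S}}|I_k|\le T$. Piece (b) therefore costs $O(BT)$.

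Piece (c) carries the only real subtlety. For each $k\in\widehat{\mathcal S}$ and each of the $B'$ draws, we must
\begin{itemize}
\item permute $X_{I_k^{\mathcal O}}$, costing $O(|I_k|)$;
\item compute $V_{k,b}^{\mathcal O}$, costing $O(|I_k|)$;
\item check membership in $\widetilde\Pi_k^{\rm win}$.
\end{itemize}
As argued just before Proposition~\ref{prop:window_cond_invariance}, the permutation only affects the scores $\widehat Q_j$ with $|j-k|\le 2h$. It only affects the indicators $D_m$ with $|m-k|\le 3h+\rho_h$. So we recompute scores and indicators on a block of $O(h+\rho_h)$ centers, which by the stated updating assumption costs $O(h+\rho_h)$. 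We then compare with the stored $D_m$ over the same block.

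Obstacle. The block size $h+\rho_h$ is not obviously bounded by $|I_k|$, because $|I_k|$ could be truncated at the boundary. The per-draw cost is therefore $O(h+\rho_h)$, and summing over $k$ gives $O(B'|\widehat{\mathcal S}|(h+\rho_h))$. To turn this into $O(B'T)$ I would use two facts.
\begin{itemize}
\item Reported centers are local maxima in $\widehat{\mathcal C}$, so any two are separated by at least $h$. Disjointness of the windows also forces spacing at least of order $h$. Hence $|\widehat{\mathcal S}|\,h=O(T)$.
\item For the $\rho_h$ term, I would take $\rho_h=O(h)$ as the standard choice. Alternatively, note that reported centers are also separated by more than $\rho_h$, since $D$ rejects any competitor within $\rho_h$. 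This gives $|\widehat{\mathcal S}|\rho_h=O(T)$.
\end{itemize}
With these bounds, piece (c) plus the $O(B')$ arithmetic for each $W_k^{\mathcal O}$ is $O(B'T)$. Summing (a)--(d) yields $\mathfrak C_{\mathrm{win}}+O\{(B+B')T+|\widehat{\mathcal S}|\log|\widehat{\mathcal S}|\}$.

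Storing the original $\widehat Q$ and $D$ from the single scan in (a) avoids any full rerun, so no extra $\mathfrak C_{\mathrm{win}}$ factors appear.
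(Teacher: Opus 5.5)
Your proof is correct and follows the paper's argument essentially step for step. Both split the cost into the initial scan, the even-side $O(BT)$ term obtained from disjointness, the localized odd-side checks bounded through the $h$-separation (from $\widehat{\mathcal C}$) and the $>\rho_h$-separation (from $D_k$) of reported centers, and the BH sorting cost. The assumption $\rho_h=O(h)$ is unnecessary, since your separation alternative already suffices.

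One detail to tighten: state the per-draw block cost as $O\{\min(T,h+\rho_h)\}$, because there are at most $T$ centers. The paper uses this cap, with a case split on $\max\{h,\rho_h\}$ versus $T$, to cover a single reported center with $h+\rho_h\gg T$. In that case your claim $|\widehat{\mathcal S}|(h+\rho_h)=O(T)$ would fail.
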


The additional term \(O(B'T)\) comes from computing the odd-side weights. A local permutation around a reported center \(k\) affects only centers within distance \(3h+\rho_h\), so one permutation costs \(O\{\min(T,h+\rho_h)\}\). The two local-maximum conditions ensure that distinct reported centers are at least \(h\) apart and more than \(\rho_h\) apart, giving \(|\widehat{\mathcal S}|\min\{T,h+\rho_h\}=O(T)\). The remaining terms account for even-side calculation and
BH sorting.


\subsubsection{Narrowest-first interval searches}

We next consider a recursive narrowest-first search motivated by the narrowest-over-threshold principle \citep{baranowski2019narrowest}. On a search segment \(G=[u,v]\), the procedure chooses the shortest interval \(I_j\subseteq G\) satisfying \(\widehat Q_j>\lambda\). Among intervals of the same length, it chooses the one with the largest statistic, with any ties resolved by a deterministic rule. Denote the chosen index by \(\hat k(G)\), and set \(\hat k(G)=\varnothing\) if no interval exceeds the threshold.

The search starts from \([1,T]\). If \(\hat k(G)=\varnothing\), the recursion stops on \(G\). Otherwise, writing \(I_{\hat k(G)}=[\ell_{\hat k(G)},r_{\hat k(G)}]\), the procedure reports this interval and continues recursively on the two nonempty segments \([u,\ell_{\hat k(G)}-1]\) and \([r_{\hat k(G)}+1,v]\). Let \(\widehat{\mathcal S}\) collect all indices reported during the recursion. By construction, the reported intervals are pairwise disjoint.
Fix \(k\in\widehat{\mathcal S}\), and let \(\mathcal P_k\) be the observed path from \([1,T]\) to the segment on which \(I_k\) is reported. For \(\pi\in\Pi_k\), let \(\widehat Q_j^\pi\) be the statistic obtained after permuting only \(I_k^{\mathcal O}\), and let \(\hat k^\pi(G)\) be the index chosen on \(G\) by the same narrowest-first rule using the permuted statistics. Since
\(\widehat Q_j^\pi=\widehat Q_j\) whenever
\(I_j^{\mathcal O}\cap I_k^{\mathcal O}=\varnothing\), only statistics of intervals intersecting \(I_k^{\mathcal O}\) need to be recomputed. Define
\[
    \widetilde\Pi_k^{\rm nar}
    =
    \left\{
        \pi\in\Pi_k:
        \hat k^\pi(G)=\hat k(G)
        \text{ for all }G\in\mathcal P_k
    \right\}.
\]
Thus, membership in \(\widetilde\Pi_k^{\rm nar}\) can be checked along the observed path without rerunning the full recursion. Preserving the interval chosen at every segment in \(\mathcal P_k\) preserves all splits leading to \(I_k\), while all remaining branches are disjoint from \(I_k\), and therefore remain unchanged.

\begin{proposition}
\label{prop:narrow_cond_invariance}
Suppose that the odd-side detector is the narrowest-first interval search above. Then, for every \(k\in\widehat{\mathcal S}\) and \(\pi\in\widetilde\Pi_k^{\rm nar}\), we have \(\widehat{\mathcal S}^{\pi,k}=\widehat{\mathcal S}\). Moreover, for each null interval, the weight constructed from \(\widetilde\Pi_k^{\rm nar}\) through \eqref{eq:post_selection_Evalue} satisfies
\(\mathbb E_{H_{0k}}\!\left[W_k^{\mathcal O}\mid k\in\widehat{\mathcal S},\Psi_k\right]\le1\).
\end{proposition}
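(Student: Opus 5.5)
The proof has two parts: a deterministic invariance claim, and a conditional exchangeability argument that converts invariance into the $e$-value bound. I follow the same route as Proposition~\ref{prop:window_cond_invariance}.

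\textbf{Invariance.} Fix $k\in\widehat{\mathcal S}$ and $\pi\in\widetilde\Pi_k^{\rm nar}$. I argue by induction over the recursion tree that every segment visited by the permuted recursion equals a segment visited by the original recursion, and that it receives the same chosen index. The tree has two kinds of segments.

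Segments on $\mathcal P_k$ are handled by definition: $\hat k^\pi(G)=\hat k(G)$. Hence the children produced from each such $G$ coincide with the original children, so the permuted recursion reaches exactly the same segments along the path.

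Every segment off the path is a child split off from some $G\in\mathcal P_k$, and it lies on the other side of the chosen interval $I_{\hat k(G)}$. Consider first the final path segment $G_k$, where $\hat k(G_k)=k$. Its two children are disjoint from $I_k$. For any earlier path segment, the branch not leading toward $I_k$ is disjoint from the branch containing $I_k$. So every off-path segment $G'$ satisfies $G'\cap I_k=\varnothing$. Any candidate $I_j\subseteq G'$ then has $I_j^{\mathcal O}\cap I_k^{\mathcal O}=\varnothing$, which gives $\widehat Q^\pi_j=\widehat Q_j$. The narrowest-first rule, including its tie-breaking, depends only on the statistics of intervals inside the segment, so all choices in the subtree rooted at $G'$ are unchanged. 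Combining the two cases gives $\widehat{\mathcal S}^{\pi,k}=\widehat{\mathcal S}$.

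\textbf{Conditional validity.} Condition on the finer statistic
\[
\Psi_k'=\bigl(\Psi_k,\widehat{\mathcal S}\bigr),
\]
on the event $\{k\in\widehat{\mathcal S}\}$. Under $H_{0k}$, the odd observations in $I_k^{\mathcal O}$ are i.i.d. and independent of those outside. Conditional on the unordered bag and the outside data, the realized ordering is therefore uniform over arrangements. Let $\sigma$ denote the realized arrangement relative to a canonical one.

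Define $A=\{\sigma'\in\Pi_k:\widehat{\mathcal S}\ \text{computed from arrangement } \sigma'=\widehat{\mathcal S}_{\rm obs}\}$. Further conditioning on $\widehat{\mathcal S}$ restricts $\sigma$ to $A$, and $\sigma$ is uniform on $A$.

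The key step is to show that the accepted set is a relabeling of a fixed subset of $A$. The event ``$\pi\circ\sigma\in\widetilde\Pi$-type set'' depends on $\sigma$ only through which arrangements reproduce the path choices. I expect the main obstacle here: $\widetilde\Pi_k^{\rm nar}$ is defined through the path $\mathcal P_k$, which is itself realized data. Conditional on $\widehat{\mathcal S}$, however, the path $\mathcal P_k$ is fixed, since it is determined by $\widehat{\mathcal S}$ and the recursion structure. So define
\[
C=\{\sigma':\ \text{path choices on } \mathcal P_k \text{ agree}\}.
\]
By the invariance part, $C\subseteq A$; conversely $A\subseteq C$ trivially. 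Hence $C=A$, and $\pi_{k,b}\circ\sigma\in A$ exactly when $\pi_{k,b}\in\widetilde\Pi_k^{\rm nar}$.

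With this in hand, the observed score and the accepted permuted scores are, conditional on $\Psi'_k$, draws whose arrangements are a uniform element of $A$ together with i.i.d. uniform elements of $\Pi_k$ filtered to $A$. I then run the standard soft-rank argument. Using the fact that $\sigma$ and the $\pi_{k,b}\circ\sigma$ are i.i.d. uniform on $\Pi_k$ after right-multiplication, swap $\sigma$ with any accepted draw. The $N+1$ transformed scores are then exchangeable given the count $N$ of accepted draws and the multiset of values. Hence $\mathbb E[g(V_k^{\mathcal O})/\text{mean}\mid\cdot]=1$ whenever the denominator is positive, with $0/0$ taken as $0$. This gives $\mathbb E[W_k^{\mathcal O}\mid k\in\widehat{\mathcal S},\Psi_k']\le1$.

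Finally, $\Psi_k$ is a coarsening of $\Psi_k'$, since $|\widehat{\mathcal S}_{-k}|$ is a function of $\widehat{\mathcal S}$. The tower property then yields the stated bound.
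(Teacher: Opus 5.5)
Your invariance argument matches the paper's. The conditional-validity step, however, rests on a false claim. You assert that, given $\widehat{\mathcal S}$, the path $\mathcal P_k$ is fixed, so that $A\subseteq C$ holds ``trivially'' and hence $\pi_{k,b}\in\widetilde\Pi_k^{\rm nar}$ if and only if $\pi_{k,b}\circ\sigma\in A$.

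The reported set does not determine the order of the recursion when two reported intervals of equal minimal length compete on a common segment. The narrowest-first rule then breaks the tie by the statistic, and the local permutation can change that statistic. Here is a concrete case:
\begin{itemize}
\item Let $I_a$ and $I_k$ be disjoint, of equal length, and the only over-threshold candidates, with $I_k$ to the right of $I_a$ and $\widehat Q_a>\widehat Q_k>\lambda$.
\item The original recursion picks $a$ on $[1,T]$ and then $k$ on $[r_a+1,T]$.
\item Suppose a permutation of $I_k^{\mathcal O}$ gives $\widehat Q_k^\pi>\widehat Q_a$ and keeps every other candidate below threshold. Then the permuted recursion picks $k$ on $[1,T]$ and then $a$ on $[1,\ell_k-1]$.
\item The reported set $\{a,k\}$ is unchanged, so this arrangement lies in $A$. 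But $\hat k^\pi([1,T])\neq\hat k([1,T])$, so $\pi\notin\widetilde\Pi_k^{\rm nar}$.
\end{itemize}
Equal lengths are the norm for seeded pools, not an edge case; in the S\&P example every candidate spans 28 days. So, conditional on $(\Psi_k,\widehat{\mathcal S})$, the acceptance region is not a fixed subset of $A$. It is the path class of the realized $\sigma$, which varies with $\sigma$. Your description of the accepted draws as uniform proposals filtered to $A$ is therefore wrong, and the swap argument built on it is unjustified as written.

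The missing ingredient is to condition on the path itself, as the paper does. Let $\Theta_k$ be the ordered list of search segments and chosen indices leading to $I_k$. Show that $\widetilde\Pi_k^{\rm nar}$ is exactly the set of relative permutations preserving $\Theta_k$: your induction gives one inclusion, and the definition of the path check gives the other. Under $H_{0k}$, the observed arrangement is uniform on its $\Theta_k$-class. Given which proposals are accepted, the accepted proposals are i.i.d.\ uniform on that same class, independently of the observed arrangement. The soft-rank symmetrization then yields $\mathbb E[W_k^{\mathcal O}\mid k\in\widehat{\mathcal S},\Psi_k,\Theta_k]\le1$. The tower property returns you to $\Psi_k$, or to your $\Psi_k'$, since by the invariance part $\Theta_k$ determines $\widehat{\mathcal S}$.

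Alternatively, you can keep $\Psi_k'$ and observe that the joint law of the observed and accepted arrangements is a mixture, over path classes, of i.i.d.\ uniform draws within a class, and is hence exchangeable. Either way, the argument needs this class structure; the identity $C=A$ is not available.
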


The accepted permutations preserve the observed path and its detection results. Conditioning on this information gives conditional exchangeability, and averaging back to $\Psi_k$ yields the
stated validity; see Appendix~\ref{app:proof_narrow_cond_invariance}.

\begin{proposition}
\label{prop:narrowest_cost}
Let \(\mathfrak C_{\mathrm{nar}}\) be the cost of one odd-side
narrowest-first interval search. For scores whose evaluation cost is linear in
the interval length, the narrowest-first implementation of \texttt{POIS+} has cost $\mathfrak C_{\mathrm{POIS+}}^{\mathrm{nar}}
=
\mathfrak C_{\mathrm{nar}}
+
O\left\{
BT+
B'\sum_{I_j\in\mathcal I} |I_j^{\mathcal O}|
+
|\widehat{\mathcal S}|\log|\widehat{\mathcal S}|
\right\}.$
\end{proposition}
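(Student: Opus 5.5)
We account for the cost of \texttt{POIS+} in three pieces: the even-side permutation $p$-values, the odd-side weights $W_k^{\mathcal O}$, and the final weighted BH step. The detector itself is run once on $X_{\mathcal O}$ at cost $\mathfrak C_{\mathrm{nar}}$. Sorting the $|\widehat{\mathcal S}|$ weighted $p$-values for BH costs $O(|\widehat{\mathcal S}|\log|\widehat{\mathcal S}|)$.

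\textbf{Even side.} For each $k\in\widehat{\mathcal S}$ we draw $B$ permutations of $I_k^{\mathcal E}$ and evaluate $v_k$, which costs $O(|I_k^{\mathcal E}|)$ by the linear-cost hypothesis. The total is $O(B\sum_{k\in\widehat{\mathcal S}}|I_k^{\mathcal E}|)$. The narrowest-first recursion produces pairwise disjoint intervals, so this sum is at most $T$, giving $O(BT)$.

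\textbf{Odd side.} Fix $k\in\widehat{\mathcal S}$ and one draw $\pi_{k,b}$. Generating the permutation and evaluating $V_{k,b}^{\mathcal O}$ costs $O(|I_k^{\mathcal O}|)$. To decide whether $\pi_{k,b}\in\widetilde\Pi_k^{\rm nar}$, we recompute $\widehat Q_j^\pi$ only for $j$ with $I_j^{\mathcal O}\cap I_k^{\mathcal O}\neq\varnothing$, since every other statistic is unchanged. This costs $O\bigl(\sum_{j:\,I_j^{\mathcal O}\cap I_k^{\mathcal O}\neq\varnothing}|I_j^{\mathcal O}|\bigr)$.

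We then compare $\hat k^\pi(G)$ with $\hat k(G)$ along $\mathcal P_k$. We do not rescan each $G\in\mathcal P_k$. Instead, during the single detector run we store, for every segment $G$ on the path, the chosen index and its rank key (length, statistic). Unchanged intervals keep their keys, so $\hat k^\pi(G)=\hat k(G)$ holds if and only if no recomputed interval contained in $G$ now beats the stored winner, and the stored winner, if it intersects $I_k^{\mathcal O}$, still exceeds $\lambda$. We check this in a single pass over the recomputed intervals. Each recomputed $j$ is tested against the segments of $\mathcal P_k$ containing it; these segments are nested, so only the innermost one matters. This makes the path check also $O\bigl(\sum_{j:\,I_j^{\mathcal O}\cap I_k^{\mathcal O}\neq\varnothing}|I_j^{\mathcal O}|\bigr)$, with the bookkeeping absorbed into $\mathfrak C_{\mathrm{nar}}$.

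\textbf{Summing over $k$.} Over $b\in[B']$ and $k\in\widehat{\mathcal S}$, the odd side costs
\[
B'\sum_{k\in\widehat{\mathcal S}}\sum_{j:\,I_j^{\mathcal O}\cap I_k^{\mathcal O}\neq\varnothing}|I_j^{\mathcal O}|.
\]
Because the reported intervals are disjoint, we need that each pool interval $I_j$ is counted a bounded number of times. This is \textbf{the main obstacle}: a long $I_j$ can meet several reported intervals, so the bound does not follow from disjointness alone.

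I would try first to exploit the narrowest-first structure. An interval $I_j$ that meets two reported intervals $I_k,I_{k'}$ lies in a common ancestor segment only until the split separates them, and it then straddles a removed reported interval. I would therefore try to charge each such recomputation to a check against a single segment. If that fails, I would keep the charging per $(j,k)$ pair and absorb the extra factor into the $O(\cdot)$, since the precomputed prefix structures give per-$j$ cost linear in $|I_j^{\mathcal O}|$ across all $k$.

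Finally, the weights $W_k^{\mathcal O}$ in \eqref{eq:post_selection_Evalue} are then formed with $O(B')$ additional arithmetic per $k$. Adding the three pieces gives the stated bound.
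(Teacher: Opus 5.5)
Your even-side accounting, the sorting cost, and the single-pass path check are fine. The innermost-segment observation is equivalent to keeping the minimum of the upper bounds on each affected score. The proposal still does not prove the stated bound, because the odd-side charging step is left open and your fallback fails. You recompute \(\widehat Q_j^\pi\) for every pool interval with \(I_j^{\mathcal O}\cap I_k^{\mathcal O}\neq\varnothing\), so each pair \((k,b)\) pays \(\sum_{j:\,I_j^{\mathcal O}\cap I_k^{\mathcal O}\neq\varnothing}|I_j^{\mathcal O}|\). A long pool interval such as \([1,T]\) meets every reported interval, so it alone contributes order \(|\widehat{\mathcal S}|\,T\) per permutation index. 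That multiplicity factor is unbounded, so it cannot be absorbed into the \(O(\cdot)\), and for seeded pools you would not recover \(O(B'T\log T)\). The appeal to ``precomputed prefix structures'' needs incremental-update properties of the detector statistic. The proposition does not assume these; it only assumes evaluation cost linear in the interval length.

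The missing idea is that longer candidates never need recomputation. On a path segment \(G\in\mathcal P_k\), a candidate strictly longer than \(I_{\hat k(G)}\) cannot displace the narrowest-first choice, whatever its score. So only \(I_j\subseteq G\) with \(|I_j|\le|I_{\hat k(G)}|\) and \(I_j^{\mathcal O}\cap I_k^{\mathcal O}\neq\varnothing\) must be recomputed. Since \(I_k\) is itself an over-threshold candidate contained in every \(G\in\mathcal P_k\), the narrowest-first rule gives \(|I_{\hat k(G)}|\le|I_k|\). Hence every recomputed pair satisfies \(|I_j|\le|I_k|\) and \(I_j\cap I_k\neq\varnothing\). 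For a fixed \(I_j\), at most two pairwise disjoint reported intervals of length at least \(|I_j|\) can meet it: the middle one of any three would lie strictly inside \(I_j\), contradicting its length. This bounds the total recomputation by \(2\sum_{I_j\in\mathcal I}|I_j^{\mathcal O}|\) per permutation index, which is the charging bound you were looking for.

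There is also a smaller gap in your acceptance criterion on the final segment \(G_k\). There the chosen index is \(k\) itself, and its score changes under the permutation. An unchanged candidate of length \(|I_k|\) with score above \(\widehat Q_k^\pi\) would take over, and recomputed competitors must be compared with \(\widehat Q_k^\pi\) rather than with the stored key. The fix is to store the best unchanged same-length competitor in \(G_k\) during the initial search and compare it with \(\widehat Q_k^\pi\).
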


The term \(O\{B'\sum_{I_j\in\mathcal I}|I_j^{\mathcal O}|\}\) comes from the detection-preserving checks.
Using the affected-competitor sets and cached comparison bounds described in Appendix~\ref{app:computational_cost}, each check evaluates each affected score only once and performs a linear number of comparisons in the number of affected competitors. The cumulative bound \eqref{eq:narrow_charging_bound} then gives the
stated cost. In particular, seeded intervals \citep{kovacs2023seeded} imply \(\sum_{I_j\in\mathcal I}|I_j^{\mathcal O}|
\le \sum_{I_j\in\mathcal I}|I_j|=O(T\log T)\), so the odd-side checking cost is \(O(B'T\log T)\).

\section{Power Analysis}
\label{subsec:asymp-TPR}

This section analyzes the asymptotic power of \texttt{POIS} and the effect of odd-side weighting on the power of \texttt{POIS+}. For a selected index set $\mathcal R\subseteq\widehat{\mathcal S}$, define the true positive proportion (TPP) and the true positive rate (TPR) as
\[
    \operatorname{TPP}(\mathcal R):=\frac{1}{K^*}\sum_{j=1}^{K^*}\mathbbm1\left\{\exists k\in\mathcal R:\tau_j^*\in I_k\right\},\qquad \operatorname{TPR}(\mathcal R)=\mathbb E\{\operatorname{TPP}(\mathcal R)\}.
\]
Thus TPR is the expected proportion of true change-points covered by at least one selected interval. Given the prespecified interval pool $\mathcal I=\{I_k:k\in[K]\}$, let $I_j^*:=I_{k_j^*}\in\mathcal I$ be a target interval containing $\tau_j^*$ for $j\in[K^*]$. We assume that the target indices $k_1^*,\ldots,k_{K^*}^*$ are distinct. Throughout this section, suppose $K^*\to\infty$ as $T\to\infty$.

\subsection{Asymptotic power of \texttt{POIS}}
\label{sec:TPR}

To analyze the power of \texttt{POIS}, we impose the following regularity conditions on the output of the detector and the even-side permutation scores.

\begin{assumption}
\label{ass:power_main}
As $T\to\infty$, the following conditions hold: (i) The reported index set satisfies $\mathbb P(\{k_1^*,\ldots,k_{K^*}^*\}\subseteq\widehat{\mathcal S})=1-o(1)$. For some fixed constant $C_0\ge1$, $\mathbb P(|\widehat{\mathcal S}|\le C_0K^*)=1-o(1)$ and $B+1>C_0/\alpha$; 
(ii) $(K^*)^{-1}\sum_{j=1}^{K^*}q_{j,T}^{\mathcal E}=o(1)$, where $q_{j,T}^{\mathcal E}:=\mathbb P\left(\max_{1\le b\le B}V_{k_j^*,b}^{\mathcal E}\ge V_{k_j^*}^{\mathcal E}\right)$. 

\end{assumption}

Condition~(i) requires the detector $\mathbb A$ to report the target intervals with probability tending to one, while keeping the number of tested intervals proportional to $K^*$. A related requirement is also imposed in \citet{chen2023data}. The restriction on the number of permutations $B$ ensures that the smallest attainable permutation $p$-value (i.e., $(B+1)^{-1}$) lies strictly below $\alpha/C_0$. Condition~(ii) requires the observed even-side scores to separate from their permutation counterparts on average over the target intervals. Similar signal-strength requirements appear in \citet[Assumption~3.3]{fryzlewicz2014wild} and \citet[Theorem~1]{baranowski2019narrowest}.

\begin{theorem}
\label{thm:pois_power}
Suppose Assumptions~\ref{assum:disjoint_intervals} and \ref{ass:power_main} hold. Then the rejection set $\widehat{\mathcal R}$ of \texttt{POIS} in \eqref{eq:BH_rejected_set} satisfies $\operatorname{TPR}(\widehat{\mathcal R})=1-o(1)$.
\end{theorem}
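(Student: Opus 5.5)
We will lower bound the true positive proportion by counting target intervals whose even-side permutation $p$-value attains its minimum value $(B+1)^{-1}$. Define the good event
\[
\mathcal G:=\bigl\{\{k_1^*,\ldots,k_{K^*}^*\}\subseteq\widehat{\mathcal S}\bigr\}\cap\bigl\{|\widehat{\mathcal S}|\le C_0K^*\bigr\},
\]
which by Assumption~\ref{ass:power_main}(i) has probability $1-o(1)$. Also set $M:=\{j\in[K^*]: V^{\mathcal E}_{k_j^*}>\max_{b\le B}V^{\mathcal E}_{k_j^*,b}\}$, the set of target intervals whose observed even-side score strictly beats every permuted score. For $j\in M$ we have $P^{\mathcal E}_{k_j^*}=(B+1)^{-1}$. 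By Assumption~\ref{ass:power_main}(ii),
\[
\mathbb E\bigl[(K^*)^{-1}|M^c|\bigr]=(K^*)^{-1}\sum_j q^{\mathcal E}_{j,T}=o(1),
\]
so $|M|/K^*\to1$ in $L^1$. One subtlety is that $k_j^*$ may not be reported, in which case $P^{\mathcal E}_{k_j^*}$ is undefined. I would define the permutation scores for every pool index using independent draws; on $\mathcal G$ these coincide with the ones actually used.

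The key step is a deterministic BH argument on $\mathcal G\cap\{|M|\ge K^*/2\}$ (say). On this event at least $|M|$ reported $p$-values equal $(B+1)^{-1}$. Write $m=|\widehat{\mathcal S}|\le C_0K^*$. We would like $i=|M|$ to satisfy $P_{(i)}^{\mathcal E}\le i\alpha/m$, i.e. $(B+1)^{-1}\le |M|\alpha/m$. This holds whenever $|M|\ge m/((B+1)\alpha)$. Since $(B+1)\alpha>C_0$, it suffices that $|M|\ge K^*\cdot C_0/((B+1)\alpha)=:cK^*$ with $c<1$. On $\{|M|\ge cK^*\}$ we then get $\hat r\ge |M|$ and $P^{\mathcal E}_{(\hat r)}\ge (B+1)^{-1}$. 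Hence every $k_j^*$ with $j\in M$ is rejected, and because $\tau_j^*\in I_j^*$, $\mathrm{TPP}(\widehat{\mathcal R})\ge |M|/K^*$.

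Combining, I would write
\[
\mathrm{TPR}\ge \mathbb E\bigl[(|M|/K^*)\mathbbm 1_{\mathcal G\cap\{|M|\ge cK^*\}}\bigr]\ge \mathbb E[|M|/K^*]-\mathbb P(\mathcal G^c)-\mathbb P(|M|<cK^*).
\]
The first term is $1-o(1)$. By Markov applied to $|M^c|/K^*$, $\mathbb P(|M|<cK^*)\le \mathbb E[|M^c|/K^*]/(1-c)=o(1)$, since $c$ is a fixed constant below one (if $B$ varies with $T$, $c$ only gets smaller). Hence $\mathrm{TPR}=1-o(1)$.

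The main obstacle I expect is the handling of $c$ and ties. The constraint $B+1>C_0/\alpha$ is exactly what makes $c<1$, and I must check the BH step-up uses $P_{(i)}\le i\alpha/m$ with $m$ random but bounded on $\mathcal G$. Disjointness (Assumption~\ref{assum:disjoint_intervals}) is not really needed for power beyond ensuring the procedure is as defined; distinctness of the $k_j^*$ ensures $|M|$ counts distinct reported $p$-values.
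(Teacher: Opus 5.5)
Your proposal is correct and follows essentially the paper's argument: work on the event that all targets are reported with $|\widehat{\mathcal S}|\le C_0K^*$, apply Markov's inequality to the number of targets whose even-side $p$-value fails to attain $(B+1)^{-1}$, and check that the targets attaining it form a BH-feasible set because $C_0/\{\alpha(B+1)\}<1$. The paper proves $\operatorname{TPP}\to1$ in probability and then uses bounded convergence, whereas you bound the expectation directly; you are also right that disjointness is only needed for the reverse (equality) direction, which is used later in the \texttt{POIS+} comparison.

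One loose point is your parenthetical about $B$ varying with $T$. The uniform bound $c\le C_0/\{\alpha(\lfloor C_0/\alpha\rfloor+1)\}<1$ comes from $B+1$ being an integer strictly exceeding $C_0/\alpha$, not from any monotonicity in $B$.
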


Theorem~\ref{thm:pois_power} establishes asymptotically full expected coverage of \texttt{POIS} with a fixed even-side permutation number $B$.
Under Condition~(ii), most target $p$-values attain the smallest possible value $1/(B+1)$. The reported size bound in Condition~(i) then makes these targets numerous enough to pass the BH threshold.




\subsection{Power comparison of \texttt{POIS} and \texttt{POIS+}}
\label{subsec:poisplus_power_transfer}

We next compare \texttt{POIS+} with  \texttt{POIS} on the same reported family, without imposing Assumption~\ref{ass:power_main}\textup{(ii)}. The BH rejection sets based on $P_k^{\mathcal E}$ and $P_k^+=(P_k^{\mathcal E}/W_k^{\mathcal O})\wedge1$ are denoted by $\widehat{\mathcal R}$ and $\widehat{\mathcal R}^{+}$, respectively. Since weighting changes individual $p$-values as well as the BH cutoff, the power comparison requires joint control of the even-side and odd-side permutation scores.

For each target interval $I_j^*$, let $\mathcal A_j:=\{b\in[B']:\pi_{k_j^*,b}\in\widetilde\Pi_{k_j^*}\}$ index its accepted sampled permutations if it is reported, and set $\mathcal A_j=\varnothing$ otherwise. 
Define $\mathcal J_T=\left\{j\in[K^*]:k_j^*\in\widehat{\mathcal S},\ V_{k_j^*}^{\mathcal E}>\max_{1\le b\le B}V_{k_j^*,b}^{\mathcal E}\right\}$, which contains the reported targets whose
even-side $p$-values attain the smallest possible value $1/(B+1)$.

\begin{assumption}
\label{ass:odd_nondeflation}
(i) There exists a fixed $\gamma>0$ such that the transformation function $g$ in \eqref{eq:post_selection_Evalue} satisfies $g(0)=0$, $g(v)>0$ for $v>0$, and $g(tv)\ge t^\gamma g(v)$ for $v>0$ and $t>1$. (ii) For this $\gamma$, there exist fixed $\delta,d\in(0,1)$ satisfying $\delta^\gamma+2\alpha(1-C_0^{-1})<1$ and $d>C_0/\{\alpha(B+1)\}$. Assume $\mathbb P\{|\mathcal J_T|<dK^*\}+\frac1{K^*}\sum_{j=1}^{K^*}q_{j,T}^{+}=o(1)$, where $q_{j,T}^{+}=\mathbb P\!\left[\left\{\max_{b\in\mathcal A_j}V_{k_j^*,b}^{\mathcal O}\ge\delta V_{k_j^*}^{\mathcal O}\right\}\cup\{j\notin\mathcal J_T,\ |\mathcal A_j|=0\}\right]$.


\end{assumption}

Every fixed power transform $g(v)=v^\gamma$, $\gamma>0$, satisfies Condition~(i). Condition~(ii) requires at least $dK^*$ targets with strictly separated even-side scores with probability tending to one. Apart from $o_p(K^*)$ reported targets, it also requires the observed odd-side score to be positive and dominate its accepted permutation scores, and those outside $\mathcal J_T$ have at least one accepted proposal. 
These requirements distinguish two cases in the power comparison. On a target satisfying the joint score condition,
a nonempty accepted set gives $W_{k_j^*}^{\mathcal O}\ge2/(1+\delta^\gamma)>1$.
An empty accepted set instead gives weight one and is allowed only when $j\in\mathcal J_T$. Typically, for $\alpha=0.1$, choosing $\delta^\gamma\le1/2$ ensures $\delta^\gamma+2\alpha(1-C_0^{-1})<1$ for every $C_0\ge1$. Appendices~\ref{sec:odd_nondeflation_score_conditions} and~\ref{app:power:acceptance} give sufficient conditions for (ii). 

\begin{theorem}
\label{thm:poisplus_BH_comparison}
Suppose Assumption~\ref{assum:disjoint_intervals}, Assumption~\ref{ass:power_main} (i), and Assumption~\ref{ass:odd_nondeflation} hold. Then $\liminf_{T\to\infty}\{\operatorname{TPR}(\widehat{\mathcal R}^{+})-\operatorname{TPR}(\widehat{\mathcal R})\}\ge0.$
\end{theorem}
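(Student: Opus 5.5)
Since $\operatorname{TPR}(\widehat{\mathcal R})\le 1$, it suffices to show that $\operatorname{TPR}(\widehat{\mathcal R}^+)\ge \mathbb E[\operatorname{TPP}(\widehat{\mathcal R})]-o(1)$. I would obtain this by showing that, with probability $1-o(1)$, all but $o_p(K^*)$ of the targets rejected by \texttt{POIS} are also rejected by \texttt{POIS+}. The plan is to construct a "good event" $\mathcal G_T$ and a set of "good targets" $\mathcal J_T^+\subseteq[K^*]$ such that $\mathbb P(\mathcal G_T)=1-o(1)$ and $\mathbb E|[K^*]\setminus \mathcal J_T^+|/K^*=o(1)$. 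On $\mathcal G_T$, every good target rejected by BH on $\{P_k^{\mathcal E}\}$ should also be rejected by weighted BH on $\{P_k^+\}$. The good targets are the reported targets $j$ whose odd-side event in the definition of $q_{j,T}^+$ fails, that is, $\max_{b\in\mathcal A_j}V^{\mathcal O}_{k_j^*,b}<\delta V^{\mathcal O}_{k_j^*}$, and either $|\mathcal A_j|\ge1$ or $j\in\mathcal J_T$. By Markov's inequality and Assumption~\ref{ass:odd_nondeflation}(ii), the number of bad targets is $o_p(K^*)$.

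\textbf{Step 1: weights of good targets.} Let $j$ be a good target. If $|\mathcal A_j|=0$, then $W=1$ directly from \eqref{eq:post_selection_Evalue}. Otherwise $V^{\mathcal O}_{k_j^*}>0$, and every accepted score is below $\delta V^{\mathcal O}$. Monotonicity of $g$ is not assumed, so I would bound $g(V_{k,b})$ by combining $g(tv)\ge t^\gamma g(v)$ with $t=V^{\mathcal O}/V_{k,b}>1/\delta$, which gives $g(V_{k,b})\le \delta^\gamma g(V^{\mathcal O})$; if $V_{k,b}=0$, then $g(0)=0$. Writing $m=|\mathcal A_j|$, this yields $W\ge (1+m)/(1+m\delta^\gamma)\ge 2/(1+\delta^\gamma)>1$. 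So every good target has $W\ge1$, and good targets outside $\mathcal J_T$ have $W\ge 2/(1+\delta^\gamma)$.

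\textbf{Step 2: comparison of BH cutoffs.} This is the main obstacle. Weighting can shrink non-target $p$-values and inflate others. I must show that the weighted BH cutoff $\hat r^+\alpha/|\widehat{\mathcal S}|$ does not fall below the level needed by the good targets rejected by \texttt{POIS}. On the event $\{|\mathcal J_T|\ge dK^*\}\cap\{|\widehat{\mathcal S}|\le C_0K^*\}$, all $j\in\mathcal J_T$ that are good have $P^+\le P^{\mathcal E}=1/(B+1)$. Their number is $|\mathcal J_T|-o_p(K^*)\ge dK^*/2$, and $d>C_0/\{\alpha(B+1)\}$. Hence $1/(B+1)\le |\mathcal J_T^{good}|\alpha/|\widehat{\mathcal S}|$ (after adjusting $d$ by the $o_p$ slack), so $\hat r^+\ge|\mathcal J_T^{good}|$ and all good targets in $\mathcal J_T$ are rejected by \texttt{POIS+}.

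For a good target $j\notin\mathcal J_T$ rejected by \texttt{POIS}, we have $P^{\mathcal E}\le \hat r\alpha/|\widehat{\mathcal S}|$. I would show $P^+\le P^{\mathcal E}(1+\delta^\gamma)/2\le \hat r^+\alpha/|\widehat{\mathcal S}|$ by a counting argument. Weighted BH rejects every index with $P^+\le \hat r\alpha(1+\delta^\gamma)/(2|\widehat{\mathcal S}|)$ as soon as the count of such indices is at least $\hat r(1+\delta^\gamma)/2$. The \texttt{POIS} rejections that might be lost are the bad targets ($o_p(K^*)$) and non-target indices, of which there are at most $|\widehat{\mathcal S}|-K^*\le(C_0-1)K^*$. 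The condition $\delta^\gamma+2\alpha(1-C_0^{-1})<1$ is exactly what guarantees enough retained rejections: among the $\hat r$ \texttt{POIS} rejections, the fraction of non-targets is bounded via $\hat r\ge |\mathcal J_T|$. I would try first to make this self-consistency inequality work directly, using the step-up property (BH rejects at level $r$ whenever at least $r$ $p$-values lie below $r\alpha/m$).

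\textbf{Step 3: conclusion.} With the rejection sets compared in this way, $\operatorname{TPP}(\widehat{\mathcal R}^+)\ge \operatorname{TPP}(\widehat{\mathcal R})-|\text{bad}|/K^*$ on $\mathcal G_T$. Taking expectations, and using $\operatorname{TPP}\le1$ off $\mathcal G_T$, gives the $\liminf$ statement.
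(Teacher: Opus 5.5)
Your Steps 1 and 3 match the paper's proof. So does the first half of Step 2: the good targets in $\mathcal J_T$ form a feasible set, once the $o_p(K^*)$ bad targets are absorbed into the slack $d-C_0/\{\alpha(B+1)\}>0$. The gap is in the second half of Step 2.

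\textbf{What the counting argument needs.} To keep a good discovery outside $\mathcal J_T$ via $P^+\le P^{\mathcal E}(1+\delta^\gamma)/2$, you need at least $\hat r(1+\delta^\gamma)/2$ good true discoveries among the $\hat r$ \texttt{POIS} rejections. Equivalently, the number of null rejections must be at most about $\hat r(1-\delta^\gamma)/2$. The assumption $\delta^\gamma+2\alpha(1-C_0^{-1})<1$ says $\alpha(1-C_0^{-1})<(1-\delta^\gamma)/2$, so what you must show is $|\widehat{\mathcal R}\cap\mathcal H_0|\le\alpha(1-C_0^{-1})|\widehat{\mathcal R}|+o_p(K^*)$.

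\textbf{Why your argument does not give it.} The facts you invoke are at most $(C_0-1)K^*$ non-targets and $\hat r\ge|\mathcal J_T|\ge dK^*$. These bound the null fraction only by $(C_0-1)/d$, which can exceed one and contains no factor $\alpha$. No deterministic argument can work, because the null $p$-values are constrained only in distribution. For example, take $m=2K^*$, let all null $P^{\mathcal E}_k$ equal $1/(B+1)$, and let all null weights be zero. Then \texttt{POIS} can reject everything at cutoff $\alpha$, but weighted BH loses the nulls, so its cutoff is at most $\alpha K^*/m=\alpha/2$. This is below $\alpha(1+\delta^\gamma)/2$, so good targets with $P^{\mathcal E}$ near $\alpha$ are lost. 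Nor does the in-expectation FDR bound plus Markov suffice. It only bounds $\mathbb P\{\mathrm{FDP}>(1-\delta^\gamma)/2\}$ by a constant below one, not by $o(1)$.

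\textbf{The missing ingredient.} You need a concentration bound on null $p$-values that holds uniformly over all BH critical values. Condition on the odd side and all odd-side proposal draws. Disjointness then makes the null even-side $p$-values independent and superuniform. Write $N_0(t)$ for the number of reported nulls with $P_k^{\mathcal E}\le t$, and $m_0=|\widehat{\mathcal S}\cap\mathcal H_0|$. Hoeffding's inequality plus a union bound over $r\le m\le C_0K^*$ gives $\max_{r}\{N_0(\alpha r/m)-\alpha m_0r/m\}_+=o_p(K^*)$. The union costs only a factor $C_0K^*$ against $e^{-2\varepsilon^2K^*/C_0}$. Uniformity over $r$ is essential, because the BH level depends on the null $p$-values themselves.

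\textbf{How this closes Step 2.} Combine the bound with the BH identity $|\widehat{\mathcal R}\cap\mathcal H_0|=N_0(\alpha|\widehat{\mathcal R}|/m)$ and with $m_0=m-K^*$ on the event that all targets are reported. This gives the required null bound. Since $|\widehat{\mathcal R}|\ge dK^*$, the $o_p(K^*)$ errors are negligible relative to $|\widehat{\mathcal R}|$.

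\textbf{A smaller bookkeeping point.} Good targets in $\mathcal J_T$ with $\mathcal A_j=\varnothing$ have $P^+=1/(B+1)$. This need not lie below $\alpha\hat r(1+\delta^\gamma)/(2m)$, so your single-threshold count can miss them. Apply the feasibility argument instead to the union of these $\mathcal J_T$ targets with the good nonempty-acceptance discoveries. Each member then satisfies one of the two thresholds, and both thresholds are at most $\alpha$ times the union's size over $m$.
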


Theorem \ref{thm:poisplus_BH_comparison} shows that \texttt{POIS+} has asymptotically no lower expected coverage proportion of true change-points than \texttt{POIS}.

\section{Simulation Studies}
\label{sec:simulations}

In this section, we examine the performance of \texttt{POIS} and \texttt{POIS+} in three simulation studies: distributional changes with different detectors, high-dimensional mean changes, and changes in profile data. We report empirical interval-level FDR and TPR over 500 repetitions. The nominal level is $\alpha=0.1$, with $B=B'=499$ permutations and $g(v)=v^{12}$ in \eqref{eq:post_selection_Evalue}.
The number of true change-points is set as $K^*=20$ in Section~\ref{subsec:sim_midpoint_detectors} and $K^*=10$ in Sections~\ref{subsec:sim_hd_mean_changes} and~\ref{subsec:sim_profile_pois}. 

We consider three related procedures as benchmarks, covering FDR-controlled change-point selection methods, and FWER-controlled post-detection inference. \texttt{MOPS} \citep{chen2023data} combines order-preserving sample splitting with mirror statistics to obtain asymptotic FDR control for detected change-points.
\texttt{SPD-OSE} \citep{wu2025multiple} uses smooth profile decomposition and mirror statistics for FDR-controlled change-point selection in profile data.
\texttt{ART} \citep{cui2026art} uses rank-based permutation calibration to provide finite-sample FWER control.
These procedures differ in their inferential targets and
calibration assumptions. In each experiment, \texttt{POIS} and \texttt{POIS+} share
the reported intervals and even-side $p$-values, so their
comparison isolates the effect of detection-conditional weighting.



\subsection{Distributional Changes with Different Detectors}
\label{subsec:sim_midpoint_detectors}

We first examine the framework with different detection procedures. We generate \(T=3000\) independent observations with \(K^*=20\) change-points at fixed, unequally spaced locations. Consecutive segments alternate between
\[
F_A=N(0,1),\qquad
F_B=\tfrac12 N\!\left(\delta-r\rho,\,r^2(1-\rho^2)\right)
    +\tfrac12 N\!\left(\delta+r\rho,\,r^2(1-\rho^2)\right),
\]
where the second argument of \(N\) denotes the variance. We fix \(r=1.25\) and set three scenarios \texttt{Weak}, \texttt{Strong}, and \texttt{Shift} with \((\delta,\rho)=(0.4,0.96)\), \((0.4,0.99)\), and \((0.6,0.99)\), respectively. In these three scenarios, the two states differ in mean,
variance, and shape. \texttt{Weak} and \texttt{Strong} differ only in $\rho$, which alters the shape of $F_B$ while keeping its mean and variance fixed. 
We also consider the \texttt{Shape} scenario with \((\delta,r,\rho)=(0,1,0.99)\), which isolates a change in distributional shape with $F_A$ and $F_B$ sharing the same mean and variance. 

We use PELT \citep{killick2012optimal} with an empirical CDF feature cost (\texttt{PELT-CDF}) or a Gaussian mean--variance cost (\texttt{Gaussian-MV}), Gaussian-kernel segmentation (\texttt{KCP}) \citep{arlot2019kernel}, and energy-based agglomeration (\texttt{E-Agglo}) \citep{james2015ecp}. Each detector is applied to the odd-indexed observations, and its estimated locations define disjoint midpoint intervals as in Example~\ref{exam:midpoint_pool}. For \texttt{POIS} and \texttt{POIS+}, the scoring function $v_k(\cdot)$ is an MMD scan statistic \citep{gretton2012kernel} using an average of Gaussian kernels with bandwidths determined from the odd-side data. \texttt{MOPS} applies the modified mirror statistic within the same intervals, using mean--variance score features. Detector settings are fixed across scenarios. Appendix~\ref{app:midpoint_detector_details} gives the detector settings and score definitions.

\begin{figure}[h]
\centering
\includegraphics[width=\textwidth]{
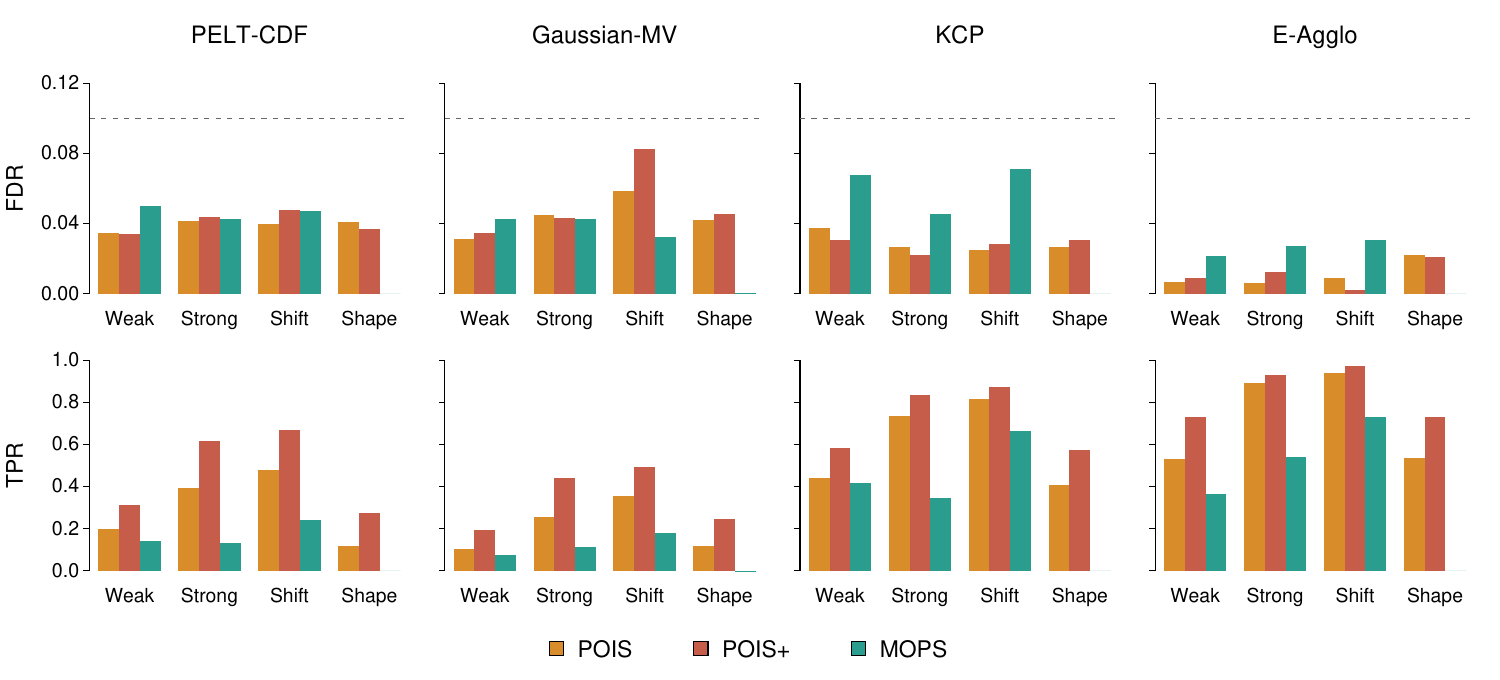}
\caption{Empirical interval-level FDR (top) and change-point coverage TPR (bottom) for \texttt{Weak}, \texttt{Strong}, \texttt{Shift}, and \texttt{Shape}. Columns correspond to \texttt{PELT-CDF}, \texttt{Gaussian-MV}, \texttt{KCP}, and \texttt{E-Agglo}. The dashed line marks \(\alpha=0.1\).}
\label{fig:sim_midpoint_detectors}
\end{figure}

Figure~\ref{fig:sim_midpoint_detectors} shows that \texttt{POIS} and \texttt{POIS+} keep empirical FDR below the nominal level for all four detectors.
\texttt{POIS+} attains the highest TPR in every scenario for each detector. Its gains over \texttt{POIS} are particularly pronounced for \texttt{PELT-CDF} and \texttt{Gaussian-MV}. 
For \texttt{KCP} and \texttt{E-Agglo}, 
both methods report smaller proportions of intervals containing no true change-point, so \texttt{POIS} already has high TPR after selection. 
\texttt{MOPS} also has low empirical FDR but generally covers fewer changes. Since \texttt{MOPS} is developed for parametric change-point models, the mean--variance scores used by \texttt{MOPS} provide no signal for the change, and its TPR is nearly zero in the \texttt{Shape} scenario. The results across the four detectors and four scenarios demonstrate that \texttt{POIS} and \texttt{POIS+} are algorithm-agnostic and have broader use. 

\subsection{Mean Changes in High-dimensional Data}
\label{subsec:sim_hd_mean_changes}

We next consider sparse mean changes under different noise distributions. 
We generate \(T\in\{2000,3000\}\) independent observations from \(X_t=\theta_t+\sigma\varepsilon_t\in\mathbb R^{50}\), with $K^*=10$  equally spaced change-points in the mean sequence $(\theta_t)_{1\le t \le T}$. 
Each mean jump affects ten randomly chosen coordinates independently. We consider four distributions of noise $\varepsilon_t$: Gaussian, Student \(t_4\), Chi-square \(\chi^2_3\), or Lognormal\((0,0.5^2)\). The noise coordinates are independent and standardized to mean zero and variance one. We define the signal-to-noise ratio (SNR) by \(\mathrm{SNR}=\|\theta_{\tau_k^*+1}-\theta_{\tau_k^*}\|_2 /\sigma\), and vary \(\sigma\) to obtain \(\mathrm{SNR}\in\{2.5,3,3.5\}\).

In this simulation, we use an INSPECT-based detector \citep{wang2018high} over a data-independent pool of seeded intervals. The detector reports the shortest interval exceeding the detection threshold and continues searching on either side of it, producing disjoint intervals indexed by $\widehat{\mathcal S}$.
For \texttt{POIS}, \texttt{POIS+}, and \texttt{MOPS}, the detector is run on the odd-indexed observations to obtain the reported intervals. For \texttt{POIS} and \texttt{POIS+}, the even-side score $v_k(\cdot)$ is a projected 20\% trimmed-mean contrast with the sparse direction estimated from the odd-side data and held fixed. 
\texttt{ART} uses the full sample detector and permutation calibration over the entire seeded pool. Appendix~\ref{app:hd_implementation} gives implementation details.

\begin{table}[h]
\centering
\begin{minipage}{\linewidth}
\setstretch{1}
\caption{Empirical interval-level FDR and change-point coverage TPR (both in percent) for high-dimensional mean changes with $T=2000$ and $T=3000$. Boldface marks the largest TPR within each setting.}
\label{tab:hd_mean_results}
\vspace{3pt}
\centering
\fontsize{10}{12}\selectfont
\setlength{\tabcolsep}{3pt}
\renewcommand{\arraystretch}{1}
\begin{tabular*}{\linewidth}{@{\extracolsep{\fill}}l*{12}{r}@{}}
\toprule
& \multicolumn{4}{c}{SNR $=2.5$} & \multicolumn{4}{c}{SNR $=3$} & \multicolumn{4}{c}{SNR $=3.5$} \\
\cmidrule(lr){2-5}\cmidrule(lr){6-9}\cmidrule(lr){10-13}
& \multicolumn{2}{c}{$T=2000$} & \multicolumn{2}{c}{$T=3000$} & \multicolumn{2}{c}{$T=2000$} & \multicolumn{2}{c}{$T=3000$} & \multicolumn{2}{c}{$T=2000$} & \multicolumn{2}{c}{$T=3000$} \\
\cmidrule(lr){2-3}\cmidrule(lr){4-5}\cmidrule(lr){6-7}\cmidrule(lr){8-9}\cmidrule(lr){10-11}\cmidrule(lr){12-13}
Method & FDR & TPR & FDR & TPR & FDR & TPR & FDR & TPR & FDR & TPR & FDR & TPR \\
\midrule
\multicolumn{13}{@{}c}{\textit{Gaussian}} \\[2pt]
\texttt{POIS} & 0.00 & 93.22 & 0.02 & 97.48 & 0.00 & \textbf{98.42} & 0.02 & 98.92 & 0.00 & 99.26 & 0.02 & 99.34 \\
\texttt{POIS+} & 0.00 & \textbf{93.28} & 0.02 & 97.58 & 0.00 & \textbf{98.42} & 0.02 & 98.94 & 0.00 & 99.28 & 0.02 & 99.36 \\
\texttt{MOPS} & 0.00 & 69.40 & 0.09 & \textbf{99.16} & 0.00 & 96.80 & 0.02 & \textbf{100.00} & 0.00 & \textbf{100.00} & 0.00 & \textbf{100.00} \\
\texttt{ART} & 0.00 & 18.32 & 0.00 & 16.12 & 0.00 & 23.18 & 0.00 & 19.60 & 0.00 & 28.70 & 0.00 & 22.98 \\
\midrule
\multicolumn{13}{@{}c}{\textit{Student $t_4$}} \\[2pt]
\texttt{POIS} & 3.34 & 53.50 & 4.18 & 52.52 & 2.34 & 74.48 & 3.20 & 71.22 & 1.62 & 85.30 & 2.34 & 83.76 \\
\texttt{POIS+} & 3.31 & \textbf{57.98} & 4.15 & \textbf{56.46} & 2.48 & \textbf{77.12} & 3.19 & \textbf{74.38} & 1.59 & \textbf{87.14} & 2.40 & \textbf{85.30} \\
\texttt{MOPS} & 5.24 & 25.98 & 6.44 & 29.18 & 5.69 & 51.86 & 7.36 & 52.14 & 5.27 & 74.28 & 7.10 & 72.02 \\
\texttt{ART} & 0.00 & 11.56 & 0.10 & 8.90 & 0.10 & 17.74 & 0.00 & 14.10 & 0.00 & 23.96 & 0.00 & 20.02 \\
\midrule
\multicolumn{13}{@{}c}{\textit{Chi-square $\chi^2_3$}} \\[2pt]
\texttt{POIS} & 1.96 & 64.26 & 3.80 & 61.28 & 1.41 & 81.00 & 2.59 & 79.96 & 0.79 & 89.58 & 1.76 & 88.36 \\
\texttt{POIS+} & 2.02 & \textbf{67.78} & 3.78 & \textbf{65.98} & 1.32 & \textbf{83.04} & 2.60 & \textbf{82.32} & 0.70 & \textbf{90.90} & 1.72 & \textbf{89.84} \\
\texttt{MOPS} & 5.44 & 46.76 & 5.95 & 35.80 & 4.76 & 72.06 & 6.20 & 63.76 & 3.45 & 86.80 & 5.12 & 80.22 \\
\texttt{ART} & 0.20 & 11.48 & 0.30 & 8.24 & 0.07 & 17.64 & 0.20 & 12.88 & 0.07 & 23.46 & 0.00 & 18.98 \\
\midrule
\multicolumn{13}{@{}c}{\textit{Lognormal}} \\[2pt]
\texttt{POIS} & 2.93 & 60.92 & 3.48 & 60.30 & 1.88 & 79.04 & 2.49 & 78.18 & 1.13 & 89.04 & 1.72 & 88.58 \\
\texttt{POIS+} & 2.72 & \textbf{64.80} & 3.23 & \textbf{64.46} & 1.81 & \textbf{81.58} & 2.44 & \textbf{81.00} & 1.01 & \textbf{90.34} & 1.66 & \textbf{90.04} \\
\texttt{MOPS} & 4.79 & 35.74 & 6.06 & 35.02 & 5.56 & 68.16 & 6.49 & 64.94 & 4.16 & 84.12 & 5.62 & 84.34 \\
\texttt{ART} & 0.00 & 10.84 & 0.20 & 8.58 & 0.00 & 17.38 & 0.07 & 14.14 & 0.05 & 23.54 & 0.00 & 19.64 \\
\bottomrule
\end{tabular*}
\end{minipage}
\end{table}

Table~\ref{tab:hd_mean_results} shows that \texttt{POIS} and \texttt{POIS+} keep empirical FDR below the nominal level across all settings. \texttt{MOPS} also maintains low empirical FDR, and \texttt{ART} yields very low FDR and covers far fewer changes. Under Gaussian noise, \texttt{POIS+} maintains the high TPR of \texttt{POIS}. Under the three non-Gaussian distributions, \texttt{POIS+} attains the highest TPR among the four methods, with a particularly large advantage over \texttt{MOPS} at low SNR. 
The improvement of \texttt{POIS+} over \texttt{POIS} is also more pronounced at low SNR under non-Gaussian noise. A weak change may be detected on the odd side but may not be selected after even-side testing. Since the two procedures use the same candidate intervals and even-side $p$-values, the higher TPR reflects the contribution of detection-conditional weighting. As SNR increases, \texttt{POIS} covers more true changes, leaving less room for improvement.

We also report average selected-interval length and runtime. Lengths are first averaged within each nonempty repetition
and then across such repetitions.
Runtime includes detection and the subsequent inferential calculations, averaged over all repetitions. Table~\ref{tab:hd_length_time_by_T} shows that \texttt{POIS} and \texttt{POIS+} have nearly identical average interval lengths. Thus, the higher TPR of \texttt{POIS+} is not accompanied by wider selected intervals on average. It is expected that \texttt{POIS+} takes additional time to compute the weights, and its runtime decreases as SNR increases.

\begin{table}[h]
\centering
\begin{minipage}{\linewidth}
\setstretch{1}
\caption{Average selected-interval length and runtime (seconds) for the settings in Table~\ref{tab:hd_mean_results}.}
\label{tab:hd_length_time_by_T}
\vspace{3pt}
\centering
\fontsize{10}{12}\selectfont
\setlength{\tabcolsep}{3pt}
\renewcommand{\arraystretch}{1}
\begin{tabular*}{\linewidth}{@{\extracolsep{\fill}}l*{12}{r}@{}}
\toprule
& \multicolumn{4}{c}{SNR $=2.5$} & \multicolumn{4}{c}{SNR $=3$} & \multicolumn{4}{c}{SNR $=3.5$} \\
\cmidrule(lr){2-5}\cmidrule(lr){6-9}\cmidrule(lr){10-13}
& \multicolumn{2}{c}{$T=2000$} & \multicolumn{2}{c}{$T=3000$} & \multicolumn{2}{c}{$T=2000$} & \multicolumn{2}{c}{$T=3000$} & \multicolumn{2}{c}{$T=2000$} & \multicolumn{2}{c}{$T=3000$} \\
\cmidrule(lr){2-3}\cmidrule(lr){4-5}\cmidrule(lr){6-7}\cmidrule(lr){8-9}\cmidrule(lr){10-11}\cmidrule(lr){12-13}
Method & Length & Time & Length & Time & Length & Time & Length & Time & Length & Time & Length & Time \\
\midrule
\multicolumn{13}{@{}c}{\textit{Gaussian}} \\[2pt]
\texttt{POIS} & 108.58 & 4.77 & 113.00 & 7.38 & 86.84 & 4.76 & 89.06 & 7.43 & 72.44 & 4.74 & 76.09 & 7.30 \\
\texttt{POIS+} & 108.57 & 27.96 & 112.98 & 33.70 & 86.84 & 21.38 & 89.05 & 25.02 & 72.43 & 16.72 & 76.09 & 20.15 \\
\texttt{MOPS} & 105.29 & 4.27 & 112.14 & 6.85 & 86.44 & 4.25 & 88.75 & 6.90 & 72.26 & 4.22 & 75.93 & 6.78 \\
\texttt{ART} & 63.99 & 8.83 & 71.97 & 14.40 & 56.11 & 8.81 & 60.32 & 14.48 & 52.10 & 8.80 & 52.29 & 14.31 \\
\midrule
\multicolumn{13}{@{}c}{\textit{Student $t_4$}} \\[2pt]
\texttt{POIS} & 65.26 & 4.98 & 70.32 & 7.78 & 59.33 & 4.87 & 62.84 & 7.59 & 55.39 & 4.82 & 57.83 & 7.42 \\
\texttt{POIS+} & 64.58 & 14.94 & 69.44 & 21.52 & 59.04 & 13.91 & 62.47 & 19.30 & 55.15 & 12.60 & 57.63 & 17.24 \\
\texttt{MOPS} & 60.18 & 4.26 & 65.93 & 6.87 & 57.70 & 4.21 & 60.52 & 6.81 & 54.29 & 4.20 & 57.25 & 6.73 \\
\texttt{ART} & 50.80 & 8.95 & 52.03 & 14.43 & 47.92 & 8.81 & 46.88 & 14.36 & 45.42 & 8.81 & 44.21 & 14.25 \\
\midrule
\multicolumn{13}{@{}c}{\textit{Chi-square $\chi^2_3$}} \\[2pt]
\texttt{POIS} & 66.40 & 4.97 & 71.61 & 7.75 & 59.90 & 4.81 & 63.45 & 7.52 & 55.65 & 4.80 & 58.01 & 7.48 \\
\texttt{POIS+} & 65.93 & 16.05 & 70.61 & 22.15 & 59.70 & 14.15 & 63.19 & 19.25 & 55.46 & 12.76 & 57.91 & 16.95 \\
\texttt{MOPS} & 63.16 & 4.31 & 67.46 & 6.94 & 58.40 & 4.21 & 62.41 & 6.81 & 54.76 & 4.24 & 58.29 & 6.82 \\
\texttt{ART} & 51.87 & 9.04 & 52.78 & 14.62 & 48.31 & 8.92 & 47.30 & 14.40 & 46.25 & 9.00 & 44.07 & 14.52 \\
\midrule
\multicolumn{13}{@{}c}{\textit{Lognormal}} \\[2pt]
\texttt{POIS} & 67.02 & 4.93 & 71.23 & 7.71 & 60.06 & 4.83 & 63.17 & 7.50 & 55.47 & 4.76 & 58.41 & 7.44 \\
\texttt{POIS+} & 66.45 & 15.83 & 70.39 & 22.21 & 59.84 & 14.06 & 62.78 & 19.47 & 55.34 & 12.57 & 58.27 & 17.22 \\
\texttt{MOPS} & 62.83 & 4.27 & 66.69 & 6.89 & 58.19 & 4.21 & 62.15 & 6.80 & 54.58 & 4.18 & 58.47 & 6.82 \\
\texttt{ART} & 52.19 & 8.82 & 52.26 & 14.41 & 48.55 & 8.89 & 47.36 & 14.31 & 46.22 & 8.78 & 44.31 & 14.43 \\
\bottomrule
\end{tabular*}
\end{minipage}
\end{table}

\subsection{Mean Changes in Profile Data}
\label{subsec:sim_profile_pois}

We finally consider the profile data, which refer to discretely observed functional data. We generate a sequence of \(T\in\{2000,3000\}\) profiles, each observed at 50 design points drawn independently from \(U(0,15)\), according to
\[
Y_{ij}=m_{s_i}(X_{ij})+Z_i r(X_{ij})+\sigma(X_{ij})\varepsilon_{ij},\qquad Z_i\sim N(0,1),
\]
where \(i=1,\ldots,T\) and \(j=1,\ldots,50\). Here, \(m_s\), \(r\), and \(\sigma\) specify the mean, random effect, and noise standard deviation, respectively.
The state \(s_i\in\{0,1\}\) alternates at $K^*=10$ randomly placed change-points separated by at least 75 profiles. The pointwise errors follow $N(0,1)$ or \(\sqrt{3/5}\,t_5\). We consider four scenarios. In \texttt{Mean}, the amplitude of the logistic mean curve alternates between 10 and 3. In \texttt{Anomaly}, one state adds a Gaussian-shaped bump of height 13.5 centered at \(x=4.5\), which persists throughout the corresponding segment. In \texttt{Adv-PC}, the mean shift is \(6.5\cos(2\pi x/15)\), while individual profiles contain the random component \(5.5Z_i\sin(2\pi x/15)\). In \texttt{Curvature}, the mean shift is a centered quadratic component, accompanied by random linear slopes across profiles. The random-effect and pointwise error distributions remain unchanged between states, so all four scenarios concern changes in the mean function. Given the change-point locations, profiles are independent, with \(Z_i\) shared across observations within profile \(i\). Appendix~\ref{app:profile_generation} specifies the functions \(m_s\), \(r\), and \(\sigma\).

In this simulation, we use a smooth profile decomposition (SPD) detector \citep{wu2025multiple} over a data-independent pool of fixed-width intervals. Each profile is represented by its fitted mean and variance functions at ten common grid points. The detector applies the local-maximum criteria in Section~\ref{sec:instantiations} to the scan statistics, producing disjoint intervals indexed by $\widehat{\mathcal S}$. For \texttt{POIS} and \texttt{POIS+}, the scoring function $v_k(\cdot)$ is a Wald-type statistic computed from the SPD summaries. We use a half-window of \(h=6\) profiles on each split, corresponding to a reporting radius of \(2h=12\) profiles on the original sequence. Permutations act on whole profiles.

Figure~\ref{fig:sim_profile} shows that \texttt{POIS} and \texttt{POIS+} keep empirical FDR below the nominal level under both Gaussian and standardized Student \(t_5\) errors. 
Both procedures attain higher TPR than \texttt{SPD-OSE} and \texttt{ART} across all scenarios. \texttt{SPD-OSE} covers many changes in \texttt{Mean}, but its TPR is substantially lower for the local or shape changes in the other scenarios. \texttt{ART}, which targets the more stringent FWER criterion, has very low empirical FDR and covers few changes. \texttt{POIS+} achieves higher TPR than \texttt{POIS} at both sequence lengths, with larger gains in \texttt{Adv-PC} and \texttt{Curvature}. In these scenarios, variation between profiles within the same segment can hide a mean change in a local window. The odd-side observations provide additional evidence from different realizations of this variation. In \texttt{Mean}, \texttt{POIS} already covers nearly all changes, leaving little room for improvement. The gains under Gaussian errors also show that odd-side weighting remains useful beyond heavy-tailed settings.

\begin{figure}[h]
\centering
\includegraphics[width=\textwidth]{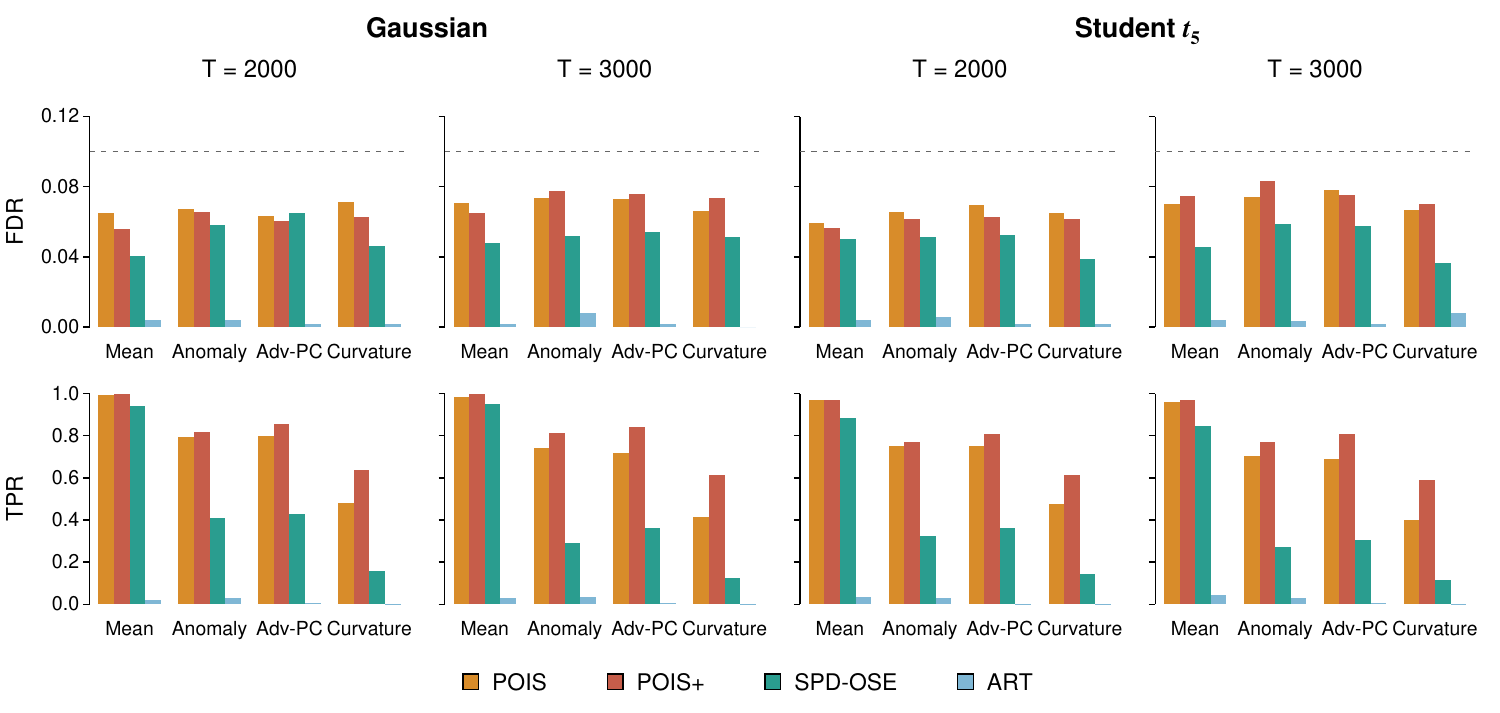}
\caption{Empirical interval-level FDR (top) and change-point coverage TPR (bottom) for the four profile scenarios with $T=2000$ and $T=3000$, under Gaussian and standardized Student \(t_5\) errors. The dashed line marks \(\alpha=0.1\).}
\label{fig:sim_profile}
\end{figure}

The improvement over \texttt{POIS} is larger for $T=3000$, although absolute TPR is generally lower than for $T=2000$. With $K^*$ and $h$ fixed, the longer sequence may produce a larger tested family without increasing the number of profiles in each even-side window. A larger tested family can make rejection more difficult, which helps explain the larger gain from odd-side weighting. Since the reported intervals have fixed width and are unchanged by weighting, the higher coverage of \texttt{POIS+} comes from selecting intervals that cover additional true change-points.

\section{Real Data Applications}
\label{sec:real_data}

We apply \texttt{POIS} and \texttt{POIS+} to stock returns and smartphone activity recordings, examining the selected intervals in relation to economic events and recorded activity transitions. 


\subsection{S\&P 500 market-stress signal}
\label{sec:real1_sp500}

We analyze 755 daily log returns for 170 S\&P 500 constituent stocks from January 2020 to December 2022. The market-stress signal in Figure~\ref{fig:real1_sp500} is the standardized logarithm of the daily cross-sectional mean square of centered, winsorized returns. All procedures use this univariate signal; implementation details are given in Appendix~\ref{app:realdata_implementation}. We use the narrowest-first construction in Section~\ref{sec:instantiations} with an odd-sample NOT detector \citep{baranowski2019narrowest}, which reports 24 disjoint candidate intervals. We compare the selected intervals with those from \texttt{NSP} \citep{fryzlewicz2024narrowest}, which uses a family-wise error criterion.

\begin{figure}[h]
    \centering
    \includegraphics[width=\textwidth]{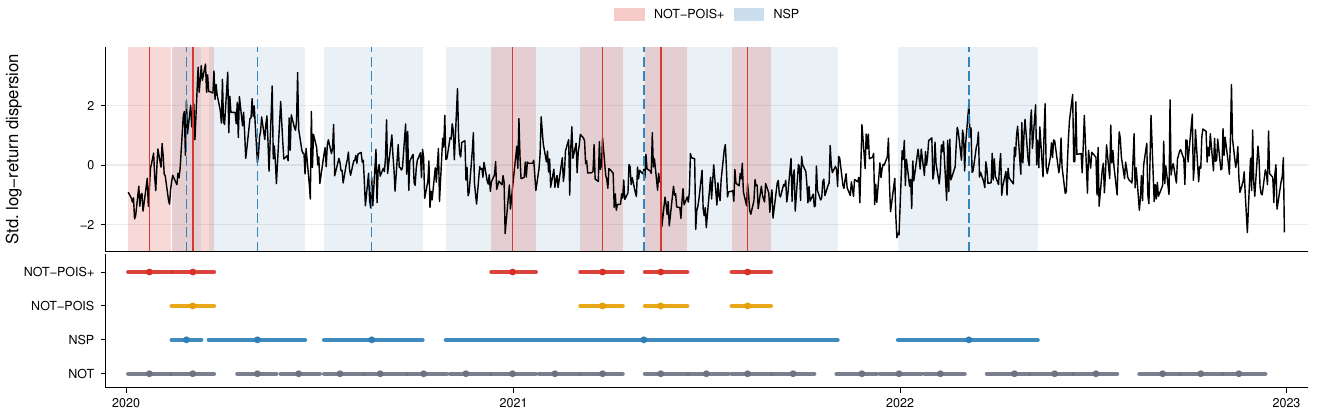}
    \caption{S\&P 500 market-stress signal (top) and reported intervals (bottom). The gray row shows detector-reported intervals. Dots mark representative locations.}
    \label{fig:real1_sp500}
\end{figure}

Figure~\ref{fig:real1_sp500} shows that \texttt{POIS+} selects six of the 24 candidate intervals, including all four selected by \texttt{POIS}. All candidates span 28 trading days, so weighting selects additional intervals without changing their width. \texttt{NSP} returns five intervals with a mean length of 99.8 trading days, several of which cover multiple shorter candidates.

The two additional intervals selected by \texttt{POIS+} cover the declaration of the COVID-19 international public-health emergency in early 2020 \citep{who2020emergency} and the first U.S. emergency authorization of a COVID-19 vaccine in December 2020 \citep{fda2020firstvaccine}. The remaining intervals cover emergency monetary easing \citep{fed2020marchstatement}, strengthening economic activity and rising inflation \citep{fed2021marchstatement,bls2021aprilcpi}, and discussions of reducing asset purchases \citep{powell2021outlook}. These events were considered after selection to help interpret changes in return variability.

\subsection{Human Activity and Postural Transitions}
\label{sec:real_har}

We analyze the second recording from each of 30 participants in the Human Activities and Postural Transitions (HAPT) data \citep{reyes2015hapt}. Each recording contains triaxial acceleration and angular velocity measured at 50 Hz by a waist-mounted smartphone. We retain the period between the first and last activity annotations, including internal gaps. Reference change ranges span the end of one sustained activity and the start of the next differently labeled activity. Each nonoverlapping 0.5-second block is represented by 73 features describing posture, movement, and temporal variation. The feature representation and filter coefficients are fixed from development on the first recordings of subjects 1--6. The second recordings were examined during earlier exploratory work. We apply the local-window construction in Section~\ref{sec:instantiations} using fixed-center $L_2$ contrasts and nominal 10-second intervals. We compare the selected intervals with the full detector, \texttt{MOPS}, and \texttt{ART}. Preprocessing and implementation details are given in Appendix~\ref{app:realdata_implementation}.

\begin{table}[t]
\centering
\caption{HAPT results averaged over the second recordings of 30 participants and three permutation seeds. Coverage is the proportion of reference change ranges intersected by a reported interval. Precision and $F_1$ use maximum-cardinality one-to-one interval matching. Full detector and odd candidates show outputs before interval selection.}
\label{tab:har_results}
\small
\begin{tabular}{@{}lrrrr@{}}
\toprule
Method & Mean count & Coverage (\%) & Precision (\%) & $F_1$ \\
\midrule
Full detector & 16.53 & 98.06 & 69.89 & 0.8151 \\
Odd candidates & 16.33 & 96.67 & 69.68 & 0.8089 \\
\texttt{POIS} & 14.01 & 89.54 & 75.59 & 0.8140 \\
\texttt{POIS+} & 14.79 & 93.61 & 74.67 & 0.8283 \\
\texttt{MOPS} & 11.67 & 71.67 & 74.11 & 0.6313 \\
\texttt{ART} & 6.52 & 46.80 & 85.22 & 0.5950 \\
\bottomrule
\end{tabular}
\end{table}

Table~\ref{tab:har_results} shows that \texttt{POIS+} reports fewer intervals than the full detector while maintaining high coverage and improving matching precision and $F_1$. Compared with \texttt{POIS}, it covers more activity changes, with slightly lower precision but higher $F_1$. \texttt{MOPS} has lower coverage and $F_1$, whereas \texttt{ART} returns the smallest collection, with higher precision but substantially lower coverage.

\begin{figure}[h]
\centering
\includegraphics[width=\linewidth]{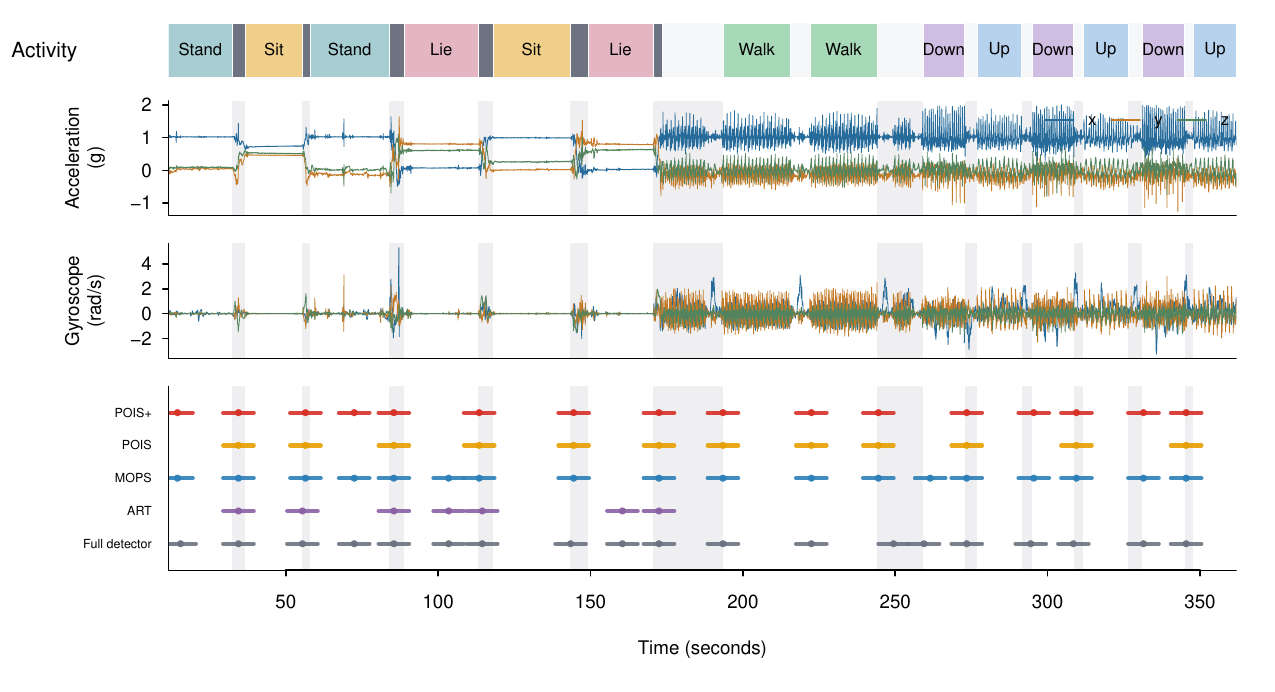}
\caption{HAPT subject 30: activity labels, raw signals, and reported intervals. Gray bands mark reference changes; dots mark fixed test splits.}
\label{fig:har_activity}
\end{figure}

Figure~\ref{fig:har_activity} shows the results for subject 30, where \texttt{POIS} misses two switches from ascending to descending stairs. Distinguishing these activities can be harder than distinguishing static postures. \texttt{POIS+} covers all 12 reference changes, with four unmatched intervals under one-to-one matching. \texttt{MOPS} also covers all changes but selects two more intervals than \texttt{POIS+}. All 12 reference changes remain covered by \texttt{POIS+} across three permutation seeds. This recording was chosen to illustrate how odd-side weighting recovers changes missed by \texttt{POIS}, while Table~\ref{tab:har_results} summarizes performance over all recordings. Appendix Figure~\ref{fig:har_additional} shows an additional recording where \texttt{MOPS} selects no intervals.

\section{Concluding Remarks}
\label{sec:conclusion}

This paper develops a unified framework to select detector-reported intervals of multiple change-points.
The proposed \texttt{POIS} separates detection from permutation-based tests
through order-preserving sample splitting. \texttt{POIS+} constructs detection-conditional $e$-values that allow the same data to select the tested family and generate valid weights. Both procedures provide finite-sample interval-level FDR control without specifying the algorithm and the distributions. We also provide implementations of \texttt{POIS+} for two typical classes of detectors. Under the stated conditions, \texttt{POIS} attains asymptotically full expected coverage of true change-points, while \texttt{POIS+} has asymptotically no lower expected coverage.


Future work includes extending the permutation testing step to serially dependent or locally stationary observations. Another direction is to allow reported intervals whose endpoints are obtained through refitting or adaptive post-processing, beyond detection using a fixed interval pool.

\phantomsection\label{supplementary-material}
\bigskip

\begin{center}

{\large\bf SUPPLEMENTARY MATERIAL}

\end{center}

The supplementary material contains additional methods, proofs, verification of the score conditions, and further numerical details.


\setstretch{.8}
\bibliography{references}

\newpage
\appendix
\begin{bibunit}[asa]
\setbibunitprefix{suppl-}
\setcounter{table}{0}
\renewcommand{\thetable}{S\arabic{table}}
\setcounter{figure}{0}
\renewcommand{\thefigure}{S\arabic{figure}}
\counterwithout{lemma}{section}
\setcounter{lemma}{0}
\renewcommand{\thelemma}{S\arabic{lemma}}
\providecommand{\theHlemma}{}
\renewcommand{\theHlemma}{appendix.\arabic{lemma}}
\counterwithout{theorem}{section}
\setcounter{theorem}{0}
\renewcommand{\thetheorem}{S\arabic{theorem}}
\providecommand{\theHtheorem}{}
\renewcommand{\theHtheorem}{appendix.\arabic{theorem}}

\numberwithin{equation}{section}
\allowdisplaybreaks
\spacingset{1.7}

\begin{center}
    {\LARGE\bf Supplementary Material for ``Interval Localization for Multiple Change-points with Error Rate Control''}
\end{center}

\section{Additional methods}

For the proofs below, \textup{(A1)} and \textup{(A2)} refer to the first and second parts of Assumption~\ref{ass:power_main}, and \textup{(B1)} and \textup{(B2)} to the first and second parts of Assumption~\ref{ass:odd_nondeflation}, respectively.
BH and weighted BH return the empty set if the reported family is empty or no rank meets its critical value. A zero odd-side weight gives $P_k^+=1$, and the weight in \eqref{eq:post_selection_Evalue} is zero when its denominator is zero. An empty accepted set gives weight one when the observed transformed score is positive. Throughout, $0<\alpha<1$, $\mathrm{FDR}$ denotes the interval-level false discovery rate, and $a_+=a\vee0$.

For a randomized interval pool or detector, we condition on its data-independent randomization and average over it at the end. Scores and local permutation draws are constructed as in the main text.

In arguments using absolute orderings, repeated observations may be
distinguished by independent continuous auxiliary labels, permuted with
the observations and ignored by the procedure. Conditional on the
unordered labeled sample, the observed ordering is uniform. Averaging
over the auxiliary labels recovers each stated conditional bound.

\subsection{Graph-based aggregation for overlapping reported intervals}
\label{app:overlap_graph}

We first extend \(p\)-value aggregation to overlapping reported intervals.
Throughout this subsection, \(X_1,\ldots,X_T\) follow the data model in the main text. The
inferential units remain the original reported intervals, with
\(H_{0k}:I_k\cap\mathcal T^*=\varnothing\).

Let \(\mathcal F_+^{\mathcal O}\) be the sigma-field generated by the
odd-side data and all odd-side randomization.  Conditional on
\(\mathcal F_+^{\mathcal O}\), the reported family
\(\widehat{\mathcal S}\), the reported intervals, the scoring functions, and the
weights \(\{W_k^{\mathcal O}:k\in\widehat{\mathcal S}\}\) are fixed.  Write
\(M=|\widehat{\mathcal S}|\), and return the empty set when \(M=0\).

For each \(k\in\widehat{\mathcal S}\), let \(U_k\) collect the permutation
draws used in \eqref{eq:perm-p-split}, which are independent of the data
and across intervals. By construction,
\begin{equation}
    \sigma(P_k^{\mathcal E})
    \subseteq
    \sigma\left(
        \mathcal F_+^{\mathcal O},
        X_{I_k^{\mathcal E}},
        U_k
    \right).
    \label{eq:graph_locality}
\end{equation}

Define the overlap graph \(\mathbb D_{\widehat{\mathcal S}}\) on
\(\widehat{\mathcal S}\) by placing an edge between distinct \(k\) and
\(\ell\) whenever \(I_k^{\mathcal E}\cap I_\ell^{\mathcal E}\ne\varnothing\).
Let \(N_k\) denote the closed neighborhood of \(k\), including \(k\) itself.

For a graph \(\mathbb D\) on a finite node set \(\mathcal M\), let
\(\operatorname{Ind}(\mathbb D)\) be its collection of independent sets.
For \(A\subseteq\mathcal M\), let \(p^A_i=p_i\) for \(i\in A\) and
\(p^A_i=1\) otherwise.  The Independent Set BH procedure of
\citet{nguyen2025graphical} is
\begin{equation}
    \operatorname{IndBH}_{\mathbb D,\alpha}(p)
    =
    \bigcup_{A\in\operatorname{Ind}(\mathbb D)}
    \mathcal R_{\alpha}^{\rm BH}(p^A),
    \label{eq:indbh}
\end{equation}
where every BH procedure uses all \(|\mathcal M|\) coordinates and hence the
critical values \(\alpha r/|\mathcal M|\).  The union in
\eqref{eq:indbh} may equivalently be restricted to maximal independent sets.

The following conditional IndBH bound allows fixed, unnormalized weights.

\begin{lemma}
\label{lem:weighted_indbh}
Let \(\mathcal G\) be a sigma-field such that, conditional on \(\mathcal G\),
a node set \(\mathcal M\), a graph \(\mathbb D\), and nonnegative weights
\(\{w_i:i\in\mathcal M\}\) are fixed.  Write \(M=|\mathcal M|\), assume \(M\ge1\), and let
\(N_i\) be the closed neighborhood of \(i\).  Suppose that every null
\(i\in\mathcal M\cap\mathcal H_0\) satisfies
\begin{equation}
    \mathbb P\{P_i\le u\mid\mathcal G\}\le u,
    \qquad
    P_i\ \perp\!\!\!\perp\
    (P_j)_{j\in\mathcal M\setminus N_i}
    \mid\mathcal G.
    \label{eq:weighted_indbh_conditions}
\end{equation}
Set \(Q_i=(P_i/w_i)\wedge1\), with \(Q_i=1\) when \(w_i=0\), and let
\(\mathcal R=\operatorname{IndBH}_{\mathbb D,\alpha}(Q)\).  Then
\begin{equation}
    \mathbb E\left[
        \left.
        \frac{|\mathcal R\cap\mathcal H_0|}
             {|\mathcal R|\vee1}
        \right|
        \mathcal G
    \right]
    \le
    \frac{\alpha}{M}
    \sum_{i\in\mathcal M\cap\mathcal H_0}w_i.
    \label{eq:weighted_indbh_bound}
\end{equation}
\end{lemma}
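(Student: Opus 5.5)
Throughout, I work conditionally on \(\mathcal G\), so that \(\mathcal M\), \(\mathbb D\) and the weights are constants, and I decompose the false discovery proportion null by null:
\[
\frac{|\mathcal R\cap\mathcal H_0|}{|\mathcal R|\vee1}=\sum_{i\in\mathcal M\cap\mathcal H_0}\frac{\mathbbm 1\{i\in\mathcal R\}}{|\mathcal R|\vee1}.
\]
The plan is to show that each summand has conditional expectation at most \(\alpha w_i/M\). Summing these bounds gives \eqref{eq:weighted_indbh_bound}. Nulls with \(w_i=0\) have \(Q_i=1\). I would check that such an \(i\) contributes nothing when \(\alpha<1\), or else only through the event \(\{P_i\le 0\}\), which has probability zero. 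So assume \(w_i>0\).

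\emph{Structural step.} Fix a null \(i\) with \(w_i>0\). I want to show that on \(\{i\in\mathcal R\}\) the rejection of \(i\) is governed by a BH-type cutoff that does not depend on \(P_i\), and whose reference count depends only on \(p\)-values outside \(N_i\).

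Following \citet{nguyen2025graphical}, for each independent set \(A\ni i\) I use the fact that, in \(Q^A\), the coordinates other than \(i\) are indexed by \(A\setminus\{i\}\subseteq\mathcal M\setminus N_i\). I then define
\[
R_i^{\ast}=\max_{A\in\operatorname{Ind}(\mathbb D),\,A\ni i}\ \text{(number of BH rejections of }Q^{A}\text{ with }Q_i\text{ set to }0).
\]
This quantity is a function of \((Q_j)_{j\notin N_i}\) only, hence it is conditionally independent of \(P_i\) by \eqref{eq:weighted_indbh_conditions}.

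I would then verify two facts. First, if \(i\in\mathcal R\), then \(i\) is rejected by BH applied to some \(Q^A\) with \(A\ni i\). By the standard BH self-consistency and the monotonicity of the rejection count in \(Q_i\), this gives \(Q_i\le\alpha R_i^{\ast}/M\). Second, on that event \(|\mathcal R|\ge R_i^{\ast}\), since the maximizing set \(A\) has \(Q_i\le\alpha R_i^\ast/M\). In that case BH on \(Q^{A}\) rejects exactly the same set as with \(Q_i=0\), and all of those rejections lie in \(\mathcal R\).

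Combining the two facts,
\[
\frac{\mathbbm 1\{i\in\mathcal R\}}{|\mathcal R|\vee1}\le\frac{\mathbbm 1\{P_i\le w_i\alpha R_i^\ast/M\}}{R_i^\ast}.
\]
Here I use that \(Q_i\le c<1\) is equivalent to \(P_i\le w_ic\).

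\emph{Probability step.} Condition additionally on \((P_j)_{j\notin N_i}\). By the conditional independence in \eqref{eq:weighted_indbh_conditions}, \(P_i\) remains superuniform given \(\mathcal G\) and these \(p\)-values. This superuniformity holds because independence transfers the \(\mathcal G\)-conditional bound. Therefore the conditional expectation is at most \(\{w_i\alpha R_i^\ast/M\}/R_i^\ast=\alpha w_i/M\). If \(w_i\alpha R_i^\ast/M>1\), the bound \(1/R_i^\ast\le\alpha w_i/M\) still holds. Summing over the nulls completes the proof.

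The main obstacle is the structural step. The union over independent sets means that \(|\mathcal R|\) is not itself a BH count, so I must check carefully that \(|\mathcal R|\ge R_i^\ast\) whenever \(i\) is rejected. I must also check that the maximizing \(A\) can be chosen consistently, with ties and \(Q_j=1\) padding handled correctly. If the direct maximum causes trouble, I would instead take \(R_i^\ast\) to be the largest \(r\) for which some \(A\ni i\) has at least \(r-1\) entries among \(\{Q_j:j\in A\setminus\{i\}\}\) that are \(\le\alpha r/M\). This definition is visibly a function of the non-neighbors, and it reproduces the usual leave-one-out BH argument.
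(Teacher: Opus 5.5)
Your proposal is correct and follows the paper's proof essentially step for step. Both arguments rest on the same ingredients:
\begin{itemize}
\item the leave-one-out count $r_i=\max_{A\ni i} r_{i,A}$, the BH rejection count on $Q^A$ with $Q_i$ set to zero;
\item its measurability with respect to the non-neighbour $p$-values;
\item the two facts $Q_i\le\alpha r_i/M$ and $|\mathcal R|\ge r_i$ on $\{i\in\mathcal R\}$;
\item conditional superuniformity of $P_i$ given those non-neighbour $p$-values.
\end{itemize}
The only difference is that the paper states the rejection rule as an equivalence, $i\in\mathcal R \Leftrightarrow Q_i\le\alpha r_i/M$. You use only the forward implication, which is all the upper bound needs.
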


\begin{proof}
Condition on $\mathcal G$ and fix $i\in\mathcal M\cap\mathcal H_0$.
We bound the contribution of this index to the false discovery proportion
before summing over the nulls.  For each
\(A\in\operatorname{Ind}(\mathbb D)\) containing \(i\), let \(r_{i,A}\) be
the number of BH rejections from \(Q^A\) after replacing \(Q_i\) by zero, and
set \(r_i=\max_{A\ni i}r_{i,A}\).  Since \(A\setminus\{i\}\subseteq
\mathcal M\setminus N_i\), the variable \(r_i\) is measurable with respect
to \(\mathcal G\) and \((P_j)_{j\in\mathcal M\setminus N_i}\).  Moreover,
\(r_i\ge1\).

For each fixed $A$ containing $i$, lowering $Q_i$ to zero can only
increase the BH rejection count. If $i$ was rejected, restoring $Q_i$
leaves the modified rejection set feasible, because its critical value
is at least the original one. Conversely, if
$Q_i\le\alpha r_{i,A}/M$, all $r_{i,A}$ modified rejections remain
feasible after restoration. Thus $i$ is rejected in $A$ exactly when
$Q_i\le\alpha r_{i,A}/M$, and the rejection count then equals
$r_{i,A}$. Maximizing over $A$ gives \(i\in\mathcal R \Longleftrightarrow Q_i\le\frac{\alpha r_i}{M}\).
On this event, an independent set attaining $r_i$ contributes $r_i$
rejections to the union, so $|\mathcal R|\ge r_i$. Since
\(\alpha r_i/M<1\), the rejection event is empty when \(w_i=0\), and
otherwise equals
\(\{P_i\le\alpha w_i r_i/M\}\).  Hence, using
\eqref{eq:weighted_indbh_conditions},
\begin{equation*}
\mathbb E\left[
    \left.
    \frac{\mathbbm 1\{i\in\mathcal R\}}
         {|\mathcal R|\vee1}
    \right|
    \mathcal G
\right]
\le
\mathbb E\left[
    \left.
    \frac{\mathbbm 1\{P_i\le\alpha w_i r_i/M\}}{r_i}
    \right|
    \mathcal G
\right]
\le
\frac{\alpha w_i}{M}.
\end{equation*}
For the last inequality, condition further on $r_i$. It is a function
of the non-neighboring inputs, so \eqref{eq:weighted_indbh_conditions}
bounds the conditional probability of $P_i\le\alpha w_i r_i/M$ by
$\min\{\alpha w_i r_i/M,1\}$. Dividing by $r_i$ gives at most
$\alpha w_i/M$. If $w_i=0$, its rejection contribution is zero.
Summing over $i\in\mathcal M\cap\mathcal H_0$ proves
\eqref{eq:weighted_indbh_bound}.
\end{proof}

For POIS, define
\begin{equation}
    \widehat{\mathcal R}_{\rm G}
    =
    \operatorname{IndBH}_{\mathbb D_{\widehat{\mathcal S}},\alpha}
    \left(
        (P_k^{\mathcal E})_{k\in\widehat{\mathcal S}}
    \right).
    \label{eq:graph_pois}
\end{equation}
For POIS+, use the weighted \(p\)-values
\(P_k^+=(P_k^{\mathcal E}/W_k^{\mathcal O})\wedge1\), with \(P_k^+=1\)
when \(W_k^{\mathcal O}=0\), and define \(\widehat{\mathcal R}_{\rm G}^+ = \operatorname{IndBH}_{\mathbb D_{\widehat{\mathcal S}},\alpha} \left( (P_k^+)_{k\in\widehat{\mathcal S}} \right)\).
\begin{theorem}
\label{thm:graph_overlap_fdr}
Under the data model in the main text, the POIS rejection set in
\eqref{eq:graph_pois} satisfies
\begin{equation}
    \mathrm{FDR}
    (\widehat{\mathcal R}_{\rm G})
    \le
    \alpha
    \mathbb E\left[
        \frac{|\widehat{\mathcal S}\cap\mathcal H_0|}
             {|\widehat{\mathcal S}|\vee1}
    \right]
    \le\alpha.
    \label{eq:graph_pois_fdr}
\end{equation}

For the weights in \eqref{eq:post_selection_Evalue},
Theorem~\ref{thm:perm-e-validity} gives, for every \(k\in\mathcal H_0\),
\begin{equation}
    \mathbb E_{H_{0k}}
    \left[
        W_k^{\mathcal O}
        \,\middle|\,
        k\in\widehat{\mathcal S},\Psi_k
    \right]
    \le1,
    \label{eq:graph_weight_validity}
\end{equation}
and \eqref{eq:cond_statistic} makes \(|\widehat{\mathcal S}|\) measurable
with respect to \(\Psi_k\) on \(\{k\in\widehat{\mathcal S}\}\). Then
\begin{equation}
    \mathrm{FDR}
    (\widehat{\mathcal R}_{\rm G}^+)
    \le
    \alpha
    \mathbb E\left[
        \frac{
        \sum_{k\in\widehat{\mathcal S}\cap\mathcal H_0}
        W_k^{\mathcal O}}
        {|\widehat{\mathcal S}|\vee1}
    \right]
    \le
    \alpha
    \mathbb E\left[
        \frac{|\widehat{\mathcal S}\cap\mathcal H_0|}
             {|\widehat{\mathcal S}|\vee1}
    \right]
    \le\alpha.
    \label{eq:graph_pois_plus_fdr}
\end{equation}
Neither conclusion requires the reported intervals to be disjoint.
\end{theorem}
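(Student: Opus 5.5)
The plan is to apply Lemma~\ref{lem:weighted_indbh} conditionally on \(\mathcal G=\mathcal F_+^{\mathcal O}\), with node set \(\mathcal M=\widehat{\mathcal S}\), graph \(\mathbb D=\mathbb D_{\widehat{\mathcal S}}\), and weights fixed given \(\mathcal G\). For \texttt{POIS} the weights are \(w_k\equiv1\); for \texttt{POIS+} they are \(w_k=W_k^{\mathcal O}\). Both are \(\mathcal F_+^{\mathcal O}\)-measurable, because the weights in \eqref{eq:post_selection_Evalue} use only odd-side data and odd-side permutation draws. On \(\{M=0\}\) the rejection set is empty, so the false discovery proportion is zero and that event can be ignored.

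First I verify the two conditions in \eqref{eq:weighted_indbh_conditions} for every null \(k\in\widehat{\mathcal S}\cap\mathcal H_0\).

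\emph{Conditional superuniformity.} The data are independent, and \(\widehat{\mathcal S}\) and \(v_k\) are \(\mathcal F_+^{\mathcal O}\)-measurable. Hence, conditional on \(\mathcal F_+^{\mathcal O}\), the vector \(X_{I_k^{\mathcal E}}\) keeps its unconditional law. Under \(H_{0k}\) that law is i.i.d., so the sample is exchangeable. The draws \(U_k\) are uniform on \(\Pi_{I_k^{\mathcal E}}\) and independent of everything else. The standard Monte Carlo permutation argument therefore gives \(\mathbb P\{P_k^{\mathcal E}\le u\mid\mathcal F_+^{\mathcal O}\}\le u\).

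\emph{Independence from non-neighbours.} By \eqref{eq:graph_locality}, each \(P_j^{\mathcal E}\) is a function of \(\mathcal F_+^{\mathcal O}\), \(X_{I_j^{\mathcal E}}\) and \(U_j\). For \(j\notin N_k\), the sets \(I_j^{\mathcal E}\) and \(I_k^{\mathcal E}\) are disjoint. So \((X_{I_k^{\mathcal E}},U_k)\) is independent of \((X_{I_j^{\mathcal E}},U_j)_{j\notin N_k}\) conditional on \(\mathcal F_+^{\mathcal O}\), using independence across time and independence of the permutation draws. This gives the independence in \eqref{eq:weighted_indbh_conditions}.

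One technicality: \(\widehat{\mathcal S}\) is random. I handle it by summing over the nulls indexed by \(k\in[K]\) with indicators \(\mathbbm 1\{k\in\widehat{\mathcal S}\}\), which are \(\mathcal G\)-measurable.

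Lemma~\ref{lem:weighted_indbh} then gives, conditional on \(\mathcal F_+^{\mathcal O}\), a bound of
\[
\frac{\alpha}{M}\sum_{k\in\widehat{\mathcal S}\cap\mathcal H_0}w_k .
\]
Taking expectations proves \eqref{eq:graph_pois_fdr} with \(w_k=1\), together with the first inequality in \eqref{eq:graph_pois_plus_fdr}. The final bound \(\le\alpha\) in each case is immediate because the ratio \(|\widehat{\mathcal S}\cap\mathcal H_0|/(|\widehat{\mathcal S}|\vee1)\) is at most one.

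For the middle inequality of \eqref{eq:graph_pois_plus_fdr} I follow the argument of Theorem~\ref{thm:sufficient_cond_Evalue}. Write the expectation as
\[
\sum_{k\in\mathcal H_0}\mathbb E\bigl[\mathbbm 1\{k\in\widehat{\mathcal S}\}\,W_k^{\mathcal O}/|\widehat{\mathcal S}|\bigr].
\]
On \(\{k\in\widehat{\mathcal S}\}\), the cardinality \(|\widehat{\mathcal S}|=1+|\widehat{\mathcal S}_{-k}|\) is \(\Psi_k\)-measurable. So I condition on \((\{k\in\widehat{\mathcal S}\},\Psi_k)\) and apply \eqref{eq:graph_weight_validity}, which comes from Theorem~\ref{thm:perm-e-validity}. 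This bounds each term by \(\mathbb E[\mathbbm 1\{k\in\widehat{\mathcal S}\}/|\widehat{\mathcal S}|]\), and summing over \(k\) gives the middle inequality.

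I expect the main obstacle to be the measurability bookkeeping in the conditioning step. Specifically, \(\widehat{\mathcal S}\), \(|\widehat{\mathcal S}|\) and the event \(\{k\in\widehat{\mathcal S}\}\) must be measurable with respect to \(\sigma(\Psi_k)\) and \(\{k\in\widehat{\mathcal S}\}\) jointly, so that the tower property applies cleanly. This is where the construction of \(\Psi_k\) in \eqref{eq:cond_statistic}, which fixes \(|\widehat{\mathcal S}_{-k}|\), is essential.

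Disjointness of the reported intervals is never used. Its role in the earlier argument is taken over by the non-neighbour independence, which the overlap graph encodes.
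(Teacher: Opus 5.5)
Your proposal is correct and matches the paper's proof step for step. You condition on \(\mathcal F_+^{\mathcal O}\), check conditional superuniformity and non-neighbour independence via \eqref{eq:graph_locality}, and apply Lemma~\ref{lem:weighted_indbh} with \(w_k=1\) and \(w_k=W_k^{\mathcal O}\). You then bound each null contribution \(\mathbb E[\mathbbm 1\{k\in\widehat{\mathcal S}\}W_k^{\mathcal O}/(|\widehat{\mathcal S}|\vee1)]\) by conditioning on \((\{k\in\widehat{\mathcal S}\},\Psi_k)\), as in Theorem~\ref{thm:sufficient_cond_Evalue}.
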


\begin{proof}
If $M=0$, both rejection sets are empty. Otherwise, condition on
$\mathcal F_+^{\mathcal O}$. We verify the two requirements in
\eqref{eq:weighted_indbh_conditions} for the actual even-side scores,
and then apply Lemma~\ref{lem:weighted_indbh}. For every
\(k\in\widehat{\mathcal S}\), \(I_k^{\mathcal E} \cap \bigcup_{\ell\in\widehat{\mathcal S}\setminus N_k} I_\ell^{\mathcal E} = \varnothing\).
Independence of the observations, independence of the variables \(U_k\),
and \eqref{eq:graph_locality} therefore imply \(P_k^{\mathcal E} \ \perp\!\!\!\perp\ (P_\ell^{\mathcal E})_{\ell\in \widehat{\mathcal S}\setminus N_k} \mid\mathcal F_+^{\mathcal O}\).
If \(k\in\mathcal H_0\), the observations in \(I_k^{\mathcal E}\) are
exchangeable conditional on \(\mathcal F_+^{\mathcal O}\), so the
fixed-interval permutation argument gives \(\mathbb P_{H_{0k}} \{P_k^{\mathcal E}\le u\mid\mathcal F_+^{\mathcal O}\} \le u\).
Applying Lemma~\ref{lem:weighted_indbh} with \(w_k=1\) yields the conditional
version of \eqref{eq:graph_pois_fdr}.  Taking expectations proves
\eqref{eq:graph_pois_fdr}.

For POIS+, apply the same lemma with \(w_k=W_k^{\mathcal O}\).  This gives \(\mathrm{FDR} (\widehat{\mathcal R}_{\rm G}^+) \le \alpha \mathbb E\left[ \frac{ \sum_{k\in\widehat{\mathcal S}\cap\mathcal H_0} W_k^{\mathcal O}} {|\widehat{\mathcal S}|\vee1} \right]\).
By \eqref{eq:graph_weight_validity} and the measurability of
\(|\widehat{\mathcal S}|\), each \(k\in\mathcal H_0\) satisfies
\begin{equation*}
\mathbb E\left[
    \frac{
    \mathbbm 1\{k\in\widehat{\mathcal S}\}W_k^{\mathcal O}}
    {|\widehat{\mathcal S}|\vee1}
\right]
\le
\mathbb E\left[
    \frac{\mathbbm 1\{k\in\widehat{\mathcal S}\}}
         {|\widehat{\mathcal S}|\vee1}
\right].
\end{equation*}
Summing over \(k\in\mathcal H_0\) proves
\eqref{eq:graph_pois_plus_fdr}.
\end{proof}

\begin{remark}[Adjacent overlap]
\label{rem:adjacent_overlap}
If the reported intervals can be ordered and relabeled so that
\(I_k^{\mathcal E}\cap I_\ell^{\mathcal E}=\varnothing\) whenever
\(|k-\ell|>1\), then \(\mathbb D_{\widehat{\mathcal S}}\) is a subgraph of a
path.  More generally, arbitrary overlaps produce an interval graph, and
Theorem~\ref{thm:graph_overlap_fdr} remains applicable.
\end{remark}

The next subsection constructs detection-conditional \(e\)-values from the
full sample for overlapping reported intervals.

\subsection{Full-sample detection-conditional e-BH}
\label{app:overlap}

We apply the detection-conditional construction to the full sample and
use e-BH to control FDR under arbitrary dependence among the corresponding
$e$-values \citep{wang2022false}. The inferential units are
the original reported intervals, with
\(H_{0k}:I_k\cap\mathcal T^*=\varnothing\).

Write \(M=|\widehat{\mathcal S}|\), and order the nonnegative inputs as
\(E_{[1]}\ge\cdots\ge E_{[M]}\). The e-BH rule rejects the hypotheses
corresponding to the largest \(\widehat r_e\) inputs, where \(\widehat r_e=\max\left\{1\le r\le M: E_{[r]}\ge\frac{M}{\alpha r}\right\}\).
The maximum is zero when the set is empty, including when \(M=0\).

\begin{theorem}
\label{thm:selected_family_ebh}
For the e-BH rejection set \(\widehat{\mathcal R}_e\),
\begin{equation}
    \mathrm{FDR}(\widehat{\mathcal R}_e)
    \le\alpha\,\mathbb E\left[
    \frac{\sum_{k\in\widehat{\mathcal S}\cap\mathcal H_0}E_k}{M\vee1}
    \right].
    \label{eq:selected_family_ebh_bound}
\end{equation}
In particular, suppose that every \(k\in\mathcal H_0\) satisfies
\begin{equation}
    \mathbb E_{H_{0k}}[E_k\mid k\in\widehat{\mathcal S},\Psi_k]\le1,
    \label{eq:selected_family_conditional_e}
\end{equation}
and that \(M\) is measurable with respect to \(\Psi_k\) on
\(\{k\in\widehat{\mathcal S}\}\). Then
\(\mathrm{FDR}(\widehat{\mathcal R}_e)\le\alpha\).
\end{theorem}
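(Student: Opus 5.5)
The plan is to follow the standard e-BH argument of \citet{wang2022false}, applied to the data-dependent family $\widehat{\mathcal S}$. We then use the detection-conditional property to remove the dependence between numerators and the random denominator $M$, exactly as in Theorem~\ref{thm:sufficient_cond_Evalue}.

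\textbf{Step 1: pointwise FDP bound.} On $\{M\ge1\}$, write $\widehat r_e$ for the number of rejections. If $k\in\widehat{\mathcal R}_e$, then $E_k\ge E_{[\widehat r_e]}\ge M/(\alpha\widehat r_e)$, so
\[
\frac{\mathbbm 1\{k\in\widehat{\mathcal R}_e\}}{\widehat r_e\vee1}\le \frac{\alpha E_k}{M}.
\]
This holds trivially when $k$ is not rejected, since $E_k\ge0$. Summing over $k\in\widehat{\mathcal S}\cap\mathcal H_0$ gives
\[
\mathrm{FDP}\le \frac{\alpha\sum_{k\in\widehat{\mathcal S}\cap\mathcal H_0}E_k}{M\vee1}.
\]
On $\{M=0\}$ both sides are zero. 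Taking expectations yields \eqref{eq:selected_family_ebh_bound}. No independence is needed here, which is why overlap and the absence of splitting are harmless.

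\textbf{Step 2: conditional validity.} Fix $k\in\mathcal H_0$. On $\{k\in\widehat{\mathcal S}\}$ the quantity $M\ge1$ is $\Psi_k$-measurable. The tower property then gives
\[
\mathbb E\!\left[\frac{\mathbbm 1\{k\in\widehat{\mathcal S}\}E_k}{M\vee1}\right]
=\mathbb E\!\left[\frac{\mathbbm 1\{k\in\widehat{\mathcal S}\}}{M}\,\mathbb E[E_k\mid k\in\widehat{\mathcal S},\Psi_k]\right]
\le \mathbb E\!\left[\frac{\mathbbm 1\{k\in\widehat{\mathcal S}\}}{M\vee1}\right].
\]
Summing over $k\in\mathcal H_0$ bounds the right side of \eqref{eq:selected_family_ebh_bound} by $\alpha\,\mathbb E[|\widehat{\mathcal S}\cap\mathcal H_0|/(M\vee1)]\le\alpha$. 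If the detector or pool uses independent randomization, we condition on it throughout and average at the end.

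\textbf{Main obstacle.} The only delicate point is making the conditioning rigorous. The conditional expectation given the event $\{k\in\widehat{\mathcal S}\}$ together with $\Psi_k$ should be read as $\mathbb E[\mathbbm 1\{k\in\widehat{\mathcal S}\}E_k\mid\Psi_k]\le \mathbb P(k\in\widehat{\mathcal S}\mid\Psi_k)$. Measurability of $M$ on the event means $M\mathbbm 1\{k\in\widehat{\mathcal S}\}=f(\Psi_k)\mathbbm 1\{k\in\widehat{\mathcal S}\}$ for some measurable $f$. With these two readings, Step~2 becomes pulling the $\Psi_k$-measurable factor $1/f(\Psi_k)$ out of the conditional expectation. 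I would state this explicitly, because the event itself need not be $\Psi_k$-measurable.
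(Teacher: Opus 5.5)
Your proposal is correct and takes essentially the same route as the paper. You first derive the pointwise e-BH bound $\mathbbm 1\{k\in\widehat{\mathcal R}_e\}/(|\widehat{\mathcal R}_e|\vee1)\le\alpha\,\mathbbm 1\{k\in\widehat{\mathcal S}\}E_k/(M\vee1)$, sum it over nulls, and then condition each null term on $\Psi_k$, under which $M$ is fixed on the detection event. Your explicit reading $\mathbb E[\mathbbm 1\{k\in\widehat{\mathcal S}\}E_k\mid\Psi_k]\le\mathbb P(k\in\widehat{\mathcal S}\mid\Psi_k)$, together with $M\mathbbm 1\{k\in\widehat{\mathcal S}\}=f(\Psi_k)\mathbbm 1\{k\in\widehat{\mathcal S}\}$, is a valid and slightly more careful version of the paper's remark that the denominator is fixed under this conditioning.
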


\begin{proof}
Let \(R_e=|\widehat{\mathcal R}_e|\), and extend \(E_k\) by zero outside
the reported family. Each rejected input satisfies
\(E_k\ge M/(\alpha R_e)\). Hence
\begin{equation*}
\frac{\mathbbm 1\{k\in\widehat{\mathcal R}_e\}}{R_e\vee1}
    \le\frac{\alpha\,\mathbbm 1\{k\in\widehat{\mathcal S}\}E_k}{M\vee1}.
\end{equation*}
Summing over null indices and taking expectations proves
\eqref{eq:selected_family_ebh_bound}. For the second claim, condition each
null contribution from a reported interval on \(\Psi_k\). Its denominator is fixed under
this conditioning, so \eqref{eq:selected_family_conditional_e} bounds its
expectation by that of \(\mathbbm 1\{k\in\widehat{\mathcal S}\}/(M\vee1)\).
Summing gives the bound
\(\alpha\,\mathbb E[|\widehat{\mathcal S}\cap\mathcal H_0|/(M\vee1)]\le\alpha\).
\end{proof}

Run the detector on \(X_{1:T}\), using the fixed candidate pool
\(\mathcal I\). For \(k\in\widehat{\mathcal S}\), let \(\Pi_k\) be the
permutation group on \(I_k\), and write
\(\widehat{\mathcal S}^{\pi,k}=\mathbb A(X^{\pi,k};\mathcal I)\).
To preserve the target interval and the number of reported intervals, use
\begin{equation*}
\widetilde\Pi_k=\{\pi\in\Pi_k:
    k\in\widehat{\mathcal S}^{\pi,k},\
    |\widehat{\mathcal S}^{\pi,k}|=M\}.
\end{equation*}
With a nonnegative transformation \(g\), the full-sample conditional e-value is
\begin{equation}
    E_k^{\rm cond}=
    \frac{|\widetilde\Pi_k|\,g\{v_k(X_{I_k})\}}
    {\sum_{\pi\in\widetilde\Pi_k}g\{v_k(X^{\pi,k}_{I_k})\}}.
    \label{eq:overlap_conditional_evalue}
\end{equation}
The identity permutation is included once, and a zero denominator gives zero.
Preserving an additional detection feature, such as the reported split,
the whole reported set, or the detector path to \(I_k\), corresponds to
conditioning on that feature as well.

To verify conditional validity, fix a null index \(k\), the unordered
labeled observations inside \(I_k\), and the ordered observations outside
\(I_k\). Under \(H_{0k}\), the observations in \(I_k\) are i.i.d.
by the model in the main text. Hence the observed absolute ordering
\(\rho_0\) is uniform over \(\Pi_k\) under this conditioning. Write \(x^\rho\) for the data in absolute ordering \(\rho\), and
\(\widehat{\mathcal S}(\rho)=\mathbb A(x^\rho;\mathcal I)\). For a
nonnegative integer \(m\), define the set of orderings
\begin{equation*}
\mathfrak F_{k,m}=\{\rho\in\Pi_k:
    k\in\widehat{\mathcal S}(\rho),\
    |\widehat{\mathcal S}(\rho)\setminus\{k\}|=m\}.
\end{equation*}
Conditional on \(k\in\widehat{\mathcal S}\) and \(M=m+1\),
\(\rho_0\) is uniform over \(\mathfrak F_{k,m}\).
Composition with \(\rho_0\) maps the accepted relative permutations
bijectively onto \(\mathfrak F_{k,m}\). Thus the denominator in
\eqref{eq:overlap_conditional_evalue} is constant on this set, and
averaging its numerator over the set gives conditional mean one when
the denominator is positive, and zero otherwise. The same argument
applies after conditioning on additional exact detection features.
Since \(M=m+1\) on the detection event, Theorem~\ref{thm:selected_family_ebh}
proves FDR control without splitting or disjointness.

For a finite Monte Carlo implementation, draw \(B'\) independent uniform
proposals from \(\Pi_k\). Let \(\mathcal A_k^{\rm full}\) be the indices of
the proposals in \(\widetilde\Pi_k\), and let
\(N_k=|\mathcal A_k^{\rm full}|\). With
\(V_k=v_k(X_{I_k})\) and \(V_{k,b}=v_k(X^{\pi_{k,b},k}_{I_k})\), use
\begin{equation}
    \widehat E_k^{\rm cond}=
    \frac{(N_k+1)g(V_k)}
    {g(V_k)+\sum_{b\in\mathcal A_k^{\rm full}}g(V_{k,b})},
    \label{eq:full_conditional_mc_evalue}
\end{equation}
again setting a zero-denominator ratio to zero. Here \(B'\) counts
proposals, and \(N_k\) is random. Conditional on the set \(\mathfrak F_{k,m}\),
the acceptance probability is the same for every observed ordering.
Conditional further on the accepted proposal indices, the observed
ordering and the accepted orderings are independent uniform draws from
this set. Their transformed scores are therefore exchangeable, and
averaging the \(N_k+1\) numerator terms gives conditional mean at most one.
Theorem~\ref{thm:selected_family_ebh} consequently also applies to
\(\widehat E_k^{\rm cond}\).

This implementation satisfies \(\widehat E_k^{\rm cond}\le N_k+1\).
Every e-BH rejection requires an e-value of at least \(1/\alpha\), and
the threshold is higher when only a small fraction of the family is
rejected. Thus, with a fixed number of permutation draws, a low detection-preserving
acceptance rate can limit power even when the unconditioned local score
is large.

\subsection{Simulation comparison of $e$-BH and \texttt{POIS+}}
\label{app:epois-signal-sensitivity}

We compare the two graph-based procedures in Appendix~\ref{app:overlap_graph}
with full-sample conditional e-BH from Appendix~\ref{app:overlap}, and include
the full-sample detector output as a reference. We consider univariate
weak- and strong-signal settings, with 12 equally spaced changes. For mean
changes, each segment has length 35 and
\(X_t=\mu_\ell+1.25\varepsilon_t\) on segment \(\ell\), where
\begin{equation*}
(\mu_1,\ldots,\mu_{13})=
    \frac{c_\mu}{1.25}(0,3.2,-2.2,2.7,-3.1,2.3,-2.6,3.4,-1.7,2.9,-2.4,3.1,-2.0).
\end{equation*}
We use interval-removal NOT on a seeded interval pool with geometric
length ratio \(1/\sqrt2\) and minimum lengths 18 and 20 for the odd-side
and full-sample sequences, respectively. On each interval \([a,b]\), we
maximize the absolute CUSUM over splits \(u=a+8,\ldots,b-8\).
The threshold is the empirical 0.5th percentile of these interval scores,
or their minimum when the pool has at most 120 intervals, and is
recomputed for each permutation. For variance changes, each segment has length 40,
\(X_t=\sigma_\ell\varepsilon_t\), and
\begin{equation*}
(\log\sigma_1,\ldots,\log\sigma_{13})=
    c_\sigma(0,1.0,-0.6,1.1,-0.7,0.9,-0.8,1.2,-0.5,0.95,-0.7,1.1,-0.6).
\end{equation*}
We apply SaRa to \(\log(X_t^2+10^{-8})\) with half-window length \(h=8\),
using its original local windows, which may overlap. The weak setting uses \(c_\mu=c_\sigma=1\), whereas
the strong setting uses \(c_\mu=2\) and \(c_\sigma=1.7\).

The independent innovations follow \(N(0,1)\), \(t_4/\sqrt2\),
\({\rm Laplace}(0,1/\sqrt2)\), or
\(\{(1-\xi)Z+3.5\xi Z'\}/\sqrt{0.94+0.06(3.5)^2}\), where
\(Z,Z'\) are independent standard normal variables and
\(\xi\sim{\rm Bernoulli}(0.06)\) is independent of them.
Scores and conditional weights use the absolute CUSUM
at the detected split,
\begin{equation*}
 \sqrt{\frac{n_1n_2}{n_1+n_2}}\,|\bar X_1-\bar X_2|,
\end{equation*}
where \(n_1,n_2\) and \(\bar X_1,\bar X_2\) are the group sizes and
means on either side of the split. For variance changes, the means are
computed from the log-transformed observations.
Each configuration has 200 independent repetitions. All methods use
\(\alpha=0.1\). Both conditional constructions use the main-text defaults
\(B=B'=499\) and \(g(v)=v^{12}\). Both detectors report at most 30 intervals, without further
model-selection pruning. For each conditional
weight or e-value, a proposal is accepted only if the target interval,
its reported split location, and the number of reported intervals are preserved.
Graph-POIS and Graph-POIS+ use odd-side detection and even-side testing.
Full-sample conditional e-BH uses the unsplit sequence and
\eqref{eq:full_conditional_mc_evalue}. We use the e-BH rule.

Figures~\ref{fig:univar-weak-epois} and~\ref{fig:univar-strong-epois}
show that \texttt{Graph-POIS} and \texttt{Graph-POIS+} keep empirical FDR
below the nominal level under all four noise distributions and both signal
strengths. The full-sample detector covers most changes but reports many
intervals containing no true change-point, resulting in FDR above the
nominal level.

For mean changes, the two graph-based procedures have similar TPRs, and
both cover more true changes under stronger signals. For variance changes,
\texttt{Graph-POIS+} attains higher TPR than \texttt{Graph-POIS} under all
four noise distributions, with larger gains under weak signals. Since the
two procedures use the same reported intervals and even-side $p$-values,
the higher coverage reflects the contribution of odd-side weighting.
As the signal becomes stronger, \texttt{Graph-POIS} already covers most
changes, leaving less room for improvement.

Full-sample conditional $e$-BH makes no rejections under either signal
strength. The requirement to preserve the reported interval, split
location, and family size leaves few accepted permutations. Each $e$-value
in \eqref{eq:full_conditional_mc_evalue} is bounded by one plus the number
of accepted permutations, so the low acceptance rate limits its ability
to reach the $e$-BH rejection thresholds with the fixed number of permutation draws.

\begin{figure}[t]
    \centering
    \includegraphics[width=0.98\textwidth]{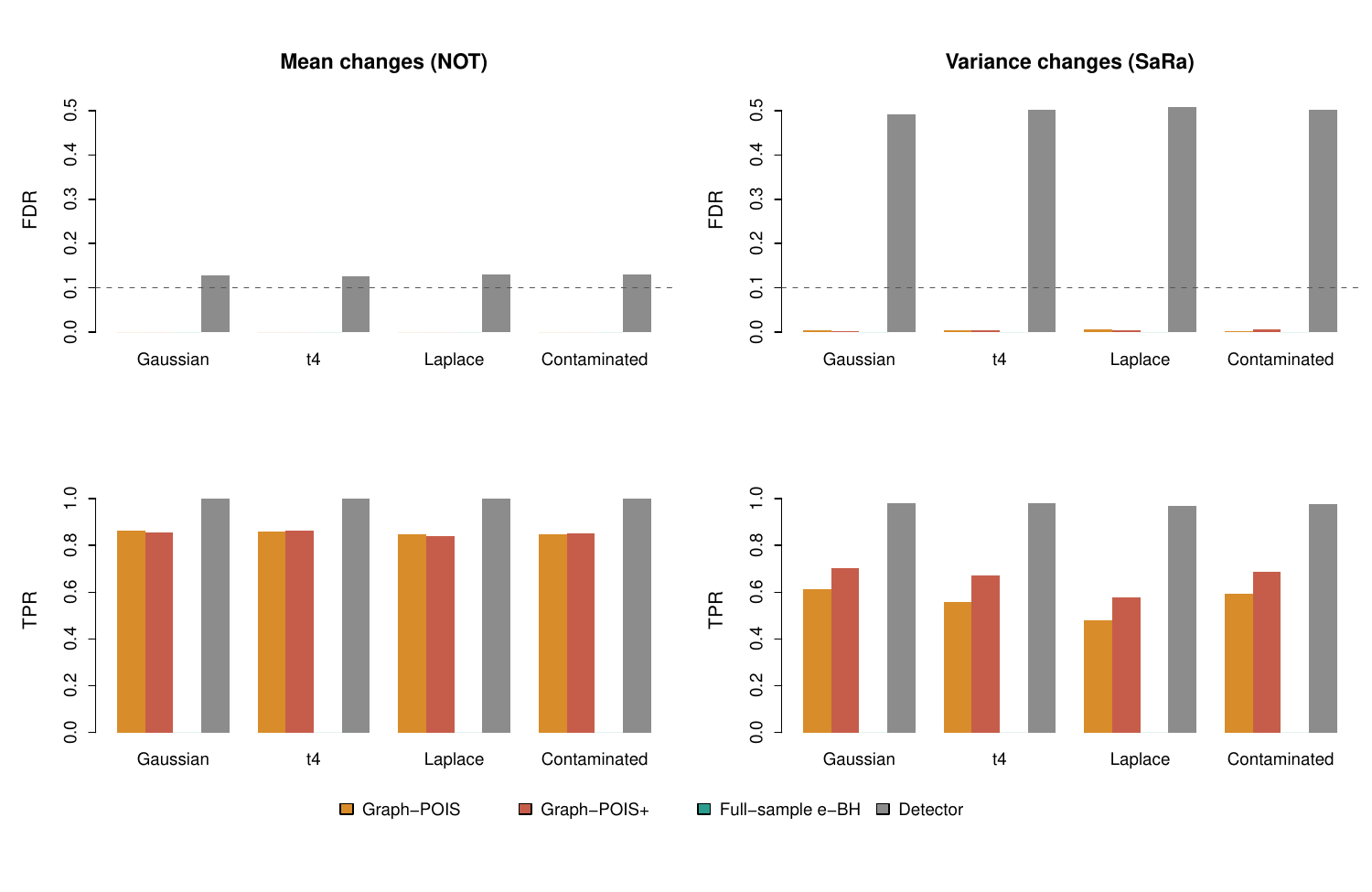}
    \caption{Weak-signal univariate experiments with
    \(c_\mu=c_\sigma=1\). Bars show means over 200 repetitions for
    Graph-POIS, Graph-POIS+, full-sample conditional e-BH, and the
    full-sample detector. Columns correspond to mean changes (NOT) and
    variance changes (SaRa). Rows show empirical interval-level FDR and
    coverage TPR. The dashed line marks \(\alpha=0.1\). Full-sample
    conditional e-BH has zero FDR and TPR because it makes no rejections.}
    \label{fig:univar-weak-epois}
\end{figure}

\begin{figure}[t]
    \centering
    \includegraphics[width=0.98\textwidth]{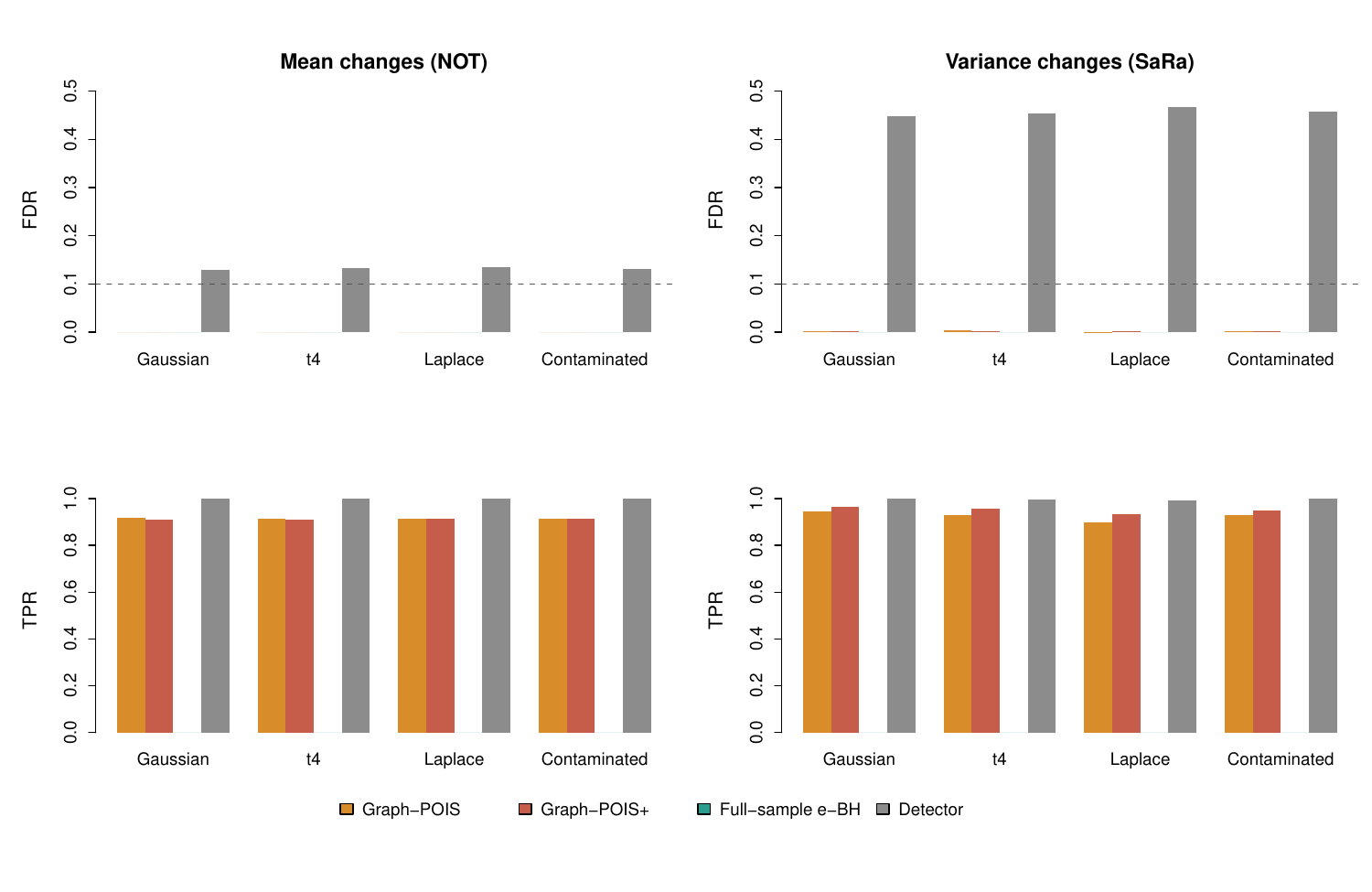}
    \caption{Strong-signal univariate experiments with
    \(c_\mu=2\) and \(c_\sigma=1.7\), using the same methods, numbers of permutation draws,
    transform, and plotting conventions as
    Figure~\ref{fig:univar-weak-epois}. Graph-based aggregation has higher
    coverage than under weak signals. Full-sample conditional e-BH still
    makes no rejections with this number of permutation draws.}
    \label{fig:univar-strong-epois}
\end{figure}

\section{Proofs of the FDR control theorems}
\subsection{Proof of Theorem~\ref{thm:FDR-POIS}}
\begin{proof}
Condition on the odd-side sequence $X_{\mathcal O}$ and any
data-independent randomization used to generate the interval pool or run
the detector.
For readability, write conditioning on $X_{\mathcal O}$ for this fixed
information in the displays below. Then the reported
index set \(\widehat{\mathcal S}\) and the reported intervals
\(\{I_k:k\in\widehat{\mathcal S}\}\) are fixed. If
\(|\widehat{\mathcal S}|=0\), then \(\widehat{\mathcal R}=\varnothing\) and
the claim is trivial. For any \(k\in\widehat{\mathcal S}\cap\mathcal H_0\), the interval null
\(H_{0k}\) implies that all observations in
\(I_k\) belong to the same stationary segment. Under the independent
multiple change-point model, the even-side observations in
\(I_k^{\mathcal E}=I_k\cap\mathcal E\) are therefore independent and
identically distributed. Hence they are exchangeable conditional on
\(X_{\mathcal O}\). To see why this gives validity of the Monte Carlo
p-value in \eqref{eq:perm-p-split}, condition further on the unordered
even observations in this interval. The observed labeled ordering is
uniform, and composing it with independent uniform permutations makes
the $B+1$ absolute orderings exchangeable. Applying the same score
functional therefore gives exchangeable scores
$V_k^{\mathcal E},V_{k,1}^{\mathcal E},\ldots,V_{k,B}^{\mathcal E}$.

For each coordinate, form its upper rank, counting ties with the
inequality $\ge$. Among $B+1$ scores, at most $\lfloor t(B+1)\rfloor$
coordinates have upper rank at most $t(B+1)$. Exchangeability gives the
same probability to each coordinate, so the probability that coordinate
zero has such a rank is at most $t$. This rank is the numerator in
\eqref{eq:perm-p-split}. Averaging over the unordered observations yields \(\mathbb P\left( P_k^{\mathcal E}\le t \mid X_{\mathcal O} \right) \le t\) for \(0\le t\le1\).
Thus the null even-side permutation \(p\)-values are conditionally
superuniform.

By Assumption~\ref{assum:disjoint_intervals}, the reported intervals are
disjoint. Therefore the even-side blocks
$\{I_k^{\mathcal E}:k\in\widehat{\mathcal S}\}$ are disjoint. Since the original
observations and the Monte Carlo permutations are independent across disjoint
blocks, the $p$-values
$\{P_k^{\mathcal E}:k\in\widehat{\mathcal S}\}$ are conditionally independent
given $X_{\mathcal O}$. Thus, applying the BH procedure \citep{benjamini1995controlling} to the
$p$-values indexed by $\widehat{\mathcal S}$ gives \(\mathbb E\left[ \frac{|\widehat{\mathcal R}\cap\mathcal H_0|} {|\widehat{\mathcal R}|\vee 1} \mid X_{\mathcal O} \right] \le \alpha \frac{|\widehat{\mathcal S}\cap\mathcal H_0|} {|\widehat{\mathcal S}|}\),
where $\widehat{\mathcal R}$ is the BH rejection set defined in
\eqref{eq:BH_rejected_set}. Taking expectations and using the convention that
the right-hand side is zero when $|\widehat{\mathcal S}|=0$, we obtain
\begin{equation*}
\mathrm{FDR}(\widehat{\mathcal R})
    =
    \mathbb E\left[
    \frac{|\widehat{\mathcal R}\cap\mathcal H_0|}
    {|\widehat{\mathcal R}|\vee 1}
    \right]
    \le
    \alpha
    \mathbb E\left[
    \frac{|\widehat{\mathcal S}\cap\mathcal H_0|}
    {|\widehat{\mathcal S}|\vee 1}
    \right]
    \le \alpha.
\end{equation*}
This proves the theorem.
\end{proof}

\subsection{Proof of Theorem~\ref{thm:FDR_POIS+}}
\begin{proof}
Let $\mathcal F_+^{\mathcal O}$ be the sigma-field generated by the odd-side
sequence $X_{\mathcal O}$ and by any additional odd-side randomization used to
construct the weights $\{W_k^{\mathcal O}:k\in\widehat{\mathcal S}\}$. Conditional
on $\mathcal F_+^{\mathcal O}$, the reported index set $\widehat{\mathcal S}$, the
reported intervals $\{I_k:k\in\widehat{\mathcal S}\}$, and the weights
$\{W_k^{\mathcal O}:k\in\widehat{\mathcal S}\}$ are fixed with respect to the
even-side permutation $p$-values. If $|\widehat{\mathcal S}|=0$, the claim is
trivial. Hence assume $|\widehat{\mathcal S}|\ge 1$.

By Assumption~\ref{assum:disjoint_intervals}, the even-side blocks
$\{I_k^{\mathcal E}:k\in\widehat{\mathcal S}\}$ are disjoint.
Each $P_k^{\mathcal E}$ is computed from its own even-side block and
independent permutation draws. The argument in the proof of
Theorem~\ref{thm:FDR-POIS} therefore applies:
the even-side permutation $p$-values
$\{P_k^{\mathcal E}:k\in\widehat{\mathcal S}\}$ are conditionally independent
given $\mathcal F_+^{\mathcal O}$, and $P_k^{\mathcal E}$ is conditionally
superuniform for every $k\in\widehat{\mathcal S}\cap\mathcal H_0$.

The rejection set $\widehat{\mathcal R}^+$ in
\eqref{eq:weighted_BH_rejected_set} is the BH rejection set applied to
the weighted $p$-values
$P_k^+=(P_k^{\mathcal E}/W_k^{\mathcal O})\wedge 1$, with the convention
$P_k^+=1$ when $W_k^{\mathcal O}=0$. We give the conditional weighted BH calculation explicitly
\citep{ignatiadis2024values}. For a fixed null $k$, let $r_k$
be the BH rejection count after replacing its weighted $p$-value by zero.
The leave-one-out identity proved in Lemma~\ref{lem:weighted_indbh},
applied to the independent set of all reported indices, gives \(k\in\widehat{\mathcal R}^{+} \Longleftrightarrow P_k^{\mathcal E}\le\alpha W_k^{\mathcal O}r_k/|\widehat{\mathcal S}|\) and \(|\widehat{\mathcal R}^{+}|=r_k\) on this event,
when $W_k^{\mathcal O}>0$. A zero weight contributes no rejection.
Here $r_k\ge1$ is determined by the other even-side permutation p-values and
$\mathcal F_+^{\mathcal O}$. Conditional independence and
superuniformity imply
\begin{equation*}
\begin{aligned}
\mathbb E\left[
 \frac{\mathbbm1\{k\in\widehat{\mathcal R}^{+}\}}
 {|\widehat{\mathcal R}^{+}|\vee1}\,
 \middle|\,\mathcal F_+^{\mathcal O}\right]
 &=\mathbb E\left[
 \frac{\mathbb P(P_k^{\mathcal E}\le\alpha W_k^{\mathcal O}r_k/|\widehat{\mathcal S}|
                  \mid\mathcal F_+^{\mathcal O},r_k)}{r_k}
 \,\middle|\,\mathcal F_+^{\mathcal O}\right]\\
 &\le\mathbb E\left[
 \frac{\min\{\alpha W_k^{\mathcal O}r_k/|\widehat{\mathcal S}|,1\}}{r_k}
 \,\middle|\,\mathcal F_+^{\mathcal O}\right]
 \le\frac{\alpha W_k^{\mathcal O}}{|\widehat{\mathcal S}|}.
\end{aligned}
\end{equation*}
Summing over the reported null indices gives
\begin{equation*}
\mathbb E\left[
    \frac{|\widehat{\mathcal R}^+\cap\mathcal H_0|}
    {|\widehat{\mathcal R}^+|\vee 1}
    \mid
    \mathcal F_+^{\mathcal O}
    \right]
\le
    \alpha
    \frac{1}{|\widehat{\mathcal S}|}
    \sum_{k\in\widehat{\mathcal S}\cap\mathcal H_0}
    W_k^{\mathcal O}.
\end{equation*}
Taking expectations and using the convention that the right-hand side is zero
when $|\widehat{\mathcal S}|=0$, we obtain
\begin{equation*}
\mathrm{FDR}(\widehat{\mathcal R}^+)
    =
    \mathbb E\left[
    \frac{|\widehat{\mathcal R}^+\cap\mathcal H_0|}
    {|\widehat{\mathcal R}^+|\vee 1}
    \right]\\
    \le
    \alpha
    \mathbb E\left[
    \frac{1}{|\widehat{\mathcal S}|\vee 1}
    \sum_{k\in\widehat{\mathcal S}\cap\mathcal H_0}
    W_k^{\mathcal O}
    \right].
\end{equation*}
This proves the theorem.
\end{proof}

\subsection{Proof of Theorem~\ref{thm:sufficient_cond_Evalue}}

\begin{proof}
Theorem~\ref{thm:FDR_POIS+} reduces the claim to the expected normalized
sum of null weights in \eqref{eq:post_detection_Evalue_error}. We bound
each null contribution conditional on its own detection information,
then sum. We may define $W_k^{\mathcal O}=0$ when $k$ is unreported,
since the corresponding contribution is multiplied by its detection
indicator. On the event
\(\{k\in\widehat{\mathcal S}\}\), the number of reported intervals
\(|\widehat{\mathcal S}|\) is measurable with respect to \(\Psi_k\). Hence
the factor
\(\mathbbm{1}\{k\in\widehat{\mathcal S}\}/(|\widehat{\mathcal S}|\vee 1)\)
is measurable with respect to the conditioning in
\eqref{eq:cond_Evalue}. Therefore, for each \(k\in\mathcal H_0\),
\begin{equation*}
\mathbb E\left[
    \frac{\mathbbm{1}\{k\in\widehat{\mathcal S}\}}
    {|\widehat{\mathcal S}|\vee 1}
    W_k^{\mathcal O}
    \right]
=
    \mathbb E\left[
    \frac{\mathbbm{1}\{k\in\widehat{\mathcal S}\}}
    {|\widehat{\mathcal S}|\vee 1}
    \mathbb E\left[
    W_k^{\mathcal O}
    \mid
    k\in\widehat{\mathcal S},\Psi_k
    \right]
    \right] 
 \le
    \mathbb E\left[
    \frac{\mathbbm{1}\{k\in\widehat{\mathcal S}\}}
    {|\widehat{\mathcal S}|\vee 1}
    \right],
\end{equation*}
where the inequality uses the detection-conditional \(e\)-value property
\eqref{eq:cond_Evalue}.

Summing over \(k\in\mathcal H_0\) gives
\begin{equation*}
\begin{aligned}
\mathbb E\left[
 \frac{1}{|\widehat{\mathcal S}|\vee1}
 \sum_{k\in\widehat{\mathcal S}\cap\mathcal H_0}W_k^{\mathcal O}
 \right]
 &=\sum_{k\in\mathcal H_0}
 \mathbb E\left[
 \frac{\mathbbm1\{k\in\widehat{\mathcal S}\}}{|\widehat{\mathcal S}|\vee1}
 W_k^{\mathcal O}\right]\\
 &\le\sum_{k\in\mathcal H_0}
 \mathbb E\left[
 \frac{\mathbbm1\{k\in\widehat{\mathcal S}\}}{|\widehat{\mathcal S}|\vee1}
 \right]
 =\mathbb E\left[
 \frac{|\widehat{\mathcal S}\cap\mathcal H_0|}{|\widehat{\mathcal S}|\vee1}
 \right]\le1.
\end{aligned}
\end{equation*}
This proves \eqref{eq:post_detection_Evalue_error}. The FDR control
\(\mathrm{FDR}(\widehat{\mathcal R}^+)\le\alpha\) then follows from
Theorem~\ref{thm:FDR_POIS+}.
\end{proof}

\subsection{Proof of Theorem~\ref{thm:perm-e-validity}}

\begin{proof}
Fix $k\in\mathcal H_0$. If the detector $\mathbb A$ uses external
randomization, condition on its realized value and use the same value
for every local permutation. This conditioning is removed at the end
by the tower property.

Fix the unordered observations inside $I_k^{\mathcal O}$ and the ordered odd observations outside it. Under $H_{0k}$, the observed ordering is uniform. Given any observed ordering, the independent uniform permutations produce independent uniform reorderings whose joint distribution does not depend on that ordering. Thus the observed and permuted subsequences are independent uniform arrangements under the stated conditioning. Set $\pi_{k,0}$ to the identity permutation. Then $\widehat{\mathcal S}^{\pi_{k,0},k}=\widehat{\mathcal S}$ and
$v_k(X_{\pi_{k,0}(I_k^{\mathcal O})})=V_k^{\mathcal O}$.
The detector and the complete scoring function are applied in the same way
to every arrangement, so the $B'+1$ pairs \(\left(\widehat{\mathcal S}^{\pi_{k,b},k},\, v_k(X_{\pi_{k,b}(I_k^{\mathcal O})})\right),\) are exchangeable for \(b=0,\ldots,B'\). All fractions below are zero when their denominator
is zero, and $W_k^{\mathcal O}$ is extended by zero when $k$ is unreported.

Fix a possible realized value \(m\) of the size component
\(|\widehat{\mathcal S}_{-k}|\) in \(\Psi_k\). By the definition of the
size-preserving accepted set in \eqref{eq:accepted_index}, on the event
\(k\in\widehat{\mathcal S}\) and
\(|\widehat{\mathcal S}_{-k}|=m\),
\begin{equation*}
\begin{aligned}
&\mathbbm{1}\left\{
    k\in\widehat{\mathcal S},\
    |\widehat{\mathcal S}_{-k}|=m
    \right\}
    W_k^{\mathcal O}\\
&=\mathbbm{1}\left\{
k\in\widehat{\mathcal S},\
|\widehat{\mathcal S}_{-k}|=m
\right\}\times
\frac{
\left[
\sum_{b=0}^{B'}
\mathbbm{1}\left\{
k\in\widehat{\mathcal S}^{\pi_{k,b},k},\
|\widehat{\mathcal S}^{\pi_{k,b},k}_{-k}|=m
\right\}
\right]
g(V_k^{\mathcal O})
}{
\sum_{b=0}^{B'}
\mathbbm{1}\left\{
k\in\widehat{\mathcal S}^{\pi_{k,b},k},\
|\widehat{\mathcal S}^{\pi_{k,b},k}_{-k}|=m
\right\}
g\{v_k(X_{\pi_{k,b}(I_k^{\mathcal O})})\}
}.
\end{aligned}
\end{equation*}
Conditional on the unordered observations inside \(I_k^{\mathcal O}\) and the
ordered observations outside \(I_k^{\mathcal O}\), exchangeability of the $B'+1$ pairs above gives
\begin{equation*}
\begin{aligned}
&\mathbb E_{H_{0k}}\Bigg[
\mathbbm{1}\left\{
k\in\widehat{\mathcal S},\
|\widehat{\mathcal S}_{-k}|=m
\right\}
W_k^{\mathcal O}
\ \Bigg|
\lbag X_t\rbag_{t\in I_k^{\mathcal O}},
(X_t)_{t\in\mathcal O\setminus I_k^{\mathcal O}}
\Bigg]
\\
&=
\frac{1}{B'+1}
\mathbb E_{H_{0k}}\Bigg[
\sum_{j=0}^{B'}
\mathbbm{1}\left\{
k\in\widehat{\mathcal S}^{\pi_{k,j},k},\
|\widehat{\mathcal S}^{\pi_{k,j},k}_{-k}|=m
\right\}
\\
&\qquad\qquad\times
\frac{
\left[
\sum_{b=0}^{B'}
\mathbbm{1}\left\{
k\in\widehat{\mathcal S}^{\pi_{k,b},k},\
|\widehat{\mathcal S}^{\pi_{k,b},k}_{-k}|=m
\right\}
\right]
g\{v_k(X_{\pi_{k,j}(I_k^{\mathcal O})})\}
}{
\sum_{b=0}^{B'}
\mathbbm{1}\left\{
k\in\widehat{\mathcal S}^{\pi_{k,b},k},\
|\widehat{\mathcal S}^{\pi_{k,b},k}_{-k}|=m
\right\}
g\{v_k(X_{\pi_{k,b}(I_k^{\mathcal O})})\}
}
\ \Bigg|
\lbag X_t\rbag_{t\in I_k^{\mathcal O}},
(X_t)_{t\in\mathcal O\setminus I_k^{\mathcal O}}
\Bigg]
\\
&\le
\frac{1}{B'+1}
\mathbb E_{H_{0k}}\Bigg[
\sum_{b=0}^{B'}
\mathbbm{1}\left\{
k\in\widehat{\mathcal S}^{\pi_{k,b},k},\
|\widehat{\mathcal S}^{\pi_{k,b},k}_{-k}|=m
\right\}
\ \Bigg|
\lbag X_t\rbag_{t\in I_k^{\mathcal O}},
(X_t)_{t\in\mathcal O\setminus I_k^{\mathcal O}}
\Bigg]
\\
&=
\mathbb P_{H_{0k}}\left(
k\in\widehat{\mathcal S},\
|\widehat{\mathcal S}_{-k}|=m
\ \middle|\
\lbag X_t\rbag_{t\in I_k^{\mathcal O}},
(X_t)_{t\in\mathcal O\setminus I_k^{\mathcal O}}
\right).
\end{aligned}
\end{equation*}
The first equality symmetrizes over the \(B'+1\) exchangeable pairs
of detector outputs and scores. In the
inequality, when the denominator is positive, the denominator cancels with the
corresponding weighted sum in the numerator. When the denominator is zero, the
fraction is zero by convention. The last equality again follows from
exchangeability.

Now condition additionally on
\(\{k\in\widehat{\mathcal S},|\widehat{\mathcal S}_{-k}|=m\}\). Dividing the
previous display by the corresponding conditional probability, whenever this
probability is positive, gives
\begin{equation*}
\begin{aligned}
&\mathbb E_{H_{0k}}\left[
    W_k^{\mathcal O}
    \,\middle|\,
    k\in\widehat{\mathcal S},\
    |\widehat{\mathcal S}_{-k}|=m,\
    \lbag X_t\rbag_{t\in I_k^{\mathcal O}},\
    (X_t)_{t\in\mathcal O\setminus I_k^{\mathcal O}}
    \right]
    \le 1 .
\end{aligned}
\end{equation*}
Since \(m\) is an arbitrary realized value of
\(|\widehat{\mathcal S}_{-k}|\), this proves \(\mathbb E_{H_{0k}}\left[ W_k^{\mathcal O} \,\middle|\, k\in\widehat{\mathcal S},\Psi_k \right] \le 1\).
Removing the temporary conditioning on detector randomization by the tower
property gives the same inequality without conditioning on that randomization.
Thus \eqref{eq:cond_Evalue} holds.

It remains to verify the measurability condition used in
Theorem~\ref{thm:sufficient_cond_Evalue}. By the definition of \(\Psi_k\), the
size \(|\widehat{\mathcal S}_{-k}|\) is \(\Psi_k\)-measurable. Hence, on
\(k\in\widehat{\mathcal S}\), \(|\widehat{\mathcal S}|=1+
|\widehat{\mathcal S}_{-k}|\) is \(\Psi_k\)-measurable. Therefore the
conditions of Theorem~\ref{thm:sufficient_cond_Evalue} hold for every
\(k\in\mathcal H_0\). Under Assumption~\ref{assum:disjoint_intervals},
combining Theorems~\ref{thm:sufficient_cond_Evalue} and
\ref{thm:FDR_POIS+} gives \(\mathrm{FDR}(\widehat{\mathcal R}^+)\le\alpha\).
This proves the theorem.
\end{proof}

\subsection{Proof of Proposition~\ref{prop:window_cond_invariance}}
\label{app:proof_window_cond_invariance}

\begin{proof}
Fix $k\in\widehat{\mathcal S}$. We prove the two claims separately:
the local check preserves the entire detection vector, and its exact
accepted set gives the conditional e-value property. For $\pi\in\Pi_k$, let
\(D_m^\pi\) denote the detection indicator obtained by applying the
local-window detector to \(X_{\mathcal O}^{\pi,k}\). Let \(\pi\in\widetilde\Pi_k^{\rm win}\). By definition,
\(D_m^\pi=D_m\) whenever \(|m-k|\le3h+\rho_h\). In particular,
\(D_k^\pi=D_k=1\).

Now consider \(m\in[T]\) satisfying
\(|m-k|>3h+\rho_h\). The score \(\widehat Q_t\) depends only on the
observations in \(I_t^{\mathcal O}\), and hence
\(\widehat Q_t^\pi=\widehat Q_t\) whenever
\(I_t^{\mathcal O}\cap I_k^{\mathcal O}=\varnothing\). For the
local-window family, two odd-side windows can intersect only if their
centers are within distance \(2h\). Thus
\(\widehat Q_t^\pi=\widehat Q_t\) whenever \(|t-k|>2h\).

The indicator \(D_m\) is a deterministic function of
\(\{\widehat Q_t:|t-m|\le h+\rho_h\}\). For every such \(t\), \(|t-k| \ge |m-k|-|t-m| > (3h+\rho_h)-(h+\rho_h) = 2h\).
Therefore every score entering the rule for \(D_m\) is unchanged, and
\(D_m^\pi=D_m\). Combining the two regions shows that
\(D_m^\pi=D_m\) for every \(m\in[T]\). Consequently,
\begin{equation*}
\widehat{\mathcal S}^{\pi,k}
    =
    \{m\in[T]:D_m^\pi=1\}
    =
    \{m\in[T]:D_m=1\}
    =
    \widehat{\mathcal S}.
\end{equation*}

For the conditional e-value property, we need the converse as well.
If \(\pi\in\Pi_k\) preserves the complete detection
vector, then \(D_m^\pi=D_m\) in particular for every
\(|m-k|\le3h+\rho_h\), and hence
\(\pi\in\widetilde\Pi_k^{\rm win}\). Together with the preceding
detection-invariance argument, this gives
\begin{equation}
    \widetilde\Pi_k^{\rm win}
    =
    \left\{
    \pi\in\Pi_k:
    (D_m^\pi)_{m\in[T]}
    =
    (D_m)_{m\in[T]}
    \right\}.
    \label{eq:window_selection_event}
\end{equation}

Now suppose additionally that \(k\in\mathcal H_0\), and define \(\Phi_k^{\rm win}=\left(\Psi_k,\,(D_m)_{m\in[T]}\right)\). Under \(H_{0k}\), conditional on the unordered observations inside
\(I_k^{\mathcal O}\) and the ordered observations outside \(I_k^{\mathcal O}\), the observed absolute ordering is uniform. Conditioning further on \(\Phi_k^{\rm win}\) restricts it to all orderings with the observed detection vector. By \eqref{eq:window_selection_event}, these are exactly the orderings generated by accepted relative permutations. The same scoring function
is evaluated on each arrangement. Thus the symmetrization calculation in Theorem~\ref{thm:perm-e-validity} applies with this refined condition and gives \(\mathbb E_{H_{0k}} \left[ W_k^{\mathcal O} \mid k\in\widehat{\mathcal S},\Phi_k^{\rm win} \right] \le1\).
Since \(\Phi_k^{\rm win}\) refines \(\Psi_k\), the tower property
yields \(\mathbb E_{H_{0k}} \left[ W_k^{\mathcal O} \mid k\in\widehat{\mathcal S},\Psi_k \right] \le1\).
The FDR statement follows from
Theorems~\ref{thm:sufficient_cond_Evalue}
and~\ref{thm:FDR_POIS+}.
\end{proof}

\subsection{Proof of Proposition~\ref{prop:narrow_cond_invariance}}
\label{app:proof_narrow_cond_invariance}

\begin{proof}
Fix \(k\in\widehat{\mathcal S}\), and hold any detector randomization fixed
throughout. Write \(\mathcal P_k=(G_{k,0},\ldots,G_{k,L_k})\), with
\(G_{k,0}=[1,T]\), and let \(j_\ell=\hat k(G_{k,\ell})\), so that
\(j_{L_k}=k\).

Take \(\pi\in\widetilde\Pi_k^{\rm nar}\), and let \(G_{k,\ell}^{\pi}\)
denote the successive search segments containing \(I_k\) in the permuted
recursion. The path check gives \(\hat k^\pi(G_{k,\ell})=j_\ell\) for
every \(0\le\ell\le L_k\). Both recursions start at
\(G_{k,0}^{\pi}=G_{k,0}=[1,T]\). Suppose that
\(G_{k,\ell}^{\pi}=G_{k,\ell}\) for some \(\ell<L_k\). Both recursions
then choose \(I_{j_\ell}\), and the same split produces the same child
segment containing \(I_k\), so \(G_{k,\ell+1}^{\pi}=G_{k,\ell+1}\).
Induction therefore gives
\begin{equation*}
G_{k,\ell}^{\pi}=G_{k,\ell},
 \qquad
 \hat k^\pi(G_{k,\ell}^{\pi})=\hat k(G_{k,\ell})=j_\ell,
 \qquad 0\le\ell\le L_k.
\end{equation*}
In particular, \(\hat k^\pi(G_{k,L_k}^{\pi})=j_{L_k}=k\), so the permuted
recursion also reports \(I_k\) on \(G_{k,L_k}\).

Every branch separating from \(\mathcal P_k\), and every segment generated
after \(I_k\) is reported, is disjoint from \(I_k\). For any such segment
\(G\) and any candidate \(I_j\subseteq G\),
\(I_j^{\mathcal O}\cap I_k^{\mathcal O}=\varnothing\), so the local
permutation leaves the observations used by \(\widehat Q_j\) unchanged.
The deterministic search and tie-breaking rules consequently give
\begin{equation*}
\widehat Q_j^\pi=\widehat Q_j\quad\text{for all }I_j\subseteq G,
 \qquad
 \hat k^\pi(G)=\hat k(G).
\end{equation*}
Applying this argument recursively on these branches yields
\(\widehat{\mathcal S}^{\pi,k}=\widehat{\mathcal S}\).

For an arrangement \(x\) of the observations in \(I_k^{\mathcal O}\), let
\(\Theta_k(x)\) be the ordered list of search segments and chosen indices
on the path to \(I_k\), with the outside observations fixed. Set
\(\Theta_k(x)=\varnothing\) when \(I_k\) is not reported. The preceding
induction shows that every accepted permutation preserves this list.
Conversely, preserving the list satisfies the path check at every observed
segment. Hence,
\begin{equation*}
\widetilde\Pi_k^{\rm nar}
 =\left\{\pi\in\Pi_k:
 \Theta_k\!\left(X_{\pi(I_k^{\mathcal O})}\right)
 =\Theta_k\!\left(X_{I_k^{\mathcal O}}\right)\right\}.
\end{equation*}
Define \(\Phi_k^{\rm nar}=(\Psi_k,\Theta_k(X_{I_k^{\mathcal O}}))\).
Now let \(k\in\mathcal H_0\). Conditional on the unordered observations
inside \(I_k^{\mathcal O}\) and the ordered observations outside it, the
observed absolute ordering is uniform. Conditioning further on
\(\Phi_k^{\rm nar}\) restricts it to all arrangements with the observed
path. By the equivalence above, these are exactly the arrangements
generated by accepted relative permutations. Since they also give the
same reported family, \(|\widehat{\mathcal S}_{-k}|\) imposes no further
restriction.

Conditional also on which proposals are accepted, the observed and
accepted arrangements are independent and uniform on this same set.
The same scoring function is applied to every
arrangement. Thus the symmetrization calculation in
Theorem~\ref{thm:perm-e-validity}, with the same zero-denominator
convention, applies under this refined detection information. Averaging
over the accepted proposal indices gives \(\mathbb E_{H_{0k}}\left[ W_k^{\mathcal O}\mid k\in\widehat{\mathcal S},\Phi_k^{\rm nar} \right]\le1\).
Since \(\Phi_k^{\rm nar}\) refines \(\Psi_k\), the tower property yields \(\mathbb E_{H_{0k}}\left[ W_k^{\mathcal O}\mid k\in\widehat{\mathcal S},\Psi_k \right]\le1\).
Finally, the tower property removes the conditioning on detector randomization.
\end{proof}

\subsection{Proofs of the computational cost statements}
\label{app:computational_cost}

The bounds below are deterministic, conditional on the reported interval
collection and on any external randomization used by the detector. For
\(B,B'\ge1\), we use the linear score-evaluation cost stated in
Propositions~\ref{prop:window_cost} and~\ref{prop:narrowest_cost}:
evaluating a score and applying a local permutation on a block \(J\)
costs \(O(|J|)\). The score-evaluation cost is also linear in the interval length
for the contrast statistics used in narrowest-over-threshold searches
\citep{baranowski2019narrowest}.

\begin{proof}[Proof of Proposition~\ref{prop:window_cost}]
The initial odd-side local-window scan costs
\(\mathfrak C_{\mathrm{win}}\).
Conditional on the odd-side data, the reported intervals are fixed and
disjoint. Their even-side projections are therefore disjoint as well, so
\begin{equation}
    \sum_{k\in\widehat{\mathcal S}}
    |I_k^{\mathcal E}|
    \le T.
    \label{eq:even_projection_total_length}
\end{equation}
For each reported interval, \texttt{POIS} computes one observed
score and \(B\) permuted scores. The total cost of computing these scores is thus \(O\left\{ (B+1)\sum_{k\in\widehat{\mathcal S}}|I_k^{\mathcal E}| \right\}=O(BT)\).
Sorting the even-side permutation \(p\)-values and applying BH costs
\(O\{|\widehat{\mathcal S}|\log(|\widehat{\mathcal S}|\vee2)\}\).
The sorting cost is zero for an empty reported family.
These calculations also apply to any detector with disjoint reported
intervals, replacing \(\mathfrak C_{\mathrm{win}}\) by its initial cost
\(\mathfrak C_{\mathbb A}\).

A generic implementation of \texttt{POIS+} reruns the detector for each
reported interval and each of its \(B'\) local permutations, at cost
\(O(B'|\widehat{\mathcal S}|\mathfrak C_{\mathbb A})\).
Local permutations and odd-side score evaluations contribute
\(O\{B'\sum_{k\in\widehat{\mathcal S}}|I_k^{\mathcal O}|\}=O(B'T)\)
by disjointness. This term can be omitted when it is already included in
the detector cost. The local checks below avoid these complete reruns.

It remains to bound the odd-side detection-preserving cost. For a reported
center \(k\), Proposition~\ref{prop:window_cond_invariance} requires
recomputing the affected scan scores and detection indicators only for
centers within distance \(3h+\rho_h\) of \(k\). By the blockwise
linear-time assumption in Proposition~\ref{prop:window_cost}, one local
permutation consequently costs \(O\{\min(T,h+\rho_h)\}\), since there are
at most \(T\) centers. Thus the total odd-side
cost is
\begin{equation}
    O\left\{
    B'
    \sum_{k\in\widehat{\mathcal S}}
    \min\{T,h+\rho_h\}
    \right\}.
    \label{eq:window_odd_raw_cost}
\end{equation}

Distinct reported centers are at least \(h\) apart by the first
local-maximum condition and more than \(\rho_h\) apart by the second.
Set \(d_h=\max\{h,\rho_h\}\). If \(d_h\le T\), then
\((|\widehat{\mathcal S}|-1)d_h\le T-1\). If \(d_h>T\), there is at most
one reported center. Hence, \(|\widehat{\mathcal S}|\min\{T,h+\rho_h\} \le 2|\widehat{\mathcal S}|\min\{T,d_h\} \le 4T\).
This reduces \eqref{eq:window_odd_raw_cost} to \(O(B'T)\).

Finally, sorting the weighted \(p\)-values costs
\(O(|\widehat{\mathcal S}|\log|\widehat{\mathcal S}|)\). Combining the
initial scan, even-side permutation testing, odd-side detection-preserving checks,
and sorting gives \(\mathfrak C_{\mathrm{POIS+}}^{\mathrm{win}} = \mathfrak C_{\mathrm{win}} + O\left\{ (B+B')T + |\widehat{\mathcal S}| \log|\widehat{\mathcal S}| \right\}\).
\end{proof}

\begin{proof}[Proof of Proposition~\ref{prop:narrowest_cost}]
Retain the notation from the preceding proof, and let \(L_{\rm pool}^{\mathcal O}:=\sum_{I_j\in\mathcal I}|I_j^{\mathcal O}|\). The initial narrowest-first search costs
\(\mathfrak C_{\mathrm{nar}}\). The cost of even-side permutation testing is \(O(BT)\) by
\eqref{eq:even_projection_total_length} and the preceding proof.
Likewise, the odd-side scores on the reported intervals, including the
detector statistics on those intervals when these are computed separately, cost \(O\left\{ B'\sum_{k\in\widehat{\mathcal S}} |I_k^{\mathcal O}| \right\} = O(B'L_{\rm pool}^{\mathcal O})\).
It remains to bound the detection-preserving path checks. For this purpose
only, define
\begin{equation*}
\mathcal B_k^{\rm upd}
:=
\bigcup_{G\in\mathcal P_k}
\left\{
I_j\in\mathcal I:
I_j\subseteq G,\;
j\ne\hat k(G),\;
|I_j|\le |I_{\hat k(G)}|,\;
I_j^{\mathcal O}\cap I_k^{\mathcal O}\ne\varnothing
\right\}.
\end{equation*}
and let \(\mathfrak L_{\rm upd} := \sum_{k\in\widehat{\mathcal S}} \sum_{I_j\in\mathcal B_k^{\rm upd}} |I_j^{\mathcal O}|\).
The union defining \(\mathcal B_k^{\rm upd}\) is understood as a set, so a
candidate appearing on several segments of \(\mathcal P_k\) is counted only
once.

Fix \(I_j\in\mathcal B_k^{\rm upd}\). For some
\(G\in\mathcal P_k\), we have
\(|I_j|\le |I_{\hat k(G)}|\) and
\(I_j^{\mathcal O}\cap I_k^{\mathcal O}\ne\varnothing\).
Since \(I_k\) is an over-threshold candidate contained in \(G\), the
narrowest-first rule implies
\(|I_{\hat k(G)}|\le |I_k|\). Thus every pair \((j,k)\) counted above satisfies \(|I_j|\le |I_k|\) and \(I_j\cap I_k\ne\varnothing\). For a fixed \(I_j\), there are at most two reported intervals with these properties. Indeed, if three pairwise disjoint reported intervals of length at least \(|I_j|\) all intersected \(I_j\), the middle one would be strictly contained in \(I_j\), contradicting its length. It follows that
\begin{equation}
    \mathfrak L_{\rm upd}
    \le
    2\sum_{I_j\in\mathcal I}
    |I_j^{\mathcal O}|
    =
    2L_{\rm pool}^{\mathcal O}.
    \label{eq:narrow_charging_bound}
\end{equation}

The bound \eqref{eq:narrow_charging_bound} controls score recomputations.
To count the comparisons, write
\(\mathcal P_k=(G_{k,0},\ldots,G_{k,L_k})\), where
\(G_{k,L_k}=G_k\), and let \(j_\ell=\hat k(G_{k,\ell})\), so that
\(j_{L_k}=k\). The segments \(G_{k,0},\ldots,G_{k,L_k-1}\) are those
visited along this path before \(I_k\) is reported.

For \(0\le\ell\le L_k\), preserving the selected index \(j_\ell\) on \(G_{k,\ell}\) is
equivalent to \(\widehat Q_{j_\ell}^\pi>\lambda\) and the following
conditions for all other candidates \(I_j\subseteq G_{k,\ell}\):
\begin{equation*}
\begin{aligned}
 \widehat Q_j^\pi&\le\lambda,
 &&  \text{for }|I_j|<|I_{j_\ell}|,\\
 \widehat Q_j^\pi&\le\widehat Q_{j_\ell}^\pi,
 && \text{for }|I_j|=|I_{j_\ell}|,\quad j\ne j_\ell.
\end{aligned}
\end{equation*}
The second inequality is strict when the fixed tie-breaking rule prefers
\(j\) to \(j_\ell\). Longer candidates cannot change the chosen index.
For \(\ell<L_k\), the reported intervals are disjoint from \(I_k\), so
\begin{equation*}
I_{j_\ell}\cap I_k=\varnothing,
 \qquad
 \widehat Q_{j_\ell}^\pi=\widehat Q_{j_\ell}>\lambda,
 \qquad 0\le\ell<L_k.
\end{equation*}
Comparisons involving unchanged candidate scores therefore remain satisfied
on these segments. For a fixed affected candidate, the remaining conditions
are upper bounds on the single value \(\widehat Q_j^\pi\), with fixed
right-hand sides. When such bounds exist, retain their minimum. Equality
is allowed only if every condition attaining that minimum is non-strict.
Thus these conditions require at most one comparison per distinct
affected candidate. The initial
search stores these bounds and whether equality is allowed, together with
\(\mathcal B_k^{\rm upd}\). This one-time construction is included in
\(\mathfrak C_{\mathrm{nar}}\).

On the final segment \(G_k\), require \(\widehat Q_k^\pi>\lambda\) and
apply the same two inequalities with \(j_{L_k}=k\), using the updated
score \(\widehat Q_k^\pi\). Each affected competitor needs one comparison.
Unchanged shorter candidates remain below threshold. Among unchanged
candidates of length \(|I_k|\), it suffices to compare the one with the
largest observed score against \(\widehat Q_k^\pi\), resolving ties by
the fixed index order. This candidate, when present, is also stored during
the initial search.

For each \((k,\pi)\), compute the detector statistic once on each distinct
interval in \(\{I_k\}\cup\mathcal B_k^{\rm upd}\) and reuse it in these
comparisons.
Since \(I_k^{\mathcal O}\ne\varnothing\), every
\(I_j\in\mathcal B_k^{\rm upd}\) also has a nonempty odd-side block, so \(1+|\mathcal B_k^{\rm upd}| \le |I_k^{\mathcal O}| +\sum_{I_j\in\mathcal B_k^{\rm upd}}|I_j^{\mathcal O}|\).
The comparisons are therefore absorbed by the score-evaluation costs.
Summing over the reported intervals and the \(B'\)
permutations, and using \eqref{eq:narrow_charging_bound}, gives the total
odd-side cost \(O\!\left\{B'\left( \sum_{k\in\widehat{\mathcal S}}|I_k^{\mathcal O}| +\mathfrak L_{\rm upd}\right)\right\} =O(B'L_{\rm pool}^{\mathcal O})\).
Finally, sorting the weighted \(p\)-values costs
\(O\{|\widehat{\mathcal S}|\log|\widehat{\mathcal S}|\}\). Combining the
preceding bounds gives
\begin{equation*}
\mathfrak C_{\mathrm{POIS+}}^{\mathrm{nar}}
    =
    \mathfrak C_{\mathrm{nar}}
    +
    O\left\{
    BT
    +
    B'L_{\rm pool}^{\mathcal O}
    +
    |\widehat{\mathcal S}|
    \log|\widehat{\mathcal S}|
    \right\}.
\end{equation*}
For seeded intervals \citep{kovacs2023seeded},
\(L_{\rm pool}^{\mathcal O}
\le\sum_{I_j\in\mathcal I}|I_j|
=O(T\log T)\), which gives the stated
\(O(B'T\log T)\) odd-side cost.
\end{proof}

\section{Proofs of the power results}
\label{app:power}

All probabilities below include the Monte Carlo randomization. A
random interval pool independent of the observations can be handled by
conditioning on its generating randomization and then taking expectations.
Throughout, $\alpha\in(0,1)$ is fixed. The numbers $B\ge1$ and $B'\ge1$
of even-side and odd-side permutations may depend on $T$. The distinct target
indices are fixed conditional on the interval pool. Let $\mathcal F^{\mathcal O}$ be generated by the odd-side observations and the independent randomization
specifying the detector and scoring functions, excluding the odd-side
proposal draws and all even-side observations and permutation draws.
Set $\mathcal A_j=\varnothing$ for unreported targets. For reported
targets, $\mathcal A_j$ indexes the accepted draws among the $B'$ proposals.
Repeated accepted draws are counted separately, and the observed score
is not one of these proposals.

\subsection{Proof of Theorem~\ref{thm:pois_power}}
\label{app:power:pois_proof}
\label{app:power:main_reduction}

\begin{corollary}[Realized and simultaneous coverage]
\label{cor:pois_realized_coverage}
Under the assumptions of Theorem~\ref{thm:pois_power},
$\operatorname{TPP}(\widehat{\mathcal R})=1-o_{\mathbb P}(1)$.
If Assumption~\ref{ass:power_main}\textup{(ii)} is strengthened to
$\sum_{j=1}^{K^*}q_{j,T}^{\mathcal E}=o(1)$, then
$\mathbb P\{\operatorname{TPP}(\widehat{\mathcal R})=1\}=1-o(1)$.
\end{corollary}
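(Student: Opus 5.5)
We prove the corollary by a deterministic lower bound on $\operatorname{TPP}(\widehat{\mathcal R})$ that holds on a high-probability event, and then apply Markov's inequality. Let $\mathcal G_T$ be the event of Assumption~\ref{ass:power_main}(i): $\{k_1^*,\ldots,k_{K^*}^*\}\subseteq\widehat{\mathcal S}$ and $|\widehat{\mathcal S}|\le C_0K^*$. Then $\mathbb P(\mathcal G_T^c)=o(1)$. Define the random count
\[
N_T=\sum_{j=1}^{K^*}\mathbbm 1\Bigl\{V_{k_j^*}^{\mathcal E}\le\max_{1\le b\le B}V_{k_j^*,b}^{\mathcal E}\Bigr\}.
\]
Then $\mathbb E[N_T]=\sum_j q_{j,T}^{\mathcal E}$.

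The deterministic step runs as follows. On $\mathcal G_T$, every target $j$ not counted in $N_T$ has $P_{k_j^*}^{\mathcal E}=1/(B+1)$, the smallest attainable value. There are $K^*-N_T$ such targets. Suppose $K^*-N_T\ge1$. Then at least $i:=K^*-N_T$ of the $p$-values equal $1/(B+1)$, so $P_{(i)}^{\mathcal E}=1/(B+1)$. The BH step-up condition at rank $i$ reads $1/(B+1)\le i\alpha/|\widehat{\mathcal S}|$. Using $|\widehat{\mathcal S}|\le C_0K^*$, it suffices that $i\ge C_0K^*/\{\alpha(B+1)\}$.

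Set $c:=C_0/\{\alpha(B+1)\}$. Since $B+1>C_0/\alpha$, we have $c<1$; this is the only place the restriction on $B$ enters. If $c$ depends on $T$ through $B$, we only need $c<1$ to be bounded away from one, or we can argue with $\limsup$; I will treat $B$ as fixed, as the theorem does. When $K^*-N_T\ge cK^*$, we get $\hat r\ge i$. Every $p$-value equal to $1/(B+1)\le P_{(\hat r)}^{\mathcal E}$ is then rejected, so all those targets are covered and
\[
\operatorname{TPP}(\widehat{\mathcal R})\ge 1-N_T/K^* \quad\text{on } \mathcal G_T\cap\{N_T\le(1-c)K^*\}.
\]

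The probabilistic step uses Assumption~\ref{ass:power_main}(ii), which gives $\mathbb E[N_T/K^*]=o(1)$. Markov's inequality then yields $N_T/K^*=o_{\mathbb P}(1)$. In particular $\mathbb P\{N_T>(1-c)K^*\}=o(1)$. Combining with $\mathbb P(\mathcal G_T^c)=o(1)$, for any $\epsilon>0$ we get $\mathbb P\{\operatorname{TPP}<1-\epsilon\}\le \mathbb P(\mathcal G_T^c)+\mathbb P\{N_T>(1-c)K^*\}+\mathbb P\{N_T>\epsilon K^*\}=o(1)$. This proves $\operatorname{TPP}(\widehat{\mathcal R})=1-o_{\mathbb P}(1)$. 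Because $0\le\operatorname{TPP}\le1$, bounded convergence also gives Theorem~\ref{thm:pois_power}.

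For the strengthened statement, $\mathbb P(N_T\ge1)\le\mathbb E N_T=\sum_j q_{j,T}^{\mathcal E}=o(1)$ by Markov's inequality. So with probability $1-o(1)$ we have both $\mathcal G_T$ and $N_T=0$. Then $i=K^*\ge cK^*$, and by the deterministic step every target is rejected, so $\operatorname{TPP}=1$.

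The main obstacle is the BH threshold argument. We must confirm that rank $i$ is admissible even though other non-target intervals may have $p$-values tied at $1/(B+1)$. Such ties only increase the count at that level, so $P_{(i)}^{\mathcal E}=1/(B+1)$ still holds and the inequality goes in the favorable direction. We must also check that rejection via $P_k^{\mathcal E}\le P_{(\hat r)}^{\mathcal E}$ includes all the tied minimal values. It does, because $P_{(\hat r)}^{\mathcal E}\ge P_{(1)}^{\mathcal E}\ge 1/(B+1)$.
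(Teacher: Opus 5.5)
Your proof is correct and follows essentially the same route as the paper. Both arguments restrict to the event in Assumption (i), count the targets whose even-side $p$-value attains the minimum $1/(B+1)$, and use $|\widehat{\mathcal S}|\le C_0K^*$ and $B+1>C_0/\alpha$ to show these targets form a feasible BH set; Markov's inequality (or, under the summed condition, the union bound) then gives the two conclusions.

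Your caveat about $B$ possibly depending on $T$ is unnecessary. Because $B+1$ is an integer strictly larger than $C_0/\alpha$, the constant $c=C_0/\{\alpha(B+1)\}$ stays uniformly bounded away from one, as the paper points out.
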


\begin{proof}[Proof of Theorem~\ref{thm:pois_power} and Corollary~\ref{cor:pois_realized_coverage}]
We use Assumption~\ref{ass:power_main}\textup{(i)} to identify change-point coverage with
rejection of the target intervals. Assumption~\ref{ass:power_main}\textup{(ii)} then shows
that most targets have the smallest attainable permutation p-value.
The remaining step is to verify that these targets form a feasible set
for the BH rule in \eqref{eq:BH_rejected_set}.
Write $m_T=|\widehat{\mathcal S}|$, $x_0=(B+1)^{-1}$, and let \(\mathcal M_T=\bigl\{\{k_1^*,\ldots,k_{K^*}^*\}\subseteq\widehat{\mathcal S},\ m_T\le C_0K^*\bigr\}\).
By Assumption~\ref{ass:power_main}\textup{(i)}, $\mathbb P(\mathcal M_T^c)\to0$. On $\mathcal M_T$, each
change-point is contained in its reported target, and no other reported
interval contains that change-point by disjointness. If one target contained a second change-point, it would intersect that
change-point's distinct reported target. This would contradict
disjointness. Conversely, any additional reported non-null interval would
intersect the reported target of a change-point it contains. Therefore
the reported non-null intervals are exactly the
$K^*$ targets, and each contains exactly one change-point. In particular,
$m_T\ge K^*$ and, for every $\mathcal R\subseteq\widehat{\mathcal S}$,
\begin{equation}
 \operatorname{TPP}(\mathcal R)
 =\frac1{K^*}\sum_{j=1}^{K^*}\mathbbm1\{k_j^*\in\mathcal R\}
 \quad\text{on }\mathcal M_T.
 \label{eq:coverage_target_equality}
\end{equation}
The right-hand side remains a lower bound outside $\mathcal M_T$.

For the BH rule, let
$q_1,\ldots,q_m\in[0,1]$ and write
$r=\max\{s\in[m]:q_{(s)}\le\alpha s/m\}$, with $r=0$ when the set is empty.
When $r>0$, exactly $r$ values are at most $\alpha r/m$: otherwise, for
some $s>r$, $q_{(s)}\le\alpha r/m\le\alpha s/m$, a contradiction.
The same reasoning excludes a tie beyond rank $r$ at $q_{(r)}$.
Consequently, the rejection set is $\{i:q_i\le\alpha r/m\}$, and
\begin{equation}
 \bigl\{q_i\le\alpha|A|/m\text{ for every }i\in A\bigr\}
 \quad\Longrightarrow\quad
 A\subseteq\mathcal R_{\rm BH}(q),
 \qquad A\subseteq[m].
 \label{eq:BH_feasible_subset}
\end{equation}
Indeed, for $A\ne\varnothing$ the displayed condition gives
$q_{(|A|)}\le\alpha|A|/m$, hence $r\ge|A|$. The empty case is immediate.
This argument applies to the weighted $p$-values as well.

By the definition of the permutation $p$-value,
\begin{equation}
 \{P_{k_j^*}^{\mathcal E}>x_0\}
 =\left\{\max_{1\le b\le B}V_{k_j^*,b}^{\mathcal E}
                            \ge V_{k_j^*}^{\mathcal E}\right\}.
 \label{eq:minimal_permutation_event}
\end{equation}
Let $n_T=\sum_{j=1}^{K^*}\mathbbm1\{P_{k_j^*}^{\mathcal E}=x_0\}$.
Then, for each fixed $d\in(0,1)$, Markov's inequality gives
\begin{equation*}
\mathbb P(n_T<dK^*)
 \le\frac{\mathbb E(K^*-n_T)}{(1-d)K^*}
 =\frac{\sum_{j=1}^{K^*}q_{j,T}^{\mathcal E}}{(1-d)K^*}
 \longrightarrow0,
\end{equation*}
where the equality follows from \eqref{eq:minimal_permutation_event} and
the convergence follows from Assumption~\ref{ass:power_main}\textup{(ii)}.

Since $B+1$ is an integer strictly exceeding $C_0/\alpha$,
$\sup_T C_0x_0/\alpha<1$. Fix $\varepsilon\in(0,1)$ and choose
$d\in(\max\{1-\varepsilon,\sup_T C_0x_0/\alpha\},1)$.
On $\mathcal M_T\cap\{n_T\ge dK^*\}$, all targets counted by $n_T$ are
reported and satisfy \(P_{k_j^*}^{\mathcal E}=x_0<\frac{\alpha d}{C_0} \le\frac{\alpha n_T}{m_T}\).
They are therefore rejected by \eqref{eq:BH_feasible_subset}, and their
coverage is at least $d>1-\varepsilon$. It follows that
\begin{equation*}
\begin{aligned}
 \mathbb P\{\operatorname{TPP}(\widehat{\mathcal R})\le1-\varepsilon\}
 &\le\mathbb P(\mathcal M_T^c)+\mathbb P(n_T<dK^*)\\
 &\le\mathbb P(\mathcal M_T^c)
       +\frac{\sum_{j=1}^{K^*}q_{j,T}^{\mathcal E}}{(1-d)K^*}
 \longrightarrow0.
 \end{aligned}
\end{equation*}
Thus $\operatorname{TPP}(\widehat{\mathcal R})\to1$ in probability.
Since $0\le\operatorname{TPP}(\widehat{\mathcal R})\le1$, uniform integrability
also gives $\operatorname{TPR}(\widehat{\mathcal R})\to1$.

Under $\sum_jq_{j,T}^{\mathcal E}=o(1)$, the union bound and
\eqref{eq:minimal_permutation_event} give
$\mathbb P(n_T<K^*)\le\sum_jq_{j,T}^{\mathcal E}\to0$.
On $\mathcal M_T\cap\{n_T=K^*\}$, all target values equal
$x_0<\alpha/C_0\le\alpha K^*/m_T$, so another application of
\eqref{eq:BH_feasible_subset} selects all targets. Thus \(\mathbb P\{\operatorname{TPP}(\widehat{\mathcal R})<1\} \le\mathbb P(\mathcal M_T^c)+\sum_{j=1}^{K^*}q_{j,T}^{\mathcal E} \longrightarrow0\).
This concludes the proof.
\end{proof}

\subsection{Proof of Theorem~\ref{thm:poisplus_BH_comparison}}
\label{app:power:comparison_proof}

For each target, define
\begin{equation*}
\mathcal G_{j,T}=\left\{\max_{b\in\mathcal A_j}V_{k_j^*,b}^{\mathcal O}<\delta V_{k_j^*}^{\mathcal O}\right\}\cap\left(\{j\in\mathcal J_T\}\cup\{|\mathcal A_j|\ge1\}\right).
\end{equation*}
Let \(D_T=\sum_{\substack{j\in[K^*]\\k_j^*\in\widehat{\mathcal S}}}\mathbbm1\{\mathcal G_{j,T}^{c}\}\).
Since $\mathcal A_j=\varnothing$ and $j\notin\mathcal J_T$ whenever $k_j^*\notin\widehat{\mathcal S}$, we have $\mathcal G_{j,T}\subseteq\{k_j^*\in\widehat{\mathcal S}\}$. Thus the definition of $q_{j,T}^{+}$ in Assumption~\ref{ass:odd_nondeflation} gives $q_{j,T}^{+}=\mathbb P(\mathcal G_{j,T}^{c})$ and
\begin{equation}
 \sum_{j=1}^{K^*}q_{j,T}^{+}=\mathbb E D_T+\sum_{j=1}^{K^*}\mathbb P(k_j^*\notin\widehat{\mathcal S}).
 \label{eq:power_failure_count_identity}
\end{equation}

\begin{corollary}[Comparison of the selected nonnull intervals]
\label{cor:poisplus_simultaneous_retention}
Under the assumptions of Theorem~\ref{thm:poisplus_BH_comparison},
if $\mathbb E D_T=o(1)$, then \(\mathbb P\{\widehat{\mathcal R}\cap([K]\setminus\mathcal H_0) \subseteq\widehat{\mathcal R}^{+}\}=1-o(1)\).
Consequently,
$\mathbb P\{\operatorname{TPP}(\widehat{\mathcal R}^{+})
             \ge\operatorname{TPP}(\widehat{\mathcal R})\}=1-o(1)$.
\end{corollary}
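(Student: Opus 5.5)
The plan is to work on a single high-probability event on which the POIS+ BH rule is forced to reject every target that POIS rejects. Let $\mathcal M_T$ be the event from the proof of Theorem~\ref{thm:pois_power}: all targets are reported and $m_T=|\widehat{\mathcal S}|\le C_0K^*$. I intersect it with $\{D_T=0\}$ and $\{|\mathcal J_T|\ge dK^*\}$. By Assumption~\ref{ass:power_main}\textup{(i)}, Markov's inequality applied to $\mathbb E D_T=o(1)$, and Assumption~\ref{ass:odd_nondeflation}\textup{(ii)}, this intersection has probability $1-o(1)$. On $\mathcal M_T$, the argument around \eqref{eq:coverage_target_equality} shows that the reported non-null intervals are exactly the $K^*$ targets. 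So $\widehat{\mathcal R}\cap([K]\setminus\mathcal H_0)$ is the set of rejected targets, and TPP counts rejected targets. Once the inclusion is proved, the TPP comparison is therefore immediate.

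\emph{Step 1 (weights of targets).} From condition (i), $g$ is nondecreasing: for $0<w<v$, $g(v)=g((v/w)w)\ge (v/w)^\gamma g(w)\ge g(w)$, and $g(0)=0$. Applying (i) with $t=\delta^{-1}$ also gives $g(\delta v)\le\delta^\gamma g(v)$. On $\{D_T=0\}$ every target satisfies $\mathcal G_{j,T}$.
\begin{itemize}
\item If $|\mathcal A_j|=a\ge1$, then $V^{\mathcal O}_{k_j^*}>0$, and each accepted score satisfies $g(V^{\mathcal O}_{k_j^*,b})\le\delta^\gamma g(V^{\mathcal O}_{k_j^*})$. Hence \eqref{eq:post_selection_Evalue} gives $W\ge(1+a)/(1+a\delta^\gamma)\ge 2/(1+\delta^\gamma)$.
\item If $\mathcal A_j=\varnothing$, then $j\in\mathcal J_T$, and I read the empty maximum as $0$, so that $V^{\mathcal O}>0$ and $W=1$ by the stated convention.
\end{itemize}
Thus every target has $P^+\le P^{\mathcal E}$. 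Every target in $\mathcal J_T$ has $P^+\le x_0=(B+1)^{-1}<\alpha d/C_0\le\alpha|\mathcal J_T|/m_T$. Every target outside $\mathcal J_T$ has its $p$-value shrunk by the factor $(1+\delta^\gamma)/2$.

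\emph{Step 2 (feasibility in weighted BH).} Let $r$ be the POIS rejection count, split as $r=r_1+r_0$ into targets and nulls. Let $A$ be the union of $\mathcal J_T$ with the POIS-rejected targets. By \eqref{eq:BH_feasible_subset} it suffices to show $P^+_k\le\alpha|A|/m_T$ for all $k\in A$.
\begin{itemize}
\item For $k\in\mathcal J_T$ this holds by Step 1.
\item For a rejected target outside $\mathcal J_T$, we have $P^+_k\le \alpha r(1+\delta^\gamma)/(2m_T)$. So I need $r(1+\delta^\gamma)/2\le |A|$, where $|A|\ge\max(r_1,dK^*)$.
\end{itemize}
The nulls rejected by POIS are what can break this inequality, since null weights below one can raise null weighted $p$-values and shrink the POIS+ cutoff. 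The inclusion argument is arranged so that it never relies on null rejections in POIS+.

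\emph{Step 3 (main obstacle: bounding $r_0$).} Every POIS-rejected null has $P^{\mathcal E}\le\alpha$. Conditional on $X_{\mathcal O}$, the null $p$-values are independent and superuniform, by Assumption~\ref{assum:disjoint_intervals} and the proof of Theorem~\ref{thm:FDR-POIS}. The count of nulls with $P^{\mathcal E}\le\alpha$ is then stochastically dominated by a Binomial$(m_T-K^*,\alpha)$ variable, where $m_T-K^*\le(C_0-1)K^*$. I would use a Chernoff bound to get $r_0\le 2\alpha(1-C_0^{-1})\max(m_T,\cdot)$, up to an $o(K^*)$ slack, with probability $1-o(1)$; the case of few nulls is trivial. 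With $r\le m_T$ and $r_0\le 2\alpha(1-C_0^{-1})m_T$, the required inequality reduces, after a case split on whether $r_1\ge dK^*$, to $\delta^\gamma+2\alpha(1-C_0^{-1})<1$ together with $d>C_0x_0/\alpha$. Matching the constants exactly in this case split is the delicate step; the $2\alpha(1-C_0^{-1})$ term in the assumption suggests the factor $2$ should come precisely from this concentration bound. Putting the three steps together gives $\widehat{\mathcal R}\cap([K]\setminus\mathcal H_0)\subseteq A\subseteq\widehat{\mathcal R}^+$ on the good event, and hence $\operatorname{TPP}(\widehat{\mathcal R}^+)\ge\operatorname{TPP}(\widehat{\mathcal R})$ with probability $1-o(1)$.
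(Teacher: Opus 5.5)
Your Steps 1 and 2 are sound and match the paper. On $\{D_T=0\}$, every target has weight at least $c=2/(1+\delta^\gamma)$, or weight one with $P^{\mathcal E}=x_0$. The inclusion therefore reduces to showing that POIS rejects at least $r/c$ targets, i.e.\ $r_0\le r(1-\delta^\gamma)/2$. Note also that $x_0<\alpha|\mathcal J_T|/m_T$ makes $\mathcal J_T$ itself BH-feasible for POIS. Hence your set $A$ is exactly the set of POIS-rejected targets, $|A|=r_1$, and $r\ge r_1\ge dK^*$ always holds. The case split on $r_1\ge dK^*$ is therefore vacuous.

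The gap is in Step 3, and it is more than a matter of matching constants. You bound $r_0$ by the number of nulls with $P^{\mathcal E}\le\alpha$. This gives $r_0\lesssim\alpha m_{0,T}\le\alpha(1-C_0^{-1})m_T$, which is of order $\alpha(C_0-1)K^*$ and is not proportional to $r$. Meanwhile $r_1$ can be as small as $dK^*$, and $d$ need only exceed $C_0/\{\alpha(B+1)\}$. For example, take $C_0=3$, $\alpha=0.1$, $\delta^\gamma=1/2$, $B+1=500$ and $d=0.07$; all the assumptions hold. Your bound then gives only $r_0\lesssim 0.2K^*$, which cannot yield the required $r_0\le r_1/3$ when $r_1\approx0.07K^*$. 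No case split on $r$ closes this, because a count at the fixed threshold $\alpha$ says nothing about $r_0/r$ when $r\ll m_T$.

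What you need is the null count at the actual, random BH cutoff: $|\widehat{\mathcal R}\cap\mathcal H_0|=N_{0,T}(\alpha r/m_T)$. Since $r$ depends on the null $p$-values, the paper controls $\Delta_T=\max_{1\le s\le m_T}\{N_{0,T}(\alpha s/m_T)-\alpha m_{0,T}s/m_T\}_+$ uniformly. It applies conditional Hoeffding and a union bound over the at most $C_0K^*$ critical values. This gives $r_0\le\alpha(1-C_0^{-1})r+\Delta_T$ with $\mathbb P(\mathcal M_T\cap\{\Delta_T>\kappa dK^*/2\})\to0$. The slack is then absorbed using $r\ge dK^*$.

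Your reading of the factor $2$ is also mistaken. It comes from the weight bound, not from concentration: the requirement $1-\alpha(1-C_0^{-1})>1/c=(1+\delta^\gamma)/2$ is exactly $\delta^\gamma+2\alpha(1-C_0^{-1})<1$, i.e.\ $\kappa>0$. So there is no room to lose a factor $2$ in the null count. A bound of the form $r_0\le2\alpha(1-C_0^{-1})r$ would require the stronger condition $\delta^\gamma+4\alpha(1-C_0^{-1})\le1$.
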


\begin{proof}[Proof of Theorem~\ref{thm:poisplus_BH_comparison} and Corollary~\ref{cor:poisplus_simultaneous_retention}]
We bound the weights on $\mathcal G_{j,T}$ and the number of null rejections, then construct a feasible set for weighted BH.
Let $c=2/(1+\delta^\gamma)$. Retain $m_T,x_0,\mathcal M_T$ from the preceding proof, and put
\begin{equation*}
\kappa=1-\alpha(1-C_0^{-1})-c^{-1}>0,\qquad
 \eta=d-C_0x_0/\alpha>0.
\end{equation*}
Assumption~\ref{ass:odd_nondeflation}\textup{(ii)} gives $\kappa>0$.
Since $B+1$ is an integer and $d>C_0/\{\alpha(B+1)\}$,
$\eta$ is bounded away from zero. Let \(\mathcal C_T=\{|\mathcal J_T|\ge dK^*\}\).
Assumption~\ref{ass:odd_nondeflation}\textup{(ii)} gives $\mathbb P(\mathcal C_T^c)=o(1)$ and
\begin{equation}
 \mathbb E D_T\le\sum_{j=1}^{K^*}q_{j,T}^{+}=o(K^*).
 \label{eq:power_bad_weight_count}
\end{equation}

We first relate the score condition to the weights. On $\mathcal G_{j,T}$, nonnegativity of the scores and the convention for an empty
maximum give $v:=V_{k_j^*}^{\mathcal O}>0$.
Every accepted score $x$ lies in $[0,\delta v)$. For $x>0$, apply the
growth condition on $g$ with $t=v/x>1$. For $x=0$, use $g(0)=0$.
Since $g(v)>0$, in either case \({g(x)}/{g(v)}\le\left({x}/{v}\right)^\gamma<\delta^\gamma\). The weight denominator is therefore positive. If $|\mathcal A_j|\ge1$,
division by $g(v)$ in \eqref{eq:post_selection_Evalue} yields
\begin{equation}
 W_{k_j^*}^{\mathcal O}
 =\frac{|\mathcal A_j|+1}
 {1+\sum_{b\in\mathcal A_j}g(V_{k_j^*,b}^{\mathcal O})/g(v)}
 \ge\frac{|\mathcal A_j|+1}{1+|\mathcal A_j|\delta^\gamma}
 \ge\frac{2}{1+\delta^\gamma}=c.
 \label{eq:odd_soft_rank_lower}
\end{equation}
If $\mathcal A_j=\varnothing$, then
$W_{k_j^*}^{\mathcal O}=1$ and $\mathcal G_{j,T}$ implies $j\in\mathcal J_T$.
The strict even-side inequality defining $\mathcal J_T$ and
\eqref{eq:minimal_permutation_event} then give $P_{k_j^*}^{\mathcal E}=x_0$.

We next control the number of small null $p$-values, uniformly over the
BH critical values. Let $\mathcal F_+^{\mathcal O}$ contain
$\mathcal F^{\mathcal O}$ and all odd-side proposal draws. Conditional on
this sigma-field, the reported index set and the weights are fixed, and the
even-side null $p$-values are independent and superuniform by the
argument in the proof of Theorem~\ref{thm:FDR_POIS+}.
Write $m_{0,T}=|\widehat{\mathcal S}\cap\mathcal H_0|$ and \(N_{0,T}(t)=\sum_{k\in\widehat{\mathcal S}\cap\mathcal H_0}\mathbbm1\{P_k^{\mathcal E}\le t\}\).
For $m_T>0$, define
\begin{equation}
 \Delta_T=\max_{1\le r\le m_T}
 \left\{N_{0,T}(\alpha r/m_T)
 -\frac{\alpha m_{0,T}r}{m_T}\right\}_+.
 \label{eq:power_null_count_error}
\end{equation}
When $m_T=0$, set $\Delta_T=0$. If $m_{0,T}=0$, then $\Delta_T=0$.
Otherwise, conditional on $\mathcal F_+^{\mathcal O}$, each threshold
$\alpha r/m_T$ is fixed and
$\mathbb E\{N_{0,T}(\alpha r/m_T)\mid\mathcal F_+^{\mathcal O}\}
\le\alpha m_{0,T}r/m_T$.
Hoeffding's inequality \citep{hoeffding1963probability}, applied to the
independent indicators, and the union bound give, for $x>0$,
\begin{equation*}
 \mathbb P(\Delta_T>x\mid\mathcal F_+^{\mathcal O})
 \le\sum_{r=1}^{m_T}
 \mathbb P\{N_{0,T}(\alpha r/m_T)-\alpha m_{0,T}r/m_T>x
             \mid\mathcal F_+^{\mathcal O}\}
 \le m_Te^{-2x^2/m_{0,T}}.
\end{equation*}
Since $\mathcal M_T$ is $\mathcal F_+^{\mathcal O}$-measurable and
$m_{0,T}\le m_T\le C_0K^*$ on this event, taking expectations yields
\begin{equation}
 \mathbb P(\mathcal M_T\cap\{\Delta_T>x\})
 \le1\wedge\left[C_0K^*\exp\left\{-\frac{2x^2}{C_0K^*}\right\}\right].
 \label{eq:power_null_count_tail}
\end{equation}
For fixed $\varepsilon>0$, taking $x=\varepsilon K^*$ in
\eqref{eq:power_null_count_tail} and using $K^*\to\infty$ gives
$\mathbbm1\{\mathcal M_T\}\Delta_T/K^*\to0$ in probability.
Since $0\le\mathbbm1\{\mathcal M_T\}\Delta_T/K^*\le C_0$, the same
uniform-integrability argument yields
\begin{equation}
 \mathbb E\{\mathbbm1\{\mathcal M_T\}\Delta_T\}/K^*\longrightarrow0.
 \label{eq:power_null_count_mean}
\end{equation}
Work on $\mathcal M_T\cap\mathcal C_T$.
Since \(x_0<\frac{\alpha d}{C_0}\le\frac{\alpha|\mathcal J_T|}{m_T}\),
all targets in $\mathcal J_T$ are rejected by the original BH rule.
Thus \(|\widehat{\mathcal R}|\ge|\mathcal J_T|\ge dK^*,\frac{\alpha |\widehat{\mathcal R}|}{m_T}>x_0\).
First suppose $D_T\le\eta K^*$. Let
$\mathcal K_T=\{k_j^*:j\in\mathcal J_T,\mathcal G_{j,T}\text{ holds}\}$.
Every target in $\mathcal K_T$ has weight at least one and even-side $p$-value
$x_0$. Moreover,
\begin{equation*}
|\mathcal K_T|\ge|\mathcal J_T|-D_T\ge(d-\eta)K^*
       =\frac{C_0K^*x_0}{\alpha}\ge\frac{m_Tx_0}{\alpha}.
\end{equation*}
Consequently, $P_k^+\le x_0\le\alpha|\mathcal K_T|/m_T$ for every $k\in \mathcal K_T$,
so $\mathcal K_T$ is feasible for weighted BH by \eqref{eq:BH_feasible_subset}.

The BH identity and \eqref{eq:power_null_count_error} give
\begin{equation*}
\begin{aligned}
 |\widehat{\mathcal R}\cap\mathcal H_0|
 &=N_{0,T}(\alpha |\widehat{\mathcal R}|/m_T)
 \le\frac{\alpha m_{0,T}|\widehat{\mathcal R}|}{m_T}+\Delta_T\\
 &=\alpha\left(1-\frac{K^*}{m_T}\right)|\widehat{\mathcal R}|+\Delta_T
 \le\alpha(1-C_0^{-1})|\widehat{\mathcal R}|+\Delta_T.
\end{aligned}
\end{equation*}
Here $m_{0,T}=m_T-K^*$ and $m_T\le C_0K^*$ on $\mathcal M_T$.
Since at most $D_T$ original true discoveries fail $\mathcal G_{j,T}$,
the number satisfying this condition is at least
\begin{equation*}
 |\widehat{\mathcal R}\setminus\mathcal H_0|-D_T\ge\{1-\alpha(1-C_0^{-1})\}|\widehat{\mathcal R}|-\Delta_T-D_T.
\end{equation*}
If $\Delta_T+D_T\le\kappa |\widehat{\mathcal R}|$, this lower bound is at least $|\widehat{\mathcal R}|/c$.
For any original true discovery satisfying $\mathcal G_{j,T}$ with a nonempty accepted set,
\eqref{eq:odd_soft_rank_lower} gives \(P_k^+\le\frac{\alpha |\widehat{\mathcal R}|}{cm_T}\).
For any with an empty accepted set, $\mathcal G_{j,T}$ gives
$P_k^+=x_0\le\alpha|\mathcal K_T|/m_T$. Therefore the union of $\mathcal K_T$ and these
discoveries is feasible for weighted BH: its size is at least
both $|\mathcal K_T|$ and $|\widehat{\mathcal R}|/c$, and every member satisfies the corresponding
BH inequality. All these discoveries are selected, so only original
true discoveries failing $\mathcal G_{j,T}$ can be lost. Their number
is at most $D_T$.
If $\Delta_T+D_T>\kappa |\widehat{\mathcal R}|$, the number of lost true discoveries is at most
$|\widehat{\mathcal R}|<(\Delta_T+D_T)/\kappa$.

When $D_T>\eta K^*$, the loss in coverage is at most
$1<D_T/(\eta K^*)$. Combining the cases with
\eqref{eq:coverage_target_equality} and $0<\kappa<1$ gives
\begin{equation}
 \{\operatorname{TPP}(\widehat{\mathcal R})-
                   \operatorname{TPP}(\widehat{\mathcal R}^{+})\}_+
 \le1\wedge\left\{\frac{\Delta_T+D_T}{\kappa K^*}
                         +\frac{D_T}{\eta K^*}\right\}
 \quad\text{on }\mathcal M_T\cap\mathcal C_T.
 \label{eq:power_BH_loss_bound}
\end{equation}

The finite-sample bound \eqref{eq:power_BH_loss_bound} now separates the
null-count fluctuation $\Delta_T$ from the number $D_T$ of targets that
fail the joint score condition. Outside $\mathcal M_T\cap\mathcal C_T$, the loss is at most one. Hence
\begin{align}
 &\mathbb E\bigl[\{\operatorname{TPP}(\widehat{\mathcal R})-
                     \operatorname{TPP}(\widehat{\mathcal R}^{+})\}_+\bigr]
 \nonumber\\
 &\qquad\le\mathbb P(\mathcal M_T^c)+\mathbb P(\mathcal C_T^c)
 +\frac{\mathbb E\{\mathbbm1\{\mathcal M_T\}\Delta_T\}
                  +\mathbb E D_T}{\kappa K^*}
 +\frac{\mathbb E D_T}{\eta K^*}
 \longrightarrow0.
 \label{eq:power_expected_loss_bound}
\end{align}
The bound uses \eqref{eq:power_bad_weight_count}. The convergence follows
from Assumption~\ref{ass:power_main}\textup{(i)}, \eqref{eq:power_null_count_mean}, and
Assumption~\ref{ass:odd_nondeflation}. Since
$\operatorname{TPR}(\widehat{\mathcal R}^{+})-
\operatorname{TPR}(\widehat{\mathcal R})$ is at least the negative of
the expectation in \eqref{eq:power_expected_loss_bound}, the theorem follows.

For Corollary~\ref{cor:poisplus_simultaneous_retention}, the strengthened condition gives
$\mathbb P(D_T>0)\le\mathbb E D_T=o(1)$.
On \(\mathcal M_T\cap\mathcal C_T\cap\{D_T=0\} \cap\{\Delta_T\le\kappa dK^*/2\}\),
we have $D_T\le\eta K^*$, $|\widehat{\mathcal R}|\ge dK^*$, and
$\Delta_T+D_T\le\kappa |\widehat{\mathcal R}|$.
All original true discoveries satisfy $\mathcal G_{j,T}$.
The feasible union constructed above therefore includes every one of
these discoveries.
Using \eqref{eq:power_null_count_tail} at $x=\kappa dK^*/2$ gives
\begin{equation*}
\mathbb P\{\widehat{\mathcal R}\cap([K]\setminus\mathcal H_0)
                         \not\subseteq\widehat{\mathcal R}^{+}\}
 \le\mathbb P(\mathcal M_T^c)+\mathbb P(\mathcal C_T^c)
       +\mathbb E D_T
       +C_0K^*\exp\left\{-\frac{\kappa^2d^2K^*}{2C_0}\right\}
 \longrightarrow0.
\end{equation*}

Selecting every original non-null rejection preserves coverage of every
change-point covered by the original rejection set. This proves the remaining
assertion of the corollary and completes the proof.
\end{proof}

\section{Verification of the score-separation conditions}

We derive local two-sample bounds for Condition~\textup{(A2)} and the joint score condition \textup{(B2)}, and then bound the empty-acceptance probabilities.
\label{app:poisplus_power}

We verify the score assumptions for bounded-kernel two-sample scores and
for bounded, sub-Gaussian, and finite-moment absolute linear contrasts.
The labels $U$, $L$, $SG$, and $L^p$ in the displays below refer to
these four cases, respectively. Related change-point statistics include the MMD-based scan $B$-statistic
\citep{li2019scan}, cumulative sum (CUSUM) tests based on bounded transformations
\citep{duerre2019robust}, normalized mean CUSUMs under sub-Gaussian errors
\citep{wang2020univariate}, and MOSUM/CUSUM-based segmentation allowing
heavy tails and finite-moment conditions \citep{cho2022two}.
Throughout, $c,C>0$ denote constants that may change from one occurrence
to the next. Unless a dependence is stated, they are independent of the
target, side sizes, split point, and numbers of permutations. Constants with
subscript $p$ may depend on the fixed moment order.

\subsection{Local score bounds}
\label{app:power:local_scores}

The following local calculations allow the observations within either
nominal side of a split to have different laws. This is needed later
when an estimated split does not coincide exactly with a true change.
Let $Z_1,\ldots,Z_n$ be independent observations, and
partition their indices into two fixed groups of sizes $n_1,n_2\ge1$.
Write $s=(n_1n_2/n)^{1/2}$. For a signed linear contrast, let $c_i=s/n_1$
on the first group and $c_i=-s/n_2$ on the second. Then
\begin{equation}
 \sum_i c_i=0,\qquad \sum_i c_i^2=1,\qquad
 \sum_i|c_i|^p\le1\quad(p\ge2).
 \label{eq:contrast_coefficient_norms}
\end{equation}
The last relation follows from $|c_i|\le1$. Let $\pi$ be a uniform
permutation of $[n]$, independent of the observations. The functions $h$
and $\psi$ below are fixed. The same bounds hold conditionally on any
sigma-field under which these independence, boundedness, or moment
conditions hold with the stated constants.

We will also use a finite-population version of Hoeffding's inequality.
For fixed $z_1,\ldots,z_n\in[-M,M]$, let $\bar z_1^\pi,\bar z_2^\pi$
be their averages after a uniform random partition of sizes $n_1,n_2$.
Interchange the groups if necessary so that $n_1\le n_2$. If $A$ is the
first group and $\bar z=n^{-1}\sum_i z_i$, then
$s(\bar z_1^\pi-\bar z_2^\pi)=sn(\bar z_A-\bar z)/n_2$.
By the sampling-without-replacement inequality of
\citet{hoeffding1963probability}, for every $t>0$,
\begin{equation}
 \mathbb P_\pi\{|s(\bar z_1^\pi-\bar z_2^\pi)|\ge t\} \le2\exp\left\{-\frac{2n_1}{(2M)^2} \left(\frac{tn_2}{sn}\right)^2\right\} =2\exp\left\{-\frac{t^2n_2}{2M^2n}\right\} \le2\exp\left\{-\frac{t^2}{4M^2}\right\}.
 \label{eq:power_finite_population_contrast}
\end{equation}
The equality uses $s^2=n_1n_2/n$, and the last inequality uses $n_2/n\ge1/2$.
In particular, the bound is uniform over the two side sizes.

\subsubsection{Bounded-kernel two-sample scores}
\label{sec:appendix-ustat-route}

Let $h$ be a symmetric positive semidefinite kernel with $|h|\le M_h$
for $M_h>0$. For disjoint groups
$J_1,J_2$ of sizes $n_1,n_2\ge2$, define the two-sample $U$-statistic
\begin{equation}
 D_h^{\rm u}(J_1,J_2) ={}\binom{n_1}{2}^{-1}\sum_{i<i'\in J_1}h(Z_i,Z_{i'}) +\binom{n_2}{2}^{-1}\sum_{i<i'\in J_2}h(Z_i,Z_{i'}) -\frac{2}{n_1n_2}\sum_{i\in J_1}\sum_{i'\in J_2}h(Z_i,Z_{i'}).
 \label{eq:U_stat_discrepancy_score}
\end{equation}
Let $\phi$ be a feature map with
$h(z,z')=\langle\phi(z),\phi(z')\rangle$, and write
$\bar\phi_a=n_a^{-1}\sum_{i\in J_a}\phi(Z_i)$ for $a=1,2$.
The squared difference of the empirical kernel means is
$D_h^{\rm b}=\|\bar\phi_1-\bar\phi_2\|^2$.
Throughout, $D_h$ denotes either $D_h^{\rm u}$ or $D_h^{\rm b}$,
with the choice fixed as part of the scoring rule and unchanged under
permutation. We use the score $s\sqrt{|D_h|}$, and let $D_h^\pi$
be the same statistic after permuting the pooled observations.
Write $\bar F_1,\bar F_2$ for the average marginal laws on the two groups,
and put
\begin{equation*}
\Theta_h(F,G)=\mathbb E_{F\otimes F}h+\mathbb E_{G\otimes G}h
                         -2\mathbb E_{F\otimes G}h,
 \qquad \theta=\Theta_h(\bar F_1,\bar F_2).
\end{equation*}

\begin{lemma}\label{lem:ustat_validation_separation}
Suppose $\theta>0$, and let $\mu=s\sqrt{\theta}$ and
$\eta=\mu/\sqrt{M_h}$. There are universal constants $c,C,c_0>0$ such that
\begin{align}
 \mathbb P(s\sqrt{|D_h|}\le\mu/2)
 &\le\mathbbm1\{\eta<c_0\}+C\exp(-c\eta^2),\nonumber\\
 \mathbb P_\pi(s\sqrt{|D_h^\pi|}\ge t\mid Z_1,\ldots,Z_n)
 &\le C\exp(-ct^2/M_h),\quad t>0.
 \label{eq:U_local_tails}
\end{align}
The bounds are uniform over the two side sizes and both choices of
$D_h$. The permutation bound does not require $\theta>0$.
\end{lemma}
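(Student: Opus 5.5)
We will work in the feature space of $h$: $D_h^{\rm b}=\|\bar\phi_1-\bar\phi_2\|^2$, and $D_h^{\rm u}$ differs from $D_h^{\rm b}$ only by the diagonal terms. Writing $\bar\phi_a$ with the $U$-statistic normalization, one has
\[
D_h^{\rm u}-D_h^{\rm b}=\sum_{a=1,2}\Bigl\{\tfrac{1}{n_a(n_a-1)}\textstyle\sum_{i\ne i'\in J_a}h-\tfrac{1}{n_a^2}\textstyle\sum_{i,i'\in J_a}h\Bigr\},
\]
which is bounded in absolute value by $C M_h(1/n_1+1/n_2)=C M_h/s^2$. Hence $s^2|D_h^{\rm u}-D_h^{\rm b}|\le CM_h$ deterministically. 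This reduces both bounds to statements about $sN$ with $N=\|\bar\phi_1-\bar\phi_2\|$, at the cost of an additive $O(\sqrt{M_h})$ shift in $s\sqrt{|D_h|}$. The indicator $\mathbbm1\{\eta<c_0\}$ in the first bound absorbs exactly this shift when $\mu$ is small compared with $\sqrt{M_h}$.

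For the first (signal) bound, note that $\|\mathbb E\bar\phi_1-\mathbb E\bar\phi_2\|^2=\theta$ by bilinearity, since $\mathbb E\bar\phi_a$ is the kernel mean embedding of $\bar F_a$. So $sN\ge\mu-s\|(\bar\phi_1-\bar\phi_2)-\mathbb E(\bar\phi_1-\bar\phi_2)\|$. The deviation $\|\sum_i c_i\{\phi(Z_i)-\mathbb E\phi(Z_i)\}\|$ uses the coefficients from \eqref{eq:contrast_coefficient_norms}, with $\|\phi(Z_i)\|\le\sqrt{M_h}$. It is a function of independent $Z_i$ with bounded differences $2|c_i|\sqrt{M_h}$. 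Its mean is at most $(\sum c_i^2\cdot 4M_h)^{1/2}\le 2\sqrt{M_h}$ by Jensen's inequality and independence, and McDiarmid's inequality then gives a sub-Gaussian tail $\exp\{-cu^2/M_h\}$ above that mean. Taking $u=\mu/4$ and requiring $\eta\ge c_0$ large enough that $2\sqrt{M_h}$ plus the $U$-correction is at most $\mu/4$ yields $\mathbb P(s\sqrt{|D_h|}\le\mu/2)\le C\exp(-c\eta^2)$. When $\eta<c_0$ the indicator makes the bound trivial.

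For the second (permutation) bound, condition on the data, which fixes the vectors $\phi(Z_i)$. Then $sN^\pi=\|\sum_i c_{\pi(i)}\phi(Z_i)\|$ for the contrast coefficients under a uniform random partition. Its mean is $O(\sqrt{M_h})$: the centered sum has second moment $\frac{n}{n-1}\sum_i c_i^2\cdot$(empirical variance of the $\phi(Z_i)$) $\le CM_h$, because $\sum_i c_i=0$. For concentration we will use a bounded-differences inequality for functions of a uniform permutation (McDiarmid's version for transpositions), or equivalently a martingale argument for sampling without replacement in Hilbert space. A transposition changes $\|\sum c_{\pi(i)}\phi(Z_i)\|$ by at most $2\sqrt{M_h}(|c_i|+|c_j|)$, and summing squared increments along a Doob martingale over sequential draws gives variance proxy $CM_h\sum c_i^2=CM_h$. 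This yields $C\exp(-ct^2/M_h)$ for $t\ge C\sqrt{M_h}$, and constants can be enlarged to cover smaller $t$. Adding the deterministic $U$-correction keeps the form, and no assumption on $\theta$ is used.

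The main obstacle will be the permutation concentration of a Hilbert-norm contrast under sampling without replacement. Here the scalar inequality \eqref{eq:power_finite_population_contrast} does not directly apply, and the bounded differences must be uniform in $n_1,n_2$. The first route will be the martingale/transposition argument with weights $|c_i|$, checking that the variance proxy is $\sum c_i^2=1$ regardless of imbalance. If that is delicate, the fallback is to symmetrize against i.i.d. Rademacher-type coefficients via Hoeffding's comparison of sampling with and without replacement, applied to convex functions of the norm.
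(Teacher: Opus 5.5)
Your reduction to the feature-space norm and your signal bound are essentially the paper's argument. The paper also applies McDiarmid's inequality to the centered difference of empirical kernel mean embeddings, with bounded differences $2\sqrt{M_h}/n_a$ and mean at most $\sqrt{M_h}/s$. It also absorbs the diagonal correction $0\le D_h^{\rm b}-D_h^{\rm u}\le 2M_h/s^2$, which is your additive $\sqrt{2M_h}$ shift on the square-root scale.

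The routes genuinely differ for the permutation bound. The paper works with the permuted $U$-statistic directly. It cites a Bernstein-type concentration inequality for permuted two-sample $U$-statistics (Kim et al., 2022, Theorem 6.1), with tail $\exp\{-c\min(x^2/\Sigma^2,x/\Sigma)\}$ and $\Sigma\le\sqrt2 M_h/n_1$. It then converts this bound using $n_1\ge s^2$ and $t^2\ge 4M_h$. You instead reduce both statistics to the norm of a linear permutation statistic in feature space. You bound its mean by the exact second-moment computation, which uses $\sum_i c_i=0$, and obtain sub-Gaussian concentration from a bounded-differences martingale. This is more elementary and self-contained and gives the $t^2/M_h$ exponent without two-regime bookkeeping. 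The cost is that you must do the uniformity accounting yourself.

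That accounting needs one correction. A cross-group transposition changes the norm by up to $2\sqrt{M_h}(|c_i|+|c_j|)=2\sqrt{M_h}/s$, whichever group is larger. So the martingale increments are not weighted by $|c_i|$, and the variance proxy is not $CM_h\sum_i c_i^2$ for free. Revealing all $n$ positions gives a proxy $4nM_h/s^2=4M_h n^2/(n_1n_2)$, which blows up under imbalance.

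The fix is to reveal the positions of the smaller group first. The remaining positions then carry identical coefficients and contribute zero increments, so the proxy becomes $4M_h\min(n_1,n_2)/s^2=4M_h n/\max(n_1,n_2)\le 8M_h$. Equivalently, write $sN^\pi=s^{-1}\|\sum_{i\in A}\{\phi(Z_i)-\bar\phi\}\|$, where $A$ is the random smaller group and $\bar\phi$ is the pooled mean embedding. In this form your Hoeffding-comparison fallback also works, using convexity of $x\mapsto e^{\lambda\|x\|}$ and $\min(n_1,n_2)$ with-replacement draws. With this fix, the mean bound $\mathbb E_\pi(sN^\pi)^2\le 2M_h$ and the deterministic shift give $C\exp(-ct^2/M_h)$ uniformly in the side sizes.
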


\begin{proof}[Proof of Lemma~\ref{lem:ustat_validation_separation}]
Since $\|\phi(Z_i)\|^2=h(Z_i,Z_i)\le M_h$,
the empirical variances
$\widehat v_a=n_a^{-1}\sum_{i\in J_a}\|\phi(Z_i)-\bar\phi_a\|^2$
satisfy $0\le\widehat v_a\le M_h$. Expanding the within-group sums in
\eqref{eq:U_stat_discrepancy_score} gives
\begin{equation*}
 0\le D_h^{\rm b}-D_h^{\rm u}
 =\frac{\widehat v_1}{n_1-1}+\frac{\widehat v_2}{n_2-1}
 \le M_h\{(n_1-1)^{-1}+(n_2-1)^{-1}\}
 \le\frac{2M_h}{s^2}.
\end{equation*}
Write $R=\|(\bar\phi_1-\bar\phi_2)-\mathbb E(\bar\phi_1-\bar\phi_2)\|$.
Independence gives $\mathbb E R^2\le M_h/s^2$, and hence
$\mathbb E R\le\sqrt{M_h}/s$. Changing one observation in group $a$
changes $R$ by at most $2\sqrt{M_h}/n_a$, so the squared
bounded-difference constants sum to at most $4M_h/s^2$.
The bounded-differences inequality \citep{mcdiarmid1989method} therefore gives
\(\mathbb P\{R\ge\sqrt{M_h}/s+x\}\le\exp\{-s^2x^2/(2M_h)\}\) for $x>0$.
Since $\|\mathbb E(\bar\phi_1-\bar\phi_2)\|^2=\theta$, on
$\{R<\sqrt{\theta}/4\}$ with $\eta\ge8$ we have
\begin{equation*}
 D_h\ge D_h^{\rm u}\ge(\sqrt{\theta}-R)^2-\frac{2M_h}{s^2}
 >\frac{9\theta}{16}-\frac{\theta}{32}
 =\frac{17\theta}{32}>\frac{\theta}{4}.
\end{equation*}
Also, $\sqrt{M_h}/s\le\sqrt{\theta}/8$ when $\eta\ge8$. Thus
\begin{equation*}
 \mathbb P(s\sqrt{|D_h|}\le\mu/2)
 =\mathbb P(|D_h|\le\theta/4)
 \le\mathbb P(R\ge\sqrt{\theta}/4)
 \le\exp(-\eta^2/128).
\end{equation*}
We may therefore take $c_0=8$. For $\eta<c_0$, the indicator in
\eqref{eq:U_local_tails} gives the trivial bound.

For the permutation bound, condition on $Z_1,\ldots,Z_n$ and,
by symmetry, suppose $n_1\le n_2$. The statistic in
\eqref{eq:U_stat_discrepancy_score} is the average over ordered distinct
pairs from both groups of
$h(x_1,x_2)+h(y_1,y_2)-h(x_1,y_2)-h(x_2,y_1)$.
Let $\Pi_n$ be the set of permutations of $[n]$, and define
\begin{equation*}
\Sigma^2=\frac1{n_1^2(n_1-1)^2}
     \sup_{\pi\in\Pi_n}\sum_{1\le i\ne i'\le n_1}
                  h^2(Z_{\pi(i)},Z_{\pi(i')})
 \le\frac{M_h^2}{n_1(n_1-1)}\le\frac{2M_h^2}{n_1^2}.
\end{equation*}
By \citet[Theorem~6.1]{kim2022minimax}, with bivariate kernel $h$,
for $\Sigma>0$ and $x>0$ we have
\begin{equation}
 \mathbb P_\pi\{(D_h^{\rm u})^\pi\ge x\mid Z_1,\ldots,Z_n\}
 \le\exp\{-c\min(x^2/\Sigma^2,x/\Sigma)\},
 \qquad \Sigma\le\sqrt2M_h/n_1.
 \label{eq:power_U_permutation_concentration}
\end{equation}
If $\Sigma=0$, every off-diagonal pooled kernel value is zero and the
probability in \eqref{eq:power_U_permutation_concentration} is zero
at a positive threshold.

The bound on $D_h^{\rm b}-D_h^{\rm u}$ holds after every permutation.
Since $D_h^{\rm b}\ge0$,
we have $(D_h^{\rm u})^\pi\ge-2M_h/s^2$ and
$0\le D_h^\pi-(D_h^{\rm u})^\pi\le2M_h/s^2$.
Consequently, when $t^2\ge4M_h$, the event
$\{s\sqrt{|D_h^\pi|}\ge t\}$ implies
$(D_h^{\rm u})^\pi\ge(t^2-2M_h)/s^2\ge t^2/(2s^2)$
for either choice of $D_h$. For $\Sigma>0$, substitute
$x=t^2/(2s^2)$ in \eqref{eq:power_U_permutation_concentration}
and use $n_1\ge s^2$ to obtain
\begin{equation*}
\begin{aligned}
 \mathbb P_\pi(s\sqrt{|D_h^\pi|}\ge t\mid Z_1,\ldots,Z_n)
 &\le\exp\left[-c\min\left\{
       \left(\frac{n_1t^2}{s^2M_h}\right)^2,
       \frac{n_1t^2}{s^2M_h}\right\}\right]\\
 &\le\exp\{-c\min(t^4/M_h^2,t^2/M_h)\}
 \le\exp(-ct^2/M_h).
\end{aligned}
\end{equation*}
The last inequality uses $t^2\ge4M_h$. When $\Sigma=0$, the same
probability is zero for $t^2\ge4M_h$. Increasing $C$ covers
$t^2<4M_h$ and completes the proof.
\end{proof}

\subsubsection{Bounded absolute linear contrasts}
\label{sec:appendix-linear-contrast}

Set $Y=\sum_i c_i\psi(Z_i)$ and $Y^\pi=\sum_i c_i\psi(Z_{\pi(i)})$.
If $\theta=\mathbb E_{\bar F_1}\psi-\mathbb E_{\bar F_2}\psi$, then
$\mathbb E Y=s\theta$. We use $|Y|$ as the score and write
$\mu=s|\theta|$.

\begin{lemma}\label{lem:linear_validation_separation}
Suppose $\|\psi\|_\infty\le M_\psi$ for $M_\psi>0$, and $\mu>0$. Then \(\mathbb P(|Y|\le\mu/2)\le2\exp\{-\mu^2/(8M_\psi^2)\}\), and \(\mathbb P_\pi(|Y^\pi|\ge t\mid Z_1,\ldots,Z_n) \le2\exp\{-t^2/(4M_\psi^2)\},\qquad t>0\).
\end{lemma}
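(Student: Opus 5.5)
The two claims follow from different tools: the first from Hoeffding's inequality for independent bounded summands, the second from the finite-population bound \eqref{eq:power_finite_population_contrast} already derived above.

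\textbf{First bound.} Write $Y=\sum_i c_i\psi(Z_i)$ as a sum of independent terms. Each summand $c_i\psi(Z_i)$ lies in an interval of length $2M_\psi|c_i|$. By \eqref{eq:contrast_coefficient_norms}, the squared lengths sum to
\[
\sum_i 4M_\psi^2c_i^2=4M_\psi^2 .
\]
Hoeffding's inequality \citep{hoeffding1963probability} then gives, for $x>0$,
\[
\mathbb P(|Y-\mathbb EY|\ge x)\le 2\exp\{-2x^2/(4M_\psi^2)\}=2\exp\{-x^2/(2M_\psi^2)\}.
\]
Since $\mathbb E Y=s\theta$ and $|\mathbb EY|=\mu$, the event $\{|Y|\le\mu/2\}$ forces $|Y-\mathbb EY|\ge\mu/2$. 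Taking $x=\mu/2$ yields $2\exp\{-\mu^2/(8M_\psi^2)\}$, which is exactly the stated constant.

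This step uses only independence of the $Z_i$, not identical distribution. It therefore applies when the two sides of the split have heterogeneous laws, with $\theta$ defined through the average marginals $\bar F_1,\bar F_2$.

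\textbf{Second bound.} Condition on $Z_1,\ldots,Z_n$ and set $z_i=\psi(Z_i)\in[-M_\psi,M_\psi]$. The value of $c_i$ depends only on which group position $i$ belongs to. Hence
\[
Y^\pi=\sum_i c_i z_{\pi(i)}=s(\bar z_1^\pi-\bar z_2^\pi),
\]
where $\bar z_1^\pi,\bar z_2^\pi$ are the group averages after a uniform random partition of sizes $n_1,n_2$. Applying \eqref{eq:power_finite_population_contrast} with $M=M_\psi$ gives
\[
\mathbb P_\pi(|Y^\pi|\ge t\mid Z_1,\ldots,Z_n)\le 2\exp\{-t^2/(4M_\psi^2)\}.
\]
The bound is uniform in the side sizes, since it rests on $n_2/n\ge1/2$ after swapping the groups if necessary.

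The only point needing care is this identification of $Y^\pi$ with a scaled difference of random-partition means, together with the sign convention. The sign is harmless because the bound is for the absolute value, and the swap $n_1\le n_2$ is already built into the derivation of \eqref{eq:power_finite_population_contrast}. There is no real obstacle: the lemma is a direct application of the two Hoeffding inequalities to the normalized contrast.
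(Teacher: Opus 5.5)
Your proof is correct and matches the paper's argument essentially step for step. For the first bound you use Hoeffding's inequality with squared ranges summing to $4M_\psi^2$ by \eqref{eq:contrast_coefficient_norms}, together with the inclusion $\{|Y|\le\mu/2\}\subseteq\{|Y-\mathbb E Y|\ge\mu/2\}$; for the second you apply the finite-population bound \eqref{eq:power_finite_population_contrast} conditionally to the fixed population $\psi(Z_i)\in[-M_\psi,M_\psi]$.
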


\begin{proof}[Proof of Lemma~\ref{lem:linear_validation_separation}]
For the observed score, a small absolute contrast entails a large
deviation from its mean. For the permuted score, we instead use the
conditional finite-population bound. Since $|\mathbb E Y|=\mu$, the event $|Y|\le\mu/2$ implies
$|Y-\mathbb E Y|\ge\mu/2$. Hoeffding's inequality
\citep{hoeffding1963probability} gives
\begin{equation*}
\mathbb P(|Y|\le\mu/2)
 \le\mathbb P(|Y-\mathbb E Y|\ge\mu/2)
 \le2\exp\left\{-\frac{2(\mu/2)^2}{\sum_i(2M_\psi|c_i|)^2}\right\}
 =2\exp\{-\mu^2/(8M_\psi^2)\},
\end{equation*}
where the equality uses \eqref{eq:contrast_coefficient_norms}.
Conditional on the pooled observations, their transformed values are a
fixed finite population in $[-M_\psi,M_\psi]$.
Applying \eqref{eq:power_finite_population_contrast} to that population
proves the second bound.
\end{proof}

\subsubsection{Sub-Gaussian absolute linear contrasts}
\label{sec:appendix-subgaussian-linear-contrast}

For the next two lemmas, suppose the means of $\psi(Z_i)$ take two values.
Write
\begin{equation}
 \psi(Z_i)=m_++\Delta d_i+\varepsilon_i,\qquad
 d_i\in\{0,1\},\quad \mathbb E\varepsilon_i=0,
 \label{eq:two_mean_representation}
\end{equation}
where the labels $d_i$ are fixed. This representation permits different
noise distributions, provided their centered moments satisfy the stated
bounds. Positive $\mu=s|\theta|$ implies $\Delta\ne0$.

\begin{lemma}\label{lem:subgaussian_validation_separation}
Suppose \eqref{eq:two_mean_representation} holds, $\mu>0$, and
$\mathbb E e^{\lambda\varepsilon_i}\le e^{\lambda^2\sigma_\psi^2/2}$
for every $\lambda\in\mathbb R$ and every $i$, with $\sigma_\psi>0$.
Then, for every $t>0$, \(\mathbb P(|Y|\le\mu/2)\le2e^{-\mu^2/(8\sigma_\psi^2)}\), and \(\mathbb P(|Y^\pi|\ge t) \le2e^{-t^2/(16\Delta^2)}+2e^{-t^2/(8\sigma_\psi^2)}\).
The second probability is taken over both the observations and the permutation.
\end{lemma}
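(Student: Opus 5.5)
The plan is to prove the two bounds separately. For the observed score the argument mirrors Lemma~\ref{lem:linear_validation_separation}, with sub-Gaussian concentration in place of Hoeffding. For the permuted score we split $Y^\pi$ into a permuted-mean part and a permuted-noise part and bound each one.

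\emph{Observed score.} Under \eqref{eq:two_mean_representation}, $Y-\mathbb E Y=\sum_i c_i\varepsilon_i$. The $\varepsilon_i$ are independent and $\sigma_\psi$-sub-Gaussian, and $\sum_i c_i^2=1$ by \eqref{eq:contrast_coefficient_norms}. Hence the sum is $\sigma_\psi$-sub-Gaussian, and the Chernoff bound gives $\mathbb P(|Y-\mathbb EY|\ge x)\le 2e^{-x^2/(2\sigma_\psi^2)}$. Since $|\mathbb EY|=\mu$, the event $\{|Y|\le\mu/2\}$ forces $|Y-\mathbb EY|\ge\mu/2$. Taking $x=\mu/2$ yields $2e^{-\mu^2/(8\sigma_\psi^2)}$.

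\emph{Permuted score.} Write
\[
Y^\pi=\Delta\sum_i c_i d_{\pi(i)}+\sum_i c_i\varepsilon_{\pi(i)}=:A^\pi+N^\pi,
\]
where the constant $m_+$ drops out because $\sum_i c_i=0$. A union bound gives $\mathbb P(|Y^\pi|\ge t)\le \mathbb P(|A^\pi|\ge t/2)+\mathbb P(|N^\pi|\ge t/2)$, and we bound the two terms as follows.
\begin{itemize}
\item[(a)] The labels $d_i$ are fixed, so $A^\pi$ is $\Delta$ times $s(\bar z_1^\pi-\bar z_2^\pi)$ for the finite population $z_i=d_i-1/2\in[-1/2,1/2]$; the centering by $1/2$ does not matter, again because $\sum_i c_i=0$. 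Applying \eqref{eq:power_finite_population_contrast} with $M=1/2$ and threshold $t/(2|\Delta|)$ gives $2\exp\{-(t/2\Delta)^2\}=2e^{-t^2/(4\Delta^2)}\le 2e^{-t^2/(16\Delta^2)}$. This is the claimed term, with room to spare.
\item[(b)] Condition on $\pi$. Since $\pi$ is independent of the noise, $N^\pi=\sum_i c_{\pi^{-1}(i)}\varepsilon_i$ is a weighted sum of the same independent sub-Gaussian variables, with weights whose squares still sum to one. It is therefore $\sigma_\psi$-sub-Gaussian conditional on $\pi$, so $\mathbb P(|N^\pi|\ge t/2\mid\pi)\le 2e^{-t^2/(8\sigma_\psi^2)}$. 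Averaging over $\pi$ preserves this bound.
\end{itemize}

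The step needing care is (b). One must check that permuting does not break the independence structure or inflate the sub-Gaussian parameter, even though the noise laws may differ across $i$. This holds because each $\varepsilon_i$ satisfies the same uniform MGF bound and the weight vector is only permuted. A second point is that the probability in the permuted bound is joint over the data and $\pi$, which is exactly what permits this conditioning-on-$\pi$ argument.
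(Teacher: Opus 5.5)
Your proof is correct and follows the paper's argument. The observed contrast is handled by the same sub-Gaussian MGF/Chernoff bound, and the permuted contrast by the same split $Y^\pi=A^\pi+N^\pi$, with the finite-population Hoeffding bound \eqref{eq:power_finite_population_contrast} for the label term and a conditional-on-$\pi$ sub-Gaussian bound for the noise term. The only difference is that you center the labels at $1/2$ and take $M=1/2$ rather than the paper's $M=1$, which gives the sharper term $2e^{-t^2/(4\Delta^2)}$ and hence the stated bound a fortiori.
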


\begin{proof}[Proof of Lemma~\ref{lem:subgaussian_validation_separation}]
We first bound the centered observed contrast. For the permutation
contrast, the means themselves are permuted, so we will separate the
random allocation of the two mean levels from the centered errors.
By independence and \eqref{eq:contrast_coefficient_norms},
\begin{equation*}
\mathbb E\exp\{\lambda(Y-\mathbb E Y)\}
 =\prod_i\mathbb E e^{\lambda c_i\varepsilon_i}
 \le\exp\left\{\frac{\lambda^2\sigma_\psi^2}{2}\sum_i c_i^2\right\}
 =e^{\lambda^2\sigma_\psi^2/2}.
\end{equation*}
The exponential Markov inequality, optimized over positive and negative
$\lambda$, therefore gives
$\mathbb P(|Y-\mathbb E Y|\ge x)\le2e^{-x^2/(2\sigma_\psi^2)}$.
Use $\{|Y|\le\mu/2\}\subseteq\{|Y-\mathbb E Y|\ge\mu/2\}$ for the
first assertion.

For the second, the common mean in \eqref{eq:two_mean_representation}
cancels because $\sum_i c_i=0$. Thus
\begin{equation}
 Y^\pi=\Delta H^\pi+R^\pi,\qquad
 H^\pi=\sum_i c_i d_{\pi(i)},\quad
 R^\pi=\sum_i c_i\varepsilon_{\pi(i)}.
 \label{eq:power_label_noise_decomposition}
\end{equation}
The labels are a finite population in $[0,1]$, so using
\eqref{eq:power_finite_population_contrast} with $M=1$ gives
$\mathbb P(|\Delta H^\pi|\ge t/2)\le2e^{-t^2/(16\Delta^2)}$.
Conditional on $\pi$, the centered variables
$\varepsilon_{\pi(i)}$ remain independent. The preceding
moment-generating-function calculation applies to their contrast, since
its squared coefficient norm is one. Hence
\begin{equation*}
 \mathbb P(|Y^\pi|\ge t)
 \le\mathbb P(|\Delta H^\pi|\ge t/2)
       +\mathbb E_\pi\{\mathbb P(|R^\pi|\ge t/2\mid\pi)\}
 \le2e^{-t^2/(16\Delta^2)}+2e^{-t^2/(8\sigma_\psi^2)}.
\end{equation*}
The first inequality follows from \eqref{eq:power_label_noise_decomposition}
and the union bound. Notice that the independent-sum calculation conditions
on $\pi$, not on the realized pooled observations. This completes the proof.
\end{proof}

\subsubsection{Finite-moment absolute linear contrasts}
\label{sec:appendix-finite-moment-linear-even}

\begin{lemma}\label{lem:finite_moment_linear_validation_separation}
Suppose \eqref{eq:two_mean_representation} holds, $\mu>0$, and
$\sup_i\mathbb E|\varepsilon_i|^p\le\nu_p^p$ for a fixed $p>2$ and
$\nu_p>0$. There is $C_p>0$, depending only on $p$, such that \(\mathbb P(|Y|\le\mu/2)\le C_p(\nu_p/\mu)^p\), and \(\mathbb P(|Y^\pi|\ge t) \le2e^{-t^2/(16\Delta^2)}+C_p(\nu_p/t)^p, \qquad t>0\).
\end{lemma}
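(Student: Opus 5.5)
The plan follows the proof of Lemma~\ref{lem:subgaussian_validation_separation}. Wherever that proof uses a sub-Gaussian moment-generating-function bound, I substitute a moment inequality of Rosenthal/Marcinkiewicz--Zygmund type. The label part of the permutation contrast is unchanged, since it does not involve the noise.

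\emph{Observed score.} As before, $\{|Y|\le\mu/2\}\subseteq\{|Y-\mathbb E Y|\ge\mu/2\}$ and $Y-\mathbb E Y=\sum_i c_i\varepsilon_i$, a sum of independent centered variables. Rosenthal's inequality gives
\[
\mathbb E\Bigl|\sum_i c_i\varepsilon_i\Bigr|^p\le C_p\Bigl\{\sum_i|c_i|^p\mathbb E|\varepsilon_i|^p+\Bigl(\sum_i c_i^2\mathbb E\varepsilon_i^2\Bigr)^{p/2}\Bigr\}.
\]
Lyapunov's inequality gives $\mathbb E\varepsilon_i^2\le\nu_p^2$, and \eqref{eq:contrast_coefficient_norms} gives $\sum_i|c_i|^p\le1$ and $\sum_i c_i^2=1$. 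Together these bound the $p$-th moment by $2C_p\nu_p^p$. Markov's inequality at level $\mu/2$ then yields $\mathbb P(|Y|\le\mu/2)\le 2^{p+1}C_p(\nu_p/\mu)^p$, and I absorb the constant into $C_p$.

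\emph{Permuted score.} I use the decomposition \eqref{eq:power_label_noise_decomposition}, $Y^\pi=\Delta H^\pi+R^\pi$, which is valid because $\sum_i c_i=0$ cancels $m_+$. The union bound gives $\mathbb P(|Y^\pi|\ge t)\le\mathbb P(|\Delta H^\pi|\ge t/2)+\mathbb P(|R^\pi|\ge t/2)$.

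For the first term, the labels $d_i\in\{0,1\}$ are a fixed finite population. Applying \eqref{eq:power_finite_population_contrast} with $M=1$ at threshold $t/(2|\Delta|)$ gives $2e^{-t^2/(16\Delta^2)}$, exactly as in the sub-Gaussian case.

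For the second term, I condition on $\pi$. Since $\pi$ is independent of the observations, the $\varepsilon_{\pi(i)}$ remain independent and centered with $p$-th moments at most $\nu_p^p$. The same Rosenthal bound therefore gives $\mathbb E(|R^\pi|^p\mid\pi)\le 2C_p\nu_p^p$, and conditional Markov followed by averaging over $\pi$ gives $C_p(\nu_p/t)^p$.

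The one point needing care is the one the previous lemma flagged: the argument must condition on $\pi$, not on the pooled observations. Conditioning on the pooled observations would destroy the independence of the errors. Apart from that, the Rosenthal step is routine, with coefficients uniform in $n_1,n_2$ by \eqref{eq:contrast_coefficient_norms}.
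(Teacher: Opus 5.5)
Your proposal is correct and follows the paper's proof essentially step for step. The observed score is handled by Rosenthal's inequality with \eqref{eq:contrast_coefficient_norms} followed by Markov. For the permuted score, you use the decomposition $Y^\pi=\Delta H^\pi+R^\pi$, the finite-population Hoeffding bound with $M=1$ for the label term, and a Rosenthal--Markov bound conditional on $\pi$ for the noise term, as the paper does; your explicit constants and the Lyapunov step for $\mathbb E\varepsilon_i^2\le\nu_p^2$ only spell out what the paper leaves implicit.
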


\begin{proof}[Proof of Lemma~\ref{lem:finite_moment_linear_validation_separation}]
The decomposition \eqref{eq:power_label_noise_decomposition} remains
valid. We replace the exponential bound for its centered-error term by
a moment bound, keeping the finite-population bound for the label term.
Rosenthal's inequality \citep{rosenthal1970subspaces} and
\eqref{eq:contrast_coefficient_norms} give
\begin{equation}
 \begin{aligned}
 \mathbb E\left|\sum_i c_i\varepsilon_i\right|^p
 &\le C_p\left\{\left(\sum_i c_i^2\mathbb E\varepsilon_i^2\right)^{p/2}
       +\sum_i|c_i|^p\mathbb E|\varepsilon_i|^p\right\}\\
 &\le C_p\nu_p^p\left\{\left(\sum_i c_i^2\right)^{p/2}
       +\sum_i|c_i|^p\right\}\le C_p'\nu_p^p.
 \end{aligned}
 \label{eq:power_Rosenthal_bound}
\end{equation}
Here $\mathbb E\varepsilon_i^2\le\nu_p^2$ follows from the $p$th-moment
assumption. Since $|Y|\le\mu/2$ implies a centered deviation of at least
$\mu/2$, Markov's inequality proves the observed-score bound.

For a permutation, use \eqref{eq:power_label_noise_decomposition}.
The label term has the same bound as in the preceding proof.
Conditional on $\pi$, the coefficient vector of $R^\pi$ satisfies the
same norm identities. Applying \eqref{eq:power_Rosenthal_bound} and
Markov's inequality conditionally, and then taking expectations, yields
\begin{equation*}
\begin{aligned}
\mathbb P(|Y^\pi|\ge t)
 &\le\mathbb P(|\Delta H^\pi|\ge t/2)
      +\mathbb E_\pi\{\mathbb P(|R^\pi|\ge t/2\mid\pi)\}\\
 &\le2e^{-t^2/(16\Delta^2)}
      +(2/t)^p\mathbb E_\pi\{\mathbb E(|R^\pi|^p\mid\pi)\}\\
 &\le2e^{-t^2/(16\Delta^2)}+C_p(\nu_p/t)^p.
\end{aligned}
\end{equation*}
This proves the second assertion.
\end{proof}

\subsection{Even-side score separation}
\label{sec:appendix-score-details}

We first evaluate the even-side score at a split fixed conditional on
$\mathcal F^{\mathcal O}$, such as the odd-side estimated split,
and apply the local bounds under this conditioning.
This fixes the split and any odd-side fitted transformation. We apply
the boundedness or centered moment conditions of the corresponding local
bound under this conditioning, with common constants over the targets.
For \texttt{POIS+}, Appendix~\ref{sec:odd_nondeflation_score_conditions}
uses a scan score recomputed on each odd-side permutation.
Write $I_j^{*,\mathcal E}=I_j^*\cap\mathcal E$ and split this set at the
odd-side estimate $\widehat\tau_{k_j^*}^{\mathcal O}$ into
$I_{j,-}^{*,\mathcal E}$ and $I_{j,+}^{*,\mathcal E}$, with sizes
$n_{1,j}^{\mathcal E}$ and $n_{2,j}^{\mathcal E}$. Let
\begin{equation}
 s_{j,T}^{\mathcal E}
 =\left\{\frac{n_{1,j}^{\mathcal E}n_{2,j}^{\mathcal E}}
                   {n_{1,j}^{\mathcal E}+n_{2,j}^{\mathcal E}}\right\}^{1/2}.
 \label{eq:power_even_scale}
\end{equation}
On the event that all targets are reported, disjointness implies that
each contains only its designated change-point, with adjacent laws
$F_{j,-},F_{j,+}$. For deterministic targets, \textup{(A1)} therefore
ensures this geometry for all sufficiently large $T$. For a
data-independent random pool, apply the bounds on the pool event where
this geometry holds and set the error bound to one elsewhere before
averaging.
Let $\zeta_{j,-}^{\mathcal E}$ be the fraction of post-change observations
on the nominal left, and $\zeta_{j,+}^{\mathcal E}$ the fraction of
pre-change observations on the nominal right. Their average laws are
\begin{equation*}
\begin{aligned}
\widetilde F_{j,-}^{\mathcal E} &=(1-\zeta_{j,-}^{\mathcal E})F_{j,-} +\zeta_{j,-}^{\mathcal E}F_{j,+},\\
\widetilde F_{j,+}^{\mathcal E} &=\zeta_{j,+}^{\mathcal E}F_{j,-} +(1-\zeta_{j,+}^{\mathcal E})F_{j,+}.
\end{aligned}
\end{equation*}
These are the average marginal distributions on the two sides.
Subtracting the two mixture laws gives \(\widetilde F_{j,-}^{\mathcal E}-\widetilde F_{j,+}^{\mathcal E} =(1-\zeta_{j,-}^{\mathcal E}-\zeta_{j,+}^{\mathcal E}) (F_{j,-}-F_{j,+})\).
Thus the linear mean difference is multiplied by
$1-\zeta_{j,-}^{\mathcal E}-\zeta_{j,+}^{\mathcal E}$, and the
population kernel discrepancy is multiplied by its square. The latter follows because
$\Theta_h$ is quadratic in the difference of the two laws. This identifies precisely how localization enters the
signals used below.

For the kernel score, use
$V_{k_j^*}^{\mathcal E}=s_{j,T}^{\mathcal E}
\sqrt{|D_h(I_{j,-}^{*,\mathcal E},I_{j,+}^{*,\mathcal E})|}$ and define
\begin{equation*}
\theta_{j,T}^{U,\mathcal E}
  =\Theta_h(\widetilde F_{j,-}^{\mathcal E},\widetilde F_{j,+}^{\mathcal E}),
 \qquad
 \eta_{j,T}^{U,\mathcal E}=s_{j,T}^{\mathcal E}\sqrt{\theta_{j,T}^{U,\mathcal E}/M_h}.
\end{equation*}
For a linear contrast, use the absolute value of the normalized difference
of the two $\psi$-averages and write
\begin{equation*}
\theta_{j,T}^{L,\mathcal E}
   =\mathbb E_{\widetilde F_{j,-}^{\mathcal E}}\psi
                  -\mathbb E_{\widetilde F_{j,+}^{\mathcal E}}\psi,
 \qquad
 \Delta_{j,T}^{L,\mathcal E}
   =\mathbb E_{F_{j,-}}\psi-\mathbb E_{F_{j,+}}\psi.
\end{equation*}
When the denominator is nonzero, let
$\rho_{j,T}^{\mathcal E}=|\theta_{j,T}^{L,\mathcal E}|/
 |\Delta_{j,T}^{L,\mathcal E}|
 =|1-\zeta_{j,-}^{\mathcal E}-\zeta_{j,+}^{\mathcal E}|$.
Otherwise, set it to zero. The effective signals for the three linear
families are
\begin{equation*}
\begin{aligned}
\eta_{j,T}^{L,\mathcal E} &=\frac{s_{j,T}^{\mathcal E}|\theta_{j,T}^{L,\mathcal E}|}{2M_\psi},\\
\overline\eta_{j,T}^{SG,\mathcal E} &=s_{j,T}^{\mathcal E}\rho_{j,T}^{\mathcal E} \wedge\frac{s_{j,T}^{\mathcal E}|\theta_{j,T}^{L,\mathcal E}|}{\sigma_\psi},\\
\eta_{j,T}^{L,p,\mathcal E} &=\frac{s_{j,T}^{\mathcal E}|\theta_{j,T}^{L,\mathcal E}|}{\nu_p}.
\end{aligned}
\end{equation*}
Only the quantity relevant to a given score family is used.
For the mean CUSUM with $\psi(x)=x$, the factor $s_{j,T}^{\mathcal E}|\theta_{j,T}^{L,\mathcal E}|$ is an effective jump size multiplied by the square root of the effective sample size, paralleling the signal-strength quantities in \citet[Assumption~3.3]{fryzlewicz2014wild} and \citet[Theorem~1]{baranowski2019narrowest}.

\begin{proposition}\label{prop:even_score_verification}
For the $U$, $L$, $SG$, and $L^p$ score families, assume, conditional on
$\mathcal F^{\mathcal O}$, the conditions of
Lemmas~\ref{lem:ustat_validation_separation},
\ref{lem:linear_validation_separation},
\ref{lem:subgaussian_validation_separation}, and
\ref{lem:finite_moment_linear_validation_separation}, respectively.
Then the following bounds hold:
\begin{align}
 U:\quad q_{j,T}^{\mathcal E}
 &\le\mathbb E\left[1\wedge\left\{
   \mathbbm1\{\eta_{j,T}^{U,\mathcal E}<c_0\}
   +C(B+1)e^{-c(\eta_{j,T}^{U,\mathcal E})^2}\right\}\right],
 \label{eq:U_basic_signal_condition}\\
 L:\quad q_{j,T}^{\mathcal E}
 &\le\mathbb E\left[1\wedge C(B+1)e^{-c(\eta_{j,T}^{L,\mathcal E})^2}\right],
 \label{eq:L_basic_signal_condition}\\
 SG:\quad q_{j,T}^{\mathcal E}
 &\le\mathbb E\left[1\wedge C(B+1)e^{-c(\overline\eta_{j,T}^{SG,\mathcal E})^2}\right],
 \label{eq:SG_basic_signal_condition}\\
 L^p:\quad q_{j,T}^{\mathcal E}
 &\le\mathbb E\left[1\wedge\left\{
    CB e^{-c(s_{j,T}^{\mathcal E}\rho_{j,T}^{\mathcal E})^2}
   +C_p(B+1)(\eta_{j,T}^{L,p,\mathcal E})^{-p}\right\}\right].
 \label{eq:Lp_basic_signal_condition}
\end{align}
A conditional bound is set to one when a required side size or positive
signal condition fails. The average of the relevant right-hand sides
tending to zero implies \textup{(A2)}, and their sum tending to zero implies
its strengthened version.
\end{proposition}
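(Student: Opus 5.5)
We bound $q_{j,T}^{\mathcal E}=\mathbb P(\max_{b\le B}V_{k_j^*,b}^{\mathcal E}\ge V_{k_j^*}^{\mathcal E})$ by conditioning on $\mathcal F^{\mathcal O}$. Under this conditioning the target interval, the odd-side split $\widehat\tau^{\mathcal O}_{k_j^*}$, the side sizes $n_{1,j}^{\mathcal E},n_{2,j}^{\mathcal E}$, and any fitted transformation are all fixed. The even-side observations in $I_j^{*,\mathcal E}$ are then independent, with average marginal laws $\widetilde F_{j,\pm}^{\mathcal E}$ on the two sides. On the event where all targets are reported, disjointness gives the one-change geometry; off that event (or off the good pool event) we set the bound to one.

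The core is a threshold split at $t=\mu/2$, where $\mu=s_{j,T}^{\mathcal E}\sqrt{\theta^{U,\mathcal E}_{j,T}}$ or $\mu=s_{j,T}^{\mathcal E}|\theta^{L,\mathcal E}_{j,T}|$:
\[
\{\max_b V_{k_j^*,b}^{\mathcal E}\ge V_{k_j^*}^{\mathcal E}\}\subseteq\{V_{k_j^*}^{\mathcal E}\le\mu/2\}\cup\bigcup_{b=1}^B\{V_{k_j^*,b}^{\mathcal E}\ge\mu/2\}.
\]
Each permuted score $V_{k_j^*,b}^{\mathcal E}$ has the law of the single-permutation statistic $Y^\pi$ or $s\sqrt{|D_h^\pi|}$. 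A union bound therefore gives the observed-score term plus $B$ times the permutation tail at $\mu/2$.

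We then plug in the lemmas family by family.
\begin{itemize}
\item[] \textbf{Kernel ($U$).} Lemma~\ref{lem:ustat_validation_separation} gives $\mathbbm1\{\eta<c_0\}+Ce^{-c\eta^2}$ for the observed score. For each permuted score it gives $Ce^{-c\mu^2/(4M_h)}=Ce^{-c\eta^2}$. Collecting terms yields \eqref{eq:U_basic_signal_condition}.
\item[] \textbf{Bounded contrast ($L$).} Lemma~\ref{lem:linear_validation_separation} gives exponents in $\mu^2/M_\psi^2\asymp(\eta^{L,\mathcal E}_{j,T})^2$.
\item[] \textbf{Sub-Gaussian ($SG$).} Lemma~\ref{lem:subgaussian_validation_separation} produces two terms in the permutation tail. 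The term $e^{-\mu^2/(64\Delta^2)}$ equals $e^{-c(s\rho)^2}$, because $\mu/|\Delta|=s\rho$ by the mixture identity $\theta=(1-\zeta_--\zeta_+)\Delta$. The term $e^{-\mu^2/(32\sigma_\psi^2)}$ is at most $e^{-c(s|\theta|/\sigma_\psi)^2}$. Both are bounded by $e^{-c(\overline\eta^{SG})^2}$, since $\overline\eta$ is the minimum of the two quantities.
\item[] \textbf{Finite moment ($L^p$).} Lemma~\ref{lem:finite_moment_linear_validation_separation} gives $C_p(\nu_p/\mu)^p$ for the observed score. Each permuted score contributes $2e^{-c(s\rho)^2}+C_p(2\nu_p/\mu)^p$. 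Summing over $b$ gives $CBe^{-c(s\rho)^2}+C_p(B+1)(\eta^{L,p})^{-p}$, which is \eqref{eq:Lp_basic_signal_condition}.
\end{itemize}
In every case, truncating at one and taking $\mathbb E$ over $\mathcal F^{\mathcal O}$ finishes the bound. When a side size is too small (for instance $n_a<2$ for the $U$-statistic) or the signal is zero, the conditional bound is set to one.

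The main obstacle is justifying the permutation tail for the sub-Gaussian and finite-moment cases. Those lemmas bound the probability jointly over the data and $\pi$, not conditionally on the data. This is still enough, because the union bound only needs the marginal probability of each $\{V_{k_j^*,b}^{\mathcal E}\ge\mu/2\}$, and each pair (observed data, $\pi_{k,b}$) has exactly the joint law the lemma assumes. We also need to check that the lemma hypotheses hold conditionally on $\mathcal F^{\mathcal O}$ with constants that are uniform over targets; this is assumed in the statement.

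Finally, averaging the bounds over $j$ and using the assumption that the average of the right-hand sides is $o(1)$ gives (A2) directly. Summing them instead gives the strengthened version.
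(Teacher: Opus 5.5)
Your proposal is correct and follows the paper's proof essentially step for step. You condition on $\mathcal F^{\mathcal O}$, split at $\mu/2$ into one observed-score lower tail plus $B$ permutation upper tails, and substitute the four local lemmas, using $\mu/|\Delta|=s\rho$ and the same bookkeeping that yields the factors $B$ and $B+1$ in the $L^p$ case. Your observation that the joint (data and permutation) tail bounds of the $SG$ and $L^p$ lemmas suffice for the union bound is also how the paper treats those two families.
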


\begin{proof}[Proof of Proposition~\ref{prop:even_score_verification}]
Fix a target and condition on $\mathcal F^{\mathcal O}$. We first
reduce a failure of score separation to one observed-score lower tail
and $B$ permutation upper tails, then substitute the appropriate local
bounds separately for the four score families. On an admissible target, put
$\mu=s_{j,T}^{\mathcal E}\sqrt{\theta_{j,T}^{U,\mathcal E}}>0$ for the
kernel score and $\mu=s_{j,T}^{\mathcal E}|\theta_{j,T}^{L,\mathcal E}|>0$
for a linear contrast. For this realized value of $\mu$,
\begin{equation*}
\left\{\max_{b\in[B]}V_{k_j^*,b}^{\mathcal E}
                         \ge V_{k_j^*}^{\mathcal E}\right\}
 \subseteq\{V_{k_j^*}^{\mathcal E}\le\mu/2\}
       \cup\bigcup_{b=1}^B\{V_{k_j^*,b}^{\mathcal E}\ge\mu/2\}.
\end{equation*}
Taking conditional probabilities and then expectations, we obtain
\begin{equation}
 q_{j,T}^{\mathcal E}\le\mathbb E\left[1\wedge\left\{
   \mathbb P(V_{k_j^*}^{\mathcal E}\le\mu/2\mid\mathcal F^{\mathcal O})
   +\sum_{b=1}^B\mathbb P(V_{k_j^*,b}^{\mathcal E}\ge\mu/2
                                      \mid\mathcal F^{\mathcal O})
                      \right\}\right].
 \label{eq:power_even_reduction}
\end{equation}

For the kernel score, apply Lemma~\ref{lem:ustat_validation_separation}
at $t=\mu/2$. The
conditional permutation bound holds for every pooled sample, so it also
holds after averaging over that sample. This gives
\eqref{eq:U_basic_signal_condition}. For bounded linear contrasts, the observed bound is
$2\exp\{-\mu^2/(8M_\psi^2)\}$, and each conditional permutation
bound at $\mu/2$ is $2\exp\{-\mu^2/(16M_\psi^2)\}$.
Substituting these two bounds from
Lemma~\ref{lem:linear_validation_separation} gives
\eqref{eq:L_basic_signal_condition}, since
$\mu/M_\psi=2\eta_{j,T}^{L,\mathcal E}$.

For the two unbounded families, \eqref{eq:two_mean_representation} holds
with $d_i$ the pre-change labels. The permutation part of
Lemma~\ref{lem:subgaussian_validation_separation}, applied conditionally
on $\mathcal F^{\mathcal O}$ and then averaged over the independent
permutation, gives a bound of the form \(Ce^{-c(\mu/|\Delta_{j,T}^{L,\mathcal E}|)^2} +Ce^{-c(\mu/\sigma_\psi)^2} \le C'e^{-c'(\overline\eta_{j,T}^{SG,\mathcal E})^2}\).
Here $\mu/|\Delta_{j,T}^{L,\mathcal E}|=
 s_{j,T}^{\mathcal E}\rho_{j,T}^{\mathcal E}$.
The observed bound is no larger after a change of constants. Substitution
in \eqref{eq:power_even_reduction} proves
\eqref{eq:SG_basic_signal_condition}. Finally,
Lemma~\ref{lem:finite_moment_linear_validation_separation} bounds the
observed error by $C_p(\nu_p/\mu)^p$ and the permutation error by
$Ce^{-c(s_{j,T}^{\mathcal E}\rho_{j,T}^{\mathcal E})^2}
 +C_p(\nu_p/\mu)^p$. The observed term occurs once and the polynomial permutation term
occurs $B$ times. The label-allocation term occurs only for the $B$
permutations. This gives the factors $B+1$ and $B$ in
\eqref{eq:Lp_basic_signal_condition}.

For inadmissible side sizes or nonpositive signal, use the trivial
conditional probability bound one in place of the corresponding display.
The last assertions now follow by averaging or summing the bounds.
\end{proof}

We next extend these bounds to scan scores on the even side.
Suppose $v_k$ scans a prespecified grid on each subsequence, using a
fixed kernel or transformation. Define the even-side grid
$\mathcal T_j^{*,\mathcal E}$, observed splitwise score
$S_j^{\mathcal E}(u)$, and permuted scores $S_{j,b}^{\mathcal E}(u)$ as in \eqref{eq:odd_scan_definition} below,
with the odd-side superscript replaced by $\mathcal E$, and set
$L_j^{\mathcal E}=|\mathcal T_j^{*,\mathcal E}|$.
Choose an oracle split $u_j^\circ$ for the analysis and let
$\mu_{j,T}^{\mathcal E}>0$ be its signal, defined as in
\eqref{eq:odd_oracle_signal}. Conditioning first on $\mathcal F^{\mathcal O}$
and then taking expectations gives
\begin{equation*}
q_{j,T}^{\mathcal E}
\le \mathbb E\bigg[1\wedge\bigg\{
\mathbb P\left(S_j^{\mathcal E}(u_j^\circ)
       \le\mu_{j,T}^{\mathcal E}/2\mid\mathcal F^{\mathcal O}\right)+\sum_{b=1}^{B}\sum_{u\in\mathcal T_j^{*,\mathcal E}}
\mathbb P\left(S_{j,b}^{\mathcal E}(u)
       \ge\mu_{j,T}^{\mathcal E}/2\mid\mathcal F^{\mathcal O}\right)
\bigg\}\bigg].
\end{equation*}
Indeed, outside these events the observed maximum exceeds $\mu_{j,T}^{\mathcal E}/2$ and every permuted maximum is below this value. The permutation bounds are uniform over the grid by the local lemmas.
Consequently, the four bounds in
\eqref{eq:U_basic_signal_condition}--\eqref{eq:Lp_basic_signal_condition}
apply with $B$ replaced by $BL_j^{\mathcal E}$, and with each signal
evaluated at the oracle split. The observed-score term still occurs once.
The same admissibility and conditional moment requirements apply.

\subsection{Odd-side scans and signal conditions}
\label{sec:odd_nondeflation_score_conditions}
\label{app:power:g_compatibility}

Here we take $v_k$ to be a scan score on the odd side. Since acceptance depends on the
same odd observations, we control every proposed permutation score
before restricting to accepted proposals. Take a
fixed nonempty split grid
$\mathcal T_j^{*,\mathcal O}$ on $I_j^{*,\mathcal O}=I_j^*\cap\mathcal O$,
and let $L_j^{\mathcal O}=|\mathcal T_j^{*,\mathcal O}|$. Write
\begin{equation}
\begin{aligned}
 V_{k_j^*}^{\mathcal O}
   &=\max_{u\in\mathcal T_j^{*,\mathcal O}}S_j^{\mathcal O}(u),\\
 V_{k_j^*,b}^{\mathcal O}
   &=\max_{u\in\mathcal T_j^{*,\mathcal O}}S_{j,b}^{\mathcal O}(u),
 \qquad b=1,\ldots,B'.
\end{aligned}
 \label{eq:odd_scan_definition}
\end{equation}
Here $S_j^{\mathcal O}(u)$ is the observed score at split $u$, and
$S_{j,b}^{\mathcal O}(u)$ is the score after the $b$th local permutation.
The grid and the scoring function are unchanged under permutation.
For a split $u$, write $n_{1,j}^{\mathcal O}(u),n_{2,j}^{\mathcal O}(u)$
for the side sizes, $s_{j,T}^{\mathcal O}(u)$ for the scale in
\eqref{eq:power_even_scale}, and $\widetilde F_{j,-}^{\mathcal O}(u),
\widetilde F_{j,+}^{\mathcal O}(u)$ for the average laws on the two sides.
For the analysis, choose an oracle split
$u_j^\circ\in\mathcal T_j^{*,\mathcal O}$.

For linear scores, $S_j^{\mathcal O}(u)$ is the absolute value of
the normalized difference of the two observed $\psi$-averages.
The permuted score $S_{j,b}^{\mathcal O}(u)$ uses the same formula on the permuted data. Set
$\theta_{j,T}^{L,\mathcal O}(u)=
\mathbb E_{\widetilde F_{j,-}^{\mathcal O}(u)}\psi-
\mathbb E_{\widetilde F_{j,+}^{\mathcal O}(u)}\psi$.
For kernel scores, use
$S_j^{\mathcal O}(u)=s_{j,T}^{\mathcal O}(u)
 \sqrt{|D_j^{U,\mathcal O}(u)|}$, where $D_j^{U,\mathcal O}(u)$ is
the chosen statistic $D_h$ evaluated at split $u$.
The same formula on the permuted data defines $S_{j,b}^{\mathcal O}(u)$, and
$\theta_{j,T}^{U,\mathcal O}(u)=
\Theta_h(\widetilde F_{j,-}^{\mathcal O}(u),\widetilde F_{j,+}^{\mathcal O}(u))$.
All linear splits have nonempty sides, and all kernel-score splits
have at least two observations per side.

Local proposal draws are independent of the data for every fixed target. The target, grid, oracle split, and functions $h,\psi$ are fixed independently of the odd-side observations. For independently trained functions, the bounds below hold conditional on the training data when their stated conditions hold conditionally, and unconditional bounds follow by averaging.

Define the oracle signal by
\begin{equation}
 \mu_{j,T}^{\mathcal O}
 =\begin{cases}
 s_{j,T}^{\mathcal O}(u_j^\circ)\sqrt{\theta_{j,T}^{U,\mathcal O}(u_j^\circ)},
                                      & \text{kernel score},\\
 s_{j,T}^{\mathcal O}(u_j^\circ)|\theta_{j,T}^{L,\mathcal O}(u_j^\circ)|,
                                      & \text{linear contrast}.
 \end{cases}
 \label{eq:odd_oracle_signal}
\end{equation}
For positive oracle signals, let
$\eta_{j,T}^{U,\mathcal O}=\mu_{j,T}^{\mathcal O}/\sqrt{M_h}$ and
$\eta_{j,T}^{L,\mathcal O}=\mu_{j,T}^{\mathcal O}/(2M_\psi)$.
For the unbounded contrasts, use the single-change target geometry in
Appendix~\ref{sec:appendix-score-details} and set
\begin{equation*}
\Delta_{j,T}^{L,\mathcal O}=\mathbb E_{F_{j,-}}\psi-
                              \mathbb E_{F_{j,+}}\psi,\qquad
 \rho_{j,T}^{\mathcal O}
 =\frac{|\theta_{j,T}^{L,\mathcal O}(u_j^\circ)|}
              {|\Delta_{j,T}^{L,\mathcal O}|},
\end{equation*}
where the denominator is nonzero when $\mu_{j,T}^{\mathcal O}>0$.
Then $\rho_{j,T}^{\mathcal O}\in(0,1]$, and define
\begin{equation*}
\overline\eta_{j,T}^{SG,\mathcal O}
 =s_{j,T}^{\mathcal O}(u_j^\circ)\rho_{j,T}^{\mathcal O}
                              \wedge\mu_{j,T}^{\mathcal O}/\sigma_\psi,
 \qquad
 \eta_{j,T}^{L,p,\mathcal O}=\mu_{j,T}^{\mathcal O}/\nu_p.
\end{equation*}

The next proposition bounds the odd-side part of
Assumption~\ref{ass:odd_nondeflation} at its fixed fraction $\delta$.
To separate the odd-score requirement from the full event $\mathcal G_{j,T}$, write
\begin{equation*}
\begin{aligned}
 \mathcal G_{j,T}^{\mathcal O}
 &=\left\{\max_{b\in\mathcal A_j}V_{k_j^*,b}^{\mathcal O}<\delta V_{k_j^*}^{\mathcal O}\right\},\\
 q_{j,T}^{\mathcal O}
 &=\mathbb P\{k_j^*\in\widehat{\mathcal S},
                           (\mathcal G_{j,T}^{\mathcal O})^c\}.
 \end{aligned}
\end{equation*}
On a reported target,
$\mathcal G_{j,T}=\mathcal G_{j,T}^{\mathcal O}\cap
(\{|\mathcal A_j|\ge1\}\cup\{j\in\mathcal J_T\})$. Put
\begin{equation*}
s_{j,T}^{\rm acc}=\mathbb P\{k_j^*\in\widehat{\mathcal S},\
 V_{k_j^*}^{\mathcal O}>0,\ |\mathcal A_j|=0,\ j\notin\mathcal J_T\}.
\end{equation*}
On a reported target, $j\notin\mathcal J_T$ is precisely the failure
$V_{k_j^*}^{\mathcal E}\le\max_{1\le b\le B}V_{k_j^*,b}^{\mathcal E}$.
When $\mathcal A_j=\varnothing$, $\mathcal G_{j,T}^{\mathcal O}$ reduces to
$\{V_{k_j^*}^{\mathcal O}>0\}$. The events counted by
$s_{j,T}^{\rm acc}$ and $q_{j,T}^{\mathcal O}$ are therefore disjoint,
which gives the decomposition below.

\begin{proposition}\label{prop:odd_score_verification}
Fix $\delta\in(0,1)$ as in Assumption~\ref{ass:odd_nondeflation}.
For each positive-signal target, use
$t=\delta\mu_{j,T}^{\mathcal O}/2$ in the bounds below.
For the $U$, $L$, $SG$, and $L^p$ score families, assume the conditions of
Lemmas~\ref{lem:ustat_validation_separation},
\ref{lem:linear_validation_separation},
\ref{lem:subgaussian_validation_separation}, and
\ref{lem:finite_moment_linear_validation_separation}, respectively.
Then $q_{j,T}^{\mathcal O}\le1\wedge e_{j,T}$ and
\begin{equation}
 \mathbb P\{k_j^*\in\widehat{\mathcal S},\mathcal G_{j,T}^{c}\}=q_{j,T}^{\mathcal O}+s_{j,T}^{\rm acc}\le1\wedge\{e_{j,T}+s_{j,T}^{\rm acc}\},
 \label{eq:odd_scan_reduction_bound}
\end{equation}
where $e_{j,T}$ is the observed lower-tail probability at the oracle split
plus the sum of the splitwise permutation upper-tail probabilities:
\begin{equation*}
e_{j,T}=\mathbb P\{S_j^{\mathcal O}(u_j^\circ)\le\mu_{j,T}^{\mathcal O}/2\}
    +\sum_{b=1}^{B'}\sum_{u\in\mathcal T_j^{*,\mathcal O}}
                     \mathbb P\{S_{j,b}^{\mathcal O}(u)\ge t\},
 \qquad t=\delta\mu_{j,T}^{\mathcal O}/2.
\end{equation*}
It satisfies the corresponding bound below:
\begin{align}
 U:\quad e_{j,T}
 &\le\mathbbm1\{\eta_{j,T}^{U,\mathcal O}<c_0\}
    +Ce^{-c(\eta_{j,T}^{U,\mathcal O})^2}
    +CB'L_j^{\mathcal O}e^{-ct^2/M_h},
 \label{eq:odd_U_score_tails}\\
 L:\quad e_{j,T}
 &\le2e^{-(\mu_{j,T}^{\mathcal O})^2/(8M_\psi^2)}
                 +2B'L_j^{\mathcal O}e^{-t^2/(4M_\psi^2)},
 \label{eq:odd_L_score_tails}\\
 SG:\quad e_{j,T}
 &\le2e^{-(\mu_{j,T}^{\mathcal O})^2/(8\sigma_\psi^2)}
      +2B'L_j^{\mathcal O}
       \left\{e^{-t^2/(16|\Delta_{j,T}^{L,\mathcal O}|^2)}
                     +e^{-t^2/(8\sigma_\psi^2)}\right\},
 \label{eq:odd_SG_score_tails}\\
 L^p:\quad e_{j,T}
 &\le C_p(\nu_p/\mu_{j,T}^{\mathcal O})^p
       +B'L_j^{\mathcal O}
       \left\{2e^{-t^2/(16|\Delta_{j,T}^{L,\mathcal O}|^2)}
                           +C_p(\nu_p/t)^p\right\}.
 \label{eq:odd_Lp_score_tails}
\end{align}
A target with nonpositive signal or inadmissible side sizes is assigned
$e_{j,T}=1$. The averages, or sums, of
$1\wedge(e_{j,T}+s_{j,T}^{\rm acc})$ tending to zero imply, respectively,
$\mathbb E D_T=o(K^*)$ or $\mathbb E D_T=o(1)$, since \(\mathbb E D_T=\sum_{j=1}^{K^*}(q_{j,T}^{\mathcal O}+s_{j,T}^{\rm acc})\).
Under \textup{(A1)}, \eqref{eq:power_failure_count_identity} and
\begin{equation*}
\frac1{K^*}\sum_{j=1}^{K^*}\mathbb P(k_j^*\notin\widehat{\mathcal S})\le\mathbb P\!\left(\{k_1^*,\ldots,k_{K^*}^*\}\nsubseteq\widehat{\mathcal S}\right)=o(1)
\end{equation*}
also show that $\mathbb E D_T=o(K^*)$ is equivalent to $(K^*)^{-1}\sum_{j=1}^{K^*}q_{j,T}^{+}=o(1)$. The requirement $\mathbb P\{|\mathcal J_T|\ge dK^*\}\to1$ in \textup{(B2)} is verified separately below.
\end{proposition}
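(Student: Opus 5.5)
The argument splits into a deterministic event decomposition, a reduction of the odd-score failure to one observed lower tail plus a union of proposal upper tails, and then substitution of the local lemmas family by family.

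\textbf{Step 1: event decomposition.} On a reported target,
\[
\mathcal G_{j,T}=\mathcal G_{j,T}^{\mathcal O}\cap\bigl(\{|\mathcal A_j|\ge1\}\cup\{j\in\mathcal J_T\}\bigr),
\]
so
\[
\mathcal G_{j,T}^c=(\mathcal G_{j,T}^{\mathcal O})^c\ \cup\ \bigl(\mathcal G_{j,T}^{\mathcal O}\cap\{|\mathcal A_j|=0,\ j\notin\mathcal J_T\}\bigr).
\]
When $\mathcal A_j=\varnothing$, the event $\mathcal G_{j,T}^{\mathcal O}$ reduces to $\{V_{k_j^*}^{\mathcal O}>0\}$, because the maximum over an empty set is zero. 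The two pieces are therefore disjoint, and their probabilities are exactly $q_{j,T}^{\mathcal O}$ and $s_{j,T}^{\rm acc}$. This gives the equality in \eqref{eq:odd_scan_reduction_bound}. The bound by one is trivial. The identity $\mathbb E D_T=\sum_j(q_{j,T}^{\mathcal O}+s_{j,T}^{\rm acc})$ follows by linearity from the definition of $D_T$.

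\textbf{Step 2: reduce the odd failure to tails over all proposals.} For a positive-signal target, I claim
\[
\{k_j^*\in\widehat{\mathcal S}\}\cap(\mathcal G_{j,T}^{\mathcal O})^c\subseteq\{S_j^{\mathcal O}(u_j^\circ)\le\mu/2\}\cup\bigcup_{b\le B'}\bigcup_{u\in\mathcal T_j^{*,\mathcal O}}\{S_{j,b}^{\mathcal O}(u)\ge\delta\mu/2\},
\]
with $\mu=\mu_{j,T}^{\mathcal O}$. The reason is as follows. Outside the right-hand side, $V^{\mathcal O}_{k_j^*}\ge S_j^{\mathcal O}(u_j^\circ)>\mu/2$. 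Every proposal score, accepted or not, is below $\delta\mu/2$. Hence every accepted score is below $\delta V^{\mathcal O}_{k_j^*}$.

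The key point is that I bound over \emph{all} $B'$ proposals rather than only the accepted ones. Acceptance depends on the same odd data, so conditioning on acceptance would destroy the tail bounds; the union bound avoids this entirely. The proposal draws are independent of the data. Each $S_{j,b}^{\mathcal O}(u)$ is then the fixed scoring rule applied to a uniform permutation of the pooled odd observations in $I_j^{*,\mathcal O}$. The union bound yields $q_{j,T}^{\mathcal O}\le e_{j,T}$. Targets with nonpositive signal or inadmissible sides are assigned $e_{j,T}=1$.

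\textbf{Step 3: plug in the local lemmas.} The target, grid, oracle split, and the functions $h,\psi$ are all data-independent. By the single-change geometry of Appendix~\ref{sec:appendix-score-details}, the average laws on the two sides at $u_j^\circ$ have discrepancy $\theta^{\cdot,\mathcal O}_{j,T}(u_j^\circ)$. The families are handled as follows.

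\begin{itemize}
\item \emph{Kernel ($U$).} Lemma~\ref{lem:ustat_validation_separation} supplies the observed term $\mathbbm1\{\eta<c_0\}+Ce^{-c\eta^2}$ and the permutation tail $Ce^{-ct^2/M_h}$. The latter is uniform over split sizes, so summing over $B'L_j^{\mathcal O}$ pairs gives \eqref{eq:odd_U_score_tails}.
\item \emph{Bounded linear ($L$).} Lemma~\ref{lem:linear_validation_separation} gives \eqref{eq:odd_L_score_tails} directly.
\item \emph{Sub-Gaussian ($SG$) and finite-moment ($L^p$).} Lemmas~\ref{lem:subgaussian_validation_separation} and~\ref{lem:finite_moment_linear_validation_separation} apply, using the two-mean representation \eqref{eq:two_mean_representation} with $d_i$ the pre-change labels and $\Delta=\Delta_{j,T}^{L,\mathcal O}$. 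Their permutation bounds are taken over both data and permutation, which is exactly what the unconditional probability in $e_{j,T}$ requires. This gives \eqref{eq:odd_SG_score_tails} and \eqref{eq:odd_Lp_score_tails}.
\end{itemize}

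The main care point is making sure that uniformity of each permutation bound over split sizes and over $u$ holds as stated in the lemmas.

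\textbf{Step 4: conclusions.} Averaging or summing $1\wedge(e_{j,T}+s_{j,T}^{\rm acc})$ over targets bounds $\mathbb E D_T$. The stated equivalence then follows from \eqref{eq:power_failure_count_identity}, since under (A1) the term $(K^*)^{-1}\sum_j\mathbb P(k_j^*\notin\widehat{\mathcal S})$ is at most $\mathbb P(\text{some target unreported})=o(1)$.
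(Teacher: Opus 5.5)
Your proposal is correct and follows the paper's proof essentially step for step. It uses the same union bound over all $B'$ proposals and all grid splits (rather than only the accepted ones) to obtain $q_{j,T}^{\mathcal O}\le e_{j,T}$, and the same family-by-family substitution of the local lemmas, with unconditional permutation tails in the $SG$ and $L^p$ cases. It also uses the same disjoint split of $\mathcal G_{j,T}^{c}$ on a reported target, based on $\mathcal G_{j,T}^{\mathcal O}$ reducing to $\{V_{k_j^*}^{\mathcal O}>0\}$ when $\mathcal A_j=\varnothing$; the only difference is that you do this decomposition first, whereas the paper does it after the tail bounds.
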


\begin{proof}[Proof of Proposition~\ref{prop:odd_score_verification}]
Fix an admissible positive-signal target and write
$\mu=\mu_{j,T}^{\mathcal O}$. The target, oracle split, grid, and
scoring function are fixed independently of the odd observations, as
specified above. We first prove the event reduction
\eqref{eq:odd_scan_reduction_bound}, then derive its four explicit tail
bounds, and finally account for the case of no accepted permutations.
On a reported target, suppose that
$S_j^{\mathcal O}(u_j^\circ)>\mu/2$ and every proposed scan score is
less than $\delta\mu/2$. The observed maximum is at least its oracle value.
Thus $V_{k_j^*}^{\mathcal O}>0$ and, for every accepted proposal, \(V_{k_j^*,b}^{\mathcal O}<\delta\mu/2<\delta V_{k_j^*}^{\mathcal O}\).
This event is contained in $\mathcal G_{j,T}^{\mathcal O}$, including
when $\mathcal A_j=\varnothing$. By the union bound,
\begin{align}
 q_{j,T}^{\mathcal O}
 &\le\mathbb P\{S_j^{\mathcal O}(u_j^\circ)\le\mu/2\}
    +\sum_{b=1}^{B'}\mathbb P\{V_{k_j^*,b}^{\mathcal O}\ge\delta\mu/2\}
       \nonumber\\
 &\le\mathbb P\{S_j^{\mathcal O}(u_j^\circ)\le\mu/2\}
    +\sum_{b=1}^{B'}\sum_{u\in\mathcal T_j^{*,\mathcal O}}
          \mathbb P\{S_{j,b}^{\mathcal O}(u)\ge\delta\mu/2\}
 =e_{j,T}.
 \label{eq:odd_oracle_scan_bound}
\end{align}
The second inequality uses the fixed-grid scan definition. Uniform bounds
on the last probabilities give the factor $B'L_j^{\mathcal O}$ below.
In particular, an accepted-score failure has been bounded by a
proposed-score failure, without treating accepted scores as unconditional
uniform-permutation scores.

For the kernel score, Lemma~\ref{lem:ustat_validation_separation}
controls the observed probability at $u_j^\circ$. At any grid split, its
conditional permutation bound gives
\(\mathbb P_\pi\{s_{j,T}^{\mathcal O}(u)\sqrt{|D_h^\pi|}\ge t
\mid Z_1,\ldots,Z_n\}\le Ce^{-ct^2/M_h}\), where $D_h^\pi$ is the
statistic evaluated at that split after permutation. This bound holds
for every pooled sample and is uniform over all admissible splits. Taking expectations
and substituting in \eqref{eq:odd_oracle_scan_bound} yields
\eqref{eq:odd_U_score_tails}.

For bounded linear contrasts, Lemma~\ref{lem:linear_validation_separation}
gives the observed probability at the oracle split as at most
$2\exp\{-\mu^2/(8M_\psi^2)\}$. At any grid split, its conditional
permutation bound is $2\exp\{-t^2/(4M_\psi^2)\}$ for every pooled
sample. Taking expectations preserves this bound. There are $B'L_j^{\mathcal O}$
permutation-and-split pairs, so \eqref{eq:odd_oracle_scan_bound} gives
\eqref{eq:odd_L_score_tails}.

For the sub-Gaussian family, the two-mean representation
\eqref{eq:two_mean_representation} holds across the entire target, with
the same mean jump $\Delta_{j,T}^{L,\mathcal O}$ at every grid split.
The coefficient vector changes with the split, but its squared norm is
always one. Lemma~\ref{lem:subgaussian_validation_separation} therefore
gives, uniformly over $u\in\mathcal T_j^{*,\mathcal O}$, \(\mathbb P\{S_{j,b}^{\mathcal O}(u)\ge t\} \le2e^{-t^2/(16|\Delta_{j,T}^{L,\mathcal O}|^2)} +2e^{-t^2/(8\sigma_\psi^2)}\).
Its observed-score bound at $u_j^\circ$ is
$2e^{-\mu^2/(8\sigma_\psi^2)}$. Adding the one observed term and
the $B'L_j^{\mathcal O}$ permutation terms yields
\eqref{eq:odd_SG_score_tails}. These probabilities average over both
the odd observations and the proposal.

For the finite-moment family, the same uniformity follows from
$\sum_i c_i^2=1$ and $\sum_i|c_i|^p\le1$ at every split.
Lemma~\ref{lem:finite_moment_linear_validation_separation} gives \(\mathbb P\{S_{j,b}^{\mathcal O}(u)\ge t\} \le2e^{-t^2/(16|\Delta_{j,T}^{L,\mathcal O}|^2)} +C_p(\nu_p/t)^p\),
and the observed term is at most $C_p(\nu_p/\mu)^p$.
Substitution in \eqref{eq:odd_oracle_scan_bound} proves
\eqref{eq:odd_Lp_score_tails}, retaining the separate exponential
label term and polynomial noise term.
The four family-specific bounds establish the assertion about
$q_{j,T}^{\mathcal O}$. We now account for targets with $\mathcal A_j=\varnothing$ and $j\notin\mathcal J_T$. On a reported target, a failure of
$\mathcal G_{j,T}$ is either a failure of
$\mathcal G_{j,T}^{\mathcal O}$, or occurs while
$\mathcal G_{j,T}^{\mathcal O}$ holds, $\mathcal A_j=\varnothing$, and
$j\notin\mathcal J_T$. On an empty accepted set,
$\mathcal G_{j,T}^{\mathcal O}$ holds exactly when
$V_{k_j^*}^{\mathcal O}>0$. The two cases are disjoint, which proves \(\mathbb P\{k_j^*\in\widehat{\mathcal S},\mathcal G_{j,T}^c\}=q_{j,T}^{\mathcal O}+s_{j,T}^{\rm acc}\).
Summing over $j$ gives the identity for $\mathbb E D_T$. Truncating
each target probability at one gives \eqref{eq:odd_scan_reduction_bound}.
Averaging or summing proves the remaining statements.
\end{proof}

Since $\delta\in(0,1)$ is fixed, the preceding expressions reduce to
\begin{equation}
\begin{aligned}
U:\quad e_{j,T}
 &\le\mathbbm1\{\eta_{j,T}^{U,\mathcal O}<c_0\}
    +C(1+B'L_j^{\mathcal O})e^{-c(\eta_{j,T}^{U,\mathcal O})^2},\\
L:\quad e_{j,T}
 &\le C(1+B'L_j^{\mathcal O})e^{-c(\eta_{j,T}^{L,\mathcal O})^2},
 \\
 SG:\quad e_{j,T}
 &\le C(1+B'L_j^{\mathcal O})e^{-c(\overline\eta_{j,T}^{SG,\mathcal O})^2},
 \\
 L^p:\quad e_{j,T}
 &\le CB'L_j^{\mathcal O}
       e^{-c\{s_{j,T}^{\mathcal O}(u_j^\circ)\rho_{j,T}^{\mathcal O}\}^2}
       +C_p(1+B'L_j^{\mathcal O})(\eta_{j,T}^{L,p,\mathcal O})^{-p}.
\end{aligned}
 \label{eq:odd_U_existing_order}
\end{equation}
Indeed, substitute $t=\delta\mu_{j,T}^{\mathcal O}/2$ and use
$\mu_{j,T}^{\mathcal O}/|\Delta_{j,T}^{L,\mathcal O}|=
 s_{j,T}^{\mathcal O}(u_j^\circ)\rho_{j,T}^{\mathcal O}$.
For the kernel score, $(\eta_{j,T}^{U,\mathcal O})^2=
s_{j,T}^{\mathcal O}(u_j^\circ)^2\theta_{j,T}^{U,\mathcal O}(u_j^\circ)/M_h$.
The fixed fraction $\delta$ changes only the constants, which in these
four displays may depend on $\delta$ but not on $j,T$.

Let $L_T^{\mathcal O}=\max_jL_j^{\mathcal O}$. Sufficient uniform
conditions for average odd-side score separation are, for the respective families,
\begin{equation}
 \begin{aligned}
 \frac{\min_j(\eta_{j,T}^{U,\mathcal O})^2}{\log(2+B'L_T^{\mathcal O})}
 &\longrightarrow\infty,\\
 \frac{\min_j(\eta_{j,T}^{L,\mathcal O})^2}{\log(2+B'L_T^{\mathcal O})}
 &\longrightarrow\infty,\\
 \frac{\min_j(\overline\eta_{j,T}^{SG,\mathcal O})^2}{\log(2+B'L_T^{\mathcal O})}
 &\longrightarrow\infty.
 \end{aligned}
 \label{eq:odd_uniform_average_signals}
\end{equation}
For the strengthened version,
replace the logarithm by $\log(2+K^*B'L_T^{\mathcal O})$ for each family.
For finite-moment contrasts, one sufficient strengthened condition is
\begin{equation}
 \begin{aligned}
 \frac{\min_j\{s_{j,T}^{\mathcal O}(u_j^\circ)\rho_{j,T}^{\mathcal O}\}^2}
      {\log(2+K^*B'L_T^{\mathcal O})}&\longrightarrow\infty,\\
 \frac{\min_j(\eta_{j,T}^{L,p,\mathcal O})^p}
      {K^*(1+B'L_T^{\mathcal O})}&\longrightarrow\infty.
 \end{aligned}
 \label{eq:odd_moment_simultaneous_signals}
\end{equation}
Remove $K^*$ inside the logarithm and in the denominator for an average
condition. The corresponding averages or sums of
$s_{j,T}^{\rm acc}$ must also vanish to control $\mathbb E D_T$.

To verify these assertions, let $a_T$ be the minimum squared signal
in the relevant numerator of \eqref{eq:odd_uniform_average_signals}. If
$a_T/\log(2+K^*B'L_T^{\mathcal O})\to\infty$, then
$K^*(1+B'L_T^{\mathcal O})e^{-c a_T}\to0$ for every fixed $c>0$.
For the kernel score, $\min_j(\eta_{j,T}^{U,\mathcal O})^2\to\infty$
also removes the indicator in \eqref{eq:odd_U_existing_order}. The
permutation term has the same squared-signal exponent as the observed term. For the finite-moment family, the second
condition in \eqref{eq:odd_moment_simultaneous_signals} controls the summed
polynomial error. Thus simultaneous control of that term has a
multiplicative, rather than merely logarithmic, cost in $K^*$.

For a random interval pool or an independently trained score, average
the conditional bounds over that randomness. Uniform rates on events
whose probabilities tend to one suffice for the averaged, truncated
bound. For the summed bound, include the sum of the exceptional-event
probabilities.

The same quantities can be expressed through jump size and localization.
If the two oracle side sizes are of order $\omega_j$, then \((\overline\eta_{j,T}^{SG,\mathcal O})^2 \asymp \omega_j(\rho_{j,T}^{\mathcal O})^2 \min\{1,|\Delta_{j,T}^{L,\mathcal O}|^2/\sigma_\psi^2\}\).
This follows from $s_{j,T}^{\mathcal O}(u_j^\circ)^2\asymp\omega_j$ and
$|\theta_{j,T}^{L,\mathcal O}(u_j^\circ)|=
 \rho_{j,T}^{\mathcal O}|\Delta_{j,T}^{L,\mathcal O}|$.
If $|u_j^\circ-\tau_j^*|=o(\omega_j)$ and both sides have at least a
constant multiple of $\omega_j$ observations, then
$\rho_{j,T}^{\mathcal O}=1-o(1)$. Minima over targets require uniform
versions of these geometric statements.
For the kernel score, writing $\zeta_-,\zeta_+$ for the two contamination
fractions gives \(\Theta_h(\widetilde F_{j,-}^{\mathcal O}(u_j^\circ), \widetilde F_{j,+}^{\mathcal O}(u_j^\circ)) =(1-\zeta_--\zeta_+)^2\Theta_h(F_{j,-},F_{j,+})\),
because the difference of the average laws is
$(1-\zeta_--\zeta_+)(F_{j,-}-F_{j,+})$ and $\Theta_h$ is quadratic
in this difference.



\subsection{The number of accepted permutations}
\label{app:power:acceptance}

The preceding subsection controls the odd-score part of the joint
failure probability. We now bound the term with $\mathcal A_j=\varnothing$
and $j\notin\mathcal J_T$, and verify
$\mathbb P\{|\mathcal J_T|\ge dK^*\}\to1$.

On a reported target, let
$r_{j,T}^{\rm acc}=|\widetilde\Pi_{k_j^*}|/|\Pi_{k_j^*}|$,
and set $r_{j,T}^{\rm acc}=0$ for unreported targets. Conditional on
$\mathcal F^{\mathcal O}$, the odd-side observations, acceptance rule, and detector
randomization are fixed. The $B'$ proposals are independent uniform
draws from $\Pi_{k_j^*}$, so each is accepted independently with
probability $r_{j,T}^{\rm acc}$. Thus, on
$\{k_j^*\in\widehat{\mathcal S}\}$, the conditional law of
$|\mathcal A_j|$ is $\operatorname{Bin}(B',r_{j,T}^{\rm acc})$, and
\begin{equation}
 \mathbb P\bigl(|\mathcal A_j|=0\mid\mathcal F^{\mathcal O}\bigr)
 =(1-r_{j,T}^{\rm acc})^{B'}.
 \label{eq:odd_zero_acceptance_probability}
\end{equation}

For the empty-acceptance term $s_{j,T}^{\rm acc}$, define \(e_{j,T}^{\mathcal E} =\mathbb P\left\{V_{k_j^*}^{\mathcal E} \le\max_{1\le b\le B}V_{k_j^*,b}^{\mathcal E} \,\middle|\,\mathcal F^{\mathcal O}\right\}\).
Let $b_{j,T}^{\mathcal E}$ be the conditional bound inside the outer
expectation in \eqref{eq:U_basic_signal_condition}--\eqref{eq:Lp_basic_signal_condition},
truncated at one, for a fixed split. For a scan rule, use the corresponding
scan bound in Appendix~\ref{sec:appendix-score-details}.
The corresponding conditional argument gives
$e_{j,T}^{\mathcal E}\le b_{j,T}^{\mathcal E}$ almost surely, with the
bound set to one when its signal or side-size conditions fail.
Conditional on $\mathcal F^{\mathcal O}$, the event of no accepted
proposal is independent of the even-side observations and permutation
draws. Since the reported-target and positive odd-score indicators are
also measurable under this conditioning,
\eqref{eq:odd_zero_acceptance_probability} gives
\begin{equation}
 s_{j,T}^{\rm acc}
 =\mathbb E\left[
 \mathbbm1\{k_j^*\in\widehat{\mathcal S},V_{k_j^*}^{\mathcal O}>0\}
 (1-r_{j,T}^{\rm acc})^{B'}e_{j,T}^{\mathcal E}\right]
 \le\mathbb E\left[
 \mathbbm1\{k_j^*\in\widehat{\mathcal S}\}
 (1-r_{j,T}^{\rm acc})^{B'}b_{j,T}^{\mathcal E}\right].
 \label{eq:odd_even_empty_acceptance_bound}
\end{equation}
Together with \eqref{eq:odd_scan_reduction_bound}, this proves
\begin{equation*}
\mathbb P\{k_j^*\in\widehat{\mathcal S},\mathcal G_{j,T}^{c}\}\le
 1\wedge\left\{e_{j,T}+
 \mathbb E\left[\mathbbm1\{k_j^*\in\widehat{\mathcal S}\}
       (1-r_{j,T}^{\rm acc})^{B'}b_{j,T}^{\mathcal E}\right]\right\}.
\end{equation*}
If the averages or sums of these bounds tend to zero, then
$\mathbb E D_T=o(K^*)$ or $\mathbb E D_T=o(1)$, respectively.
The expectation in \eqref{eq:odd_even_empty_acceptance_bound}
allows both factors to depend on the odd-side observations.

To verify $\mathbb P(|\mathcal J_T|\ge dK^*)\to1$, suppose there is a set of target labels
$\mathcal I_T\subseteq[K^*]$, fixed conditional on the interval pool,
and a fixed $\varepsilon\in(0,1-d)$ such that \(|\mathcal I_T|\ge(d+\varepsilon)K^*,1/{K^*}\mathbb E\left[\sum_{j\in\mathcal I_T} b_{j,T}^{\mathcal E}\right]\longrightarrow0\).
On $\mathcal M_T$, all these targets are reported. If
$|\mathcal J_T|<dK^*$, more than $\varepsilon K^*$ of the labels in
$\mathcal I_T$ must fail even-side separation. Markov's inequality gives
\begin{equation}
 \mathbb P(|\mathcal J_T|<dK^*)
 \le\mathbb P(\mathcal M_T^c)
       +\frac{\mathbb E[\sum_{j\in\mathcal I_T}b_{j,T}^{\mathcal E}]}
                    {\varepsilon K^*}
 \longrightarrow0.
 \label{eq:score_subset_support_bound}
\end{equation}
For a deterministic set, the expectation in this bound is simply
$\sum_{j\in\mathcal I_T}\mathbb E b_{j,T}^{\mathcal E}$.
Under \textup{(A1)}, $\mathbb E D_T=o(K^*)$ and \eqref{eq:score_subset_support_bound} imply the joint separation requirement in Assumption~\ref{ass:odd_nondeflation} by \eqref{eq:power_failure_count_identity}.

If the detector statistic and the odd-side score coincide and acceptance requires $V_{k_j^*,b}^{\mathcal O}>\lambda_T$, with
$\lambda_T\ge0$, then the event used in
Proposition~\ref{prop:odd_score_verification} together with $|\mathcal A_j|\ge1$ implies
$\lambda_T<\delta\mu_{j,T}^{\mathcal O}/2$.
For $|\mathcal A_j|\ge1$ and positive observed transformed score, \(W_{k_j^*}^{\mathcal O} \le\frac{|\mathcal A_j|+1} {1+|\mathcal A_j|g(\lambda_T)/g(V_{k_j^*}^{\mathcal O})}\).
The bound follows from the monotonicity of $g$ on $[0,\infty)$, which is implied by \textup{(B1)}.

\section{Numerical Details}

We give the data-generating mechanisms, scores, and evaluation conventions
used in the numerical studies. The nominal level is $\alpha=0.1$, with
$B=B'=499$ permutations and $g(v)=v^{12}$ for the \texttt{POIS+} weights.
Detector tuning parameters and numerical implementation details are
specified in the code.

\subsection{Distributional Simulation Implementation}
\label{app:midpoint_detector_details}

For Section~\ref{subsec:sim_midpoint_detectors}, the sequence starts in
state $F_A$ and alternates between $F_A$ and $F_B$ at 20 fixed, unequally
spaced change-points. The locations are specified in the code.
All four detectors use the same data in each repetition, with settings
held fixed across the four scenarios.

Write $d_O^2$ for the median pairwise squared distance in the odd sample.
For \texttt{POIS} and \texttt{POIS+}, we use the kernel
\[
k_O(x,y)=\frac13\sum_{h\in\{.25,.5,1\}}
\exp\!\left\{-\frac{(x-y)^2}{2h^2d_O^2}\right\}.
\]
For an interval with observations reindexed as $X_1,\ldots,X_m$, the score is
\[
v=\max_{5\le q\le m-5}
\sqrt{\frac{q(m-q)}{m}}\,
\widehat{\operatorname{MMD}}_{k_O}(X_{1:q},X_{q+1:m}),
\]
where $\widehat{\operatorname{MMD}}$ is the square root of the nonnegative
biased estimator of squared MMD. The bandwidths remain fixed, and each
permutation recomputes the scan maximum. Each odd-side proposal reruns the
detector and reconstructs its midpoint intervals. It is accepted when the
target interval endpoints and the number of reported intervals are unchanged.

For \texttt{MOPS}, let $\widehat\mu$ and $\widehat v$ be the odd-sample
mean and sample variance, and use the score features
\[
s(x)=\left(\frac{x-\widehat\mu}{\widehat v},\,
-\frac12+\frac{(x-\widehat\mu)^2}{2\widehat v}\right)^{\mathsf T}.
\]
Let $\widehat\Omega$ be the inverse of a regularized covariance estimate
formed from adjacent odd-side score differences. Write $n_j,n_{j+1}$
for the adjacent odd-side segment lengths and $\overline s_L^{\mathcal O}$,
$\overline s_R^{\mathcal O}$, $\overline s_L^{\mathcal E}$, and
$\overline s_R^{\mathcal E}$ for the mean score features on either side
of the detected split within its midpoint interval. The mirror statistic is
\[
W_j^{\mathrm{loc}}=\frac{n_jn_{j+1}}{n_j+n_{j+1}}
(\overline s_L^{\mathcal O}-\overline s_R^{\mathcal O})^{\mathsf T}
\widehat\Omega(\overline s_L^{\mathcal E}-\overline s_R^{\mathcal E}).
\]
We select intervals with $W_j^{\mathrm{loc}}\ge t$, where $t$ is the
smallest positive observed magnitude satisfying
$(1+\#\{j:W_j^{\mathrm{loc}}\le-t\})/(1\vee\#\{j:W_j^{\mathrm{loc}}\ge t\})\le\alpha$.
The selection is empty if no such threshold exists.
All methods are evaluated on the midpoint intervals from
Example~\ref{exam:midpoint_pool}.

\subsection{High-Dimensional Simulation Implementation}
\label{app:hd_implementation}

For Section~\ref{subsec:sim_hd_mean_changes}, the initial mean is zero.
At each change-point, ten coordinates are drawn uniformly without
replacement, independently across changes. Each selected coordinate
receives an independent jump of $2\sqrt{3/10}$ or $-2\sqrt{3/10}$ with
equal probability, and the jumps accumulate across segments. The noise
coordinates are mutually independent. Their standardized distributions
are $N(0,1)$, $t_4/\sqrt2$, $(\chi_3^2-3)/\sqrt6$, and
$(e^{Z/2}-e^{1/8})/\sqrt{(e^{1/4}-1)e^{1/4}}$, where $Z\sim N(0,1)$.

The seeded pool is constructed on the paired index set
$\{1,\ldots,T/2\}$. Each interval $[a,b]$ maps to $[2a-1,2b]$ on the
original sequence. Detector coordinates are divided by scaled marginal
median absolute deviations. Let $\widehat u_k$ be the sparse direction
estimated on an odd-side interval, converted to the original coordinate
scale and normalized to have unit Euclidean norm. At the detected split,
let $L_k,R_k$ be the even-side blocks, with sizes $n_L,n_R$. The
even-side score is
\[
v_k(X_{I_k^{\mathcal E}})
=\sqrt{\frac{n_Ln_R}{n_L+n_R}}
\left|\operatorname{tm}_{0.2}\{\widehat u_k^{\mathsf T}X_t:t\in L_k\}
-\operatorname{tm}_{0.2}\{\widehat u_k^{\mathsf T}X_t:t\in R_k\}\right|,
\]
where $\operatorname{tm}_{0.2}$ removes the lowest and highest
$\lfloor0.2n\rfloor$ values from a block of size $n$ before averaging.
The direction and split remain fixed in even-side permutations.
For odd-side weights, the score is the maximum Euclidean norm of the
CUSUM vector over admissible splits, computed from the scaled odd
observations. The coordinate scales remain fixed under local permutations.

\texttt{MOPS} uses the same odd-side detector intervals.
\texttt{ART} uses the full-sample detector and calibrates over the
complete seeded pool. Interval lengths count original observations,
including both endpoints. Average length is computed within each nonempty
repetition and then averaged over those repetitions.
Runtime includes detection and subsequent inference and is averaged
over all repetitions.

\subsection{Profile Simulation Settings}
\label{app:profile_generation}

For Section~\ref{subsec:sim_profile_pois}, define
$\ell(x)=10/\{1+\exp(7.5-x)\}$ for $x\in[0,15]$, and
$q(u)=\sqrt{180}\{(u-1/2)^2-1/12\}$ and $b(u)=\sqrt{12}(u-1/2)$
for $u\in[0,1]$. The functions $q$ and $b$ have zero mean and unit
squared integral. Table~\ref{tab:profile_generation} specifies the
mean and random-effect functions for state $s\in\{0,1\}$.

\begin{table}[htbp]
\centering
\begin{minipage}{\linewidth}
\setstretch{1}
\caption{Mean and random-effect functions for the four profile scenarios.}
\label{tab:profile_generation}
\small
\setlength{\tabcolsep}{5pt}
\renewcommand{\arraystretch}{1.3}
\begin{tabular}{@{}lll@{}}
\toprule
Scenario & $m_s(x)$ & $r(x)$ \\
\midrule
\texttt{Mean} & $(1-0.7s)\ell(x)$ & $0$ \\
\texttt{Anomaly} & $\ell(x)+13.5s\exp\{-(x-4.5)^2/2\}$ & $0$ \\
\texttt{Adv-PC} & $\ell(x)+6.5s\cos(2\pi x/15)$ & $5.5\sin(2\pi x/15)$ \\
\texttt{Curvature} & $\ell(x)+5.5s q(x/15)$ & $5b(x/15)$ \\
\bottomrule
\end{tabular}
\end{minipage}
\end{table}

With $\phi$ denoting the standard normal density, the pointwise error
variance is
\[
\sigma^2(x)=
\begin{cases}
0.5+\dfrac4{1.2}\phi\!\left(\dfrac{x-7.5}{1.2}\right),
 & \text{in \texttt{Mean}},\\[4pt]
0.5, & \text{in the other three scenarios}.
\end{cases}
\]
Each profile has 50 independent design points from $U(0,15)$.
The $Z_i$'s and pointwise errors are mutually independent and independent
of the design points and change-point locations. In \texttt{Adv-PC},
the cosine mean shift is orthogonal in $L_2[0,15]$ to the random sine component.
Draw ten distinct integers uniformly from
$\{1,\ldots,T-75(K^*+1)\}$, order them as $u_1<\cdots<u_{10}$, and set
$\tau_j^*=u_j+75j$. The state starts at zero and switches at each
change-point. Conditional on these locations, profiles are i.i.d.
within each segment, and permutations act on whole profiles.

Each profile is smoothed separately by local linear regression.
The fitted mean and residual-variance functions are evaluated at ten
equally spaced points in $[0,15]$, giving a 20-dimensional vector.
The mean bandwidth uses a direct plug-in selector and the variance
bandwidth a normal-reference rule. The summary coordinates are centered
and scaled using the odd sample. With half-window $h=6$ on each split,
let $\bar z_{L,q},\bar z_{R,q}$ denote coordinate means in the two
even-side blocks, and let $\widehat v_{LR,q}$ be the average of their
sample variances with a positive numerical floor specified in the code.
The even-side score is
\[
v_k=\sum_{q=1}^{20}
\frac{(\bar z_{L,q}-\bar z_{R,q})^2}{\widehat v_{LR,q}}.
\]
The odd-side scan and weighting score instead use coordinate variances
over the full odd sequence. The reporting radius is $2h=12$ profiles
on the original sequence, so an interior interval contains 25 profiles.
\texttt{SPD-OSE} uses odd-side screening and even-side mirror statistics.
\texttt{ART} uses all profile summaries and a fixed grid with half-window
12, with 499 permutations. All methods use the same reporting radius.

\subsection{Real Data Implementation Details}
\label{app:realdata_implementation}

For the S\&P 500 example in Section~\ref{sec:real1_sp500}, stocks with
incomplete or nonpositive adjusted prices are removed. Returns are
winsorized at their stock-specific 1\% and 99\% quantiles and centered
stock by stock. The standardized logarithm of their daily cross-sectional
mean square gives the scalar signal used by all procedures.
The odd-side NOT detector uses the maximum absolute CUSUM score over
a seeded interval pool, with at least five observations per side.
An odd interval $[a,b]$ maps to $[2a-1,\min\{2b,T\}]$.
\texttt{NSP} uses the full sequence at level 0.1 with 499 null simulations.
In Figure~\ref{fig:real1_sp500}, dots mark detector splits for the NOT
candidates, \texttt{POIS}, and \texttt{POIS+}, and interval midpoints
for \texttt{NSP}.

For HAPT in Section~\ref{sec:real_har}, odd/even splitting acts on
successive nonoverlapping 0.5-second feature blocks.
Within each raw block, channel $q$ is filtered as
$x_{t,q}-\phi_qx_{t-1,q}$, with coefficients estimated from the first
recordings of subjects 1--6. Each block is represented by 73 features
describing posture, movement, spectrum, and temporal variation.
Features are centered by their medians, divided by scaled median absolute
deviations, and transformed by $\operatorname{asinh}$. They are then
rescaled by feature group. Scaling uses odd blocks for split-based
procedures and all blocks for full-data comparisons.
For the resulting feature vectors $z_t$, the score on left and right
windows containing $n_L$ and $n_R$ vectors, respectively, is
\[
v=\sqrt{\frac{n_Ln_R}{n_L+n_R}}\,\|\bar z_L-\bar z_R\|_2.
\]
The odd-side scan has half-window $h=5$. Its interval $[a,b]$ maps to
$[2a-1,2b]$, giving ten-second interior intervals on the original
feature sequence. The full detector uses the same physical window grid.
\texttt{MOPS} forms mirror statistics within the odd-side candidates,
with coordinate selection based on odd-side contrasts.
The \texttt{ART} adaptation uses first-principal-component rank scores
and calibrates over the full fixed window grid at level 0.1 with
499 permutations.

For each recording and seed, let $N$ be the number of selected intervals,
$J$ the number of reference change ranges, and $M$ the maximum number of
one-to-one matches, with a match allowed when the interval and range
intersect. We compute
\[
\mathrm{Precision}=\frac MN,\qquad
\mathrm{Recall}_{\mathrm{match}}=\frac MJ,\qquad
F_1=\frac{2M}{N+J}.
\]
Coverage is the proportion of reference ranges intersected by at least
one selected interval. For empty selections, coverage, matching recall,
and $F_1$ are zero, and precision is omitted from its average.
Metrics are averaged across recordings within each seed and then across
three seeds. For \texttt{MOPS}, seven recordings have empty selections
under each seed.

\begin{figure}[h]
\centering
\includegraphics[width=\linewidth]{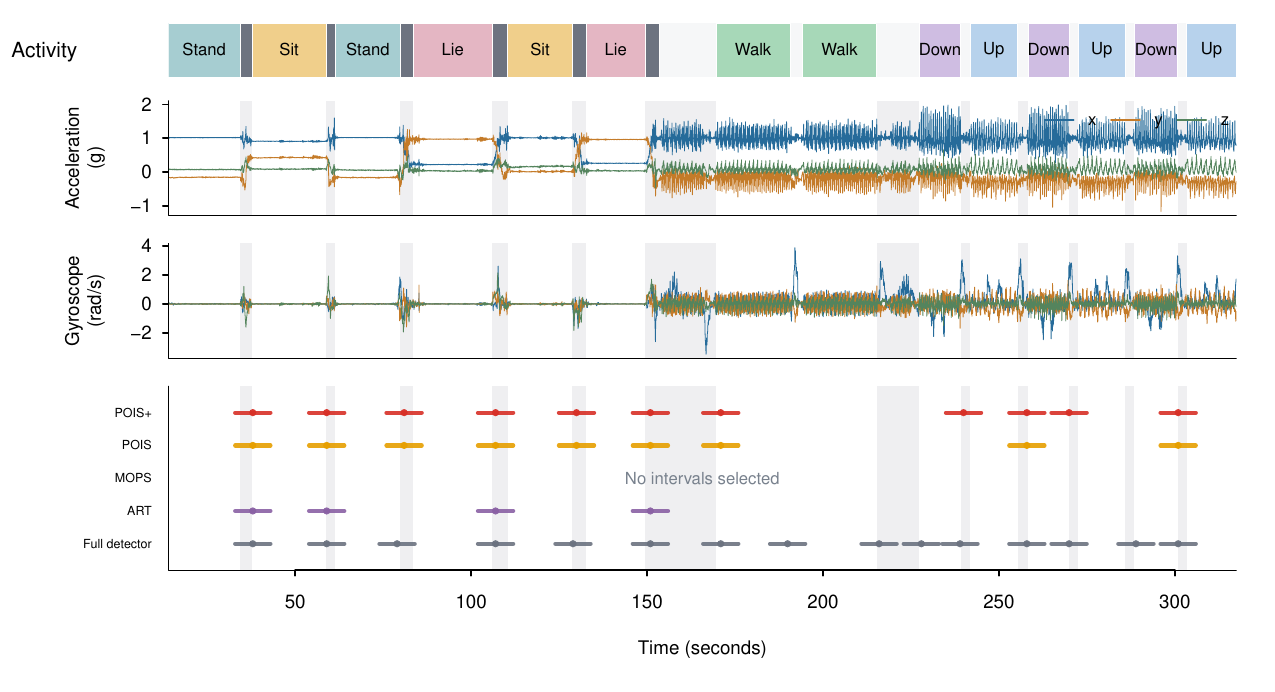}
\caption{HAPT subject 11: activity labels, raw signals, and reported intervals
under the same settings as the main HAPT figure, using the first permutation
run. Gray bands mark reference changes. Dots mark fixed test splits.}
\label{fig:har_additional}
\end{figure}

Figure~\ref{fig:har_additional} gives an additional illustration for
subject 11. \texttt{POIS+} selects 11 intervals and matches ten of the
12 reference changes, compared with eight matches for \texttt{POIS}
and four for \texttt{ART}. \texttt{MOPS} selects no intervals.

\putbib[references]
\end{bibunit}

\end{document}